\documentclass[a4paper,12pt,reqno]{amsart}

\usepackage{hyperref}

\usepackage{amsmath,amssymb,mathrsfs,mathtools}
\usepackage{graphicx}
\usepackage{multicol}
\usepackage{enumerate}
\usepackage{dsfont}

%


\setlength{\textwidth}{18cm}
\setlength{\textheight}{23cm}
\setlength{\oddsidemargin}{0cm}
\setlength{\evensidemargin}{0cm}
\setlength{\marginparwidth}{2cm}
\hoffset=-1truecm
\voffset=-1truecm
\footskip = 30pt
\marginparsep=-0.1cm

\newtheorem{theorem}{Theorem}[section]
\newtheorem{lemma}[theorem]{Lemma}
\newtheorem{proposition}[theorem]{Proposition}

\newtheorem{remark}[theorem]{Remark}
\newtheorem{definition}[theorem]{Definition}

\numberwithin{equation}{section}


\newcommand{\CO}{\mathbb{C}}
\newcommand{\RE}{\mathbb{R}}
\newcommand{\Gar}{\Gamma_{\text{reg}}}

\newcommand{\diag}{\operatorname{diag}}
\newcommand{\off}{\operatorname{off}}

\renewcommand{\Im}{\operatorname{Im}}
\renewcommand{\Re}{\operatorname{Re}}

\newcommand{\ve}{\varepsilon}
\newcommand{\ID}{\mathbb{I}}

\newcommand{\x}{{\bf x}}
\newcommand{\y}{{\bf y}}

\renewcommand{\r}{{\bf r}}
\newcommand{\p}{{\bf p}}
\renewcommand{\k}{{\bf k}}
\newcommand{\q}{{\bf q}}

\newcommand{\D}{\mathcal D}
\newcommand{\F}{\mathcal F}
\newcommand{\B}{\mathcal B}

\newcommand{\NA}{\mathbb N}
\newcommand{\de}{\delta}
\newcommand{\ga}{\gamma}
\newcommand{\la}{\lambda}

\newcommand{\Ga}{\Gamma}


\newcommand{\pv}{\mathbf{p}}
\newcommand{\yv}{\mathbf{y}}

\newcommand{\beq}{\begin{equation}}
\newcommand{\eeq}{\end{equation}}

\newcommand{\be}{\begin{equation}}
\newcommand{\ee}{\end{equation}}







\renewcommand{\Im}{\operatorname{Im}\,}
\renewcommand{\Re}{\operatorname{Re}\,}

\newcommand{\n}{\noindent}
\newcommand{\lf}{\left}
\newcommand{\ri}{\right}
\newcommand{\vs}{\vspace{0.5cm}}
\newcommand{\f}{\frac}
\renewcommand{\leq}{\leqslant}
\renewcommand{\geq}{\geqslant}

\newcommand{\Bou}{\mathcal{B}}

\newcommand{\reg}{\operatorname{reg}}
\newcommand{\sym}{\operatorname{sym}}
\newcommand{\Fou}{\mathcal{F}}

\newcommand{\lambdamin}{{\lambda_{\operatorname{max}}}}

\newcommand{\genf}{f}

\title[Three-Body Hamiltonian with Regularized Zero-Range Interactions in Dimension Three \hspace{5pt}]{Three-Body Hamiltonian with Regularized Zero-Range Interactions in Dimension Three}

\author[G. Basti]{Giulia Basti}
\address[G. Basti]{Gran Sasso Science Institute, 
Viale Francesco Crispi 7,  67100 L'Aquila, Italy}
\email{giulia.basti@gssi.it} 

\author[C. Cacciapuoti]{Claudio Cacciapuoti}
\address[C. Cacciapuoti]{DiSAT, Sezione di Matematica, Universit\`a dell'Insubria, Via Valleggio 11, 22100 Como, Italy}
\email{claudio.cacciapuoti@uninsubria.it	}

\author[D. Finco]{Domenico Finco}
\address[D. Finco]{Facolt\`a di Ingegneria, Universit\`a Telematica
Internazionale Uninettuno,  Corso Vittorio Emanuele II 39, 00186 Roma, Italy}
\email{d.finco@uninettunouniversity.net}

\author[A. Teta]{Alessandro Teta}
\address[A. Teta]{Dipartimento di Matematica G. Castelnuovo, Sapienza Universit\`a di Roma,  Piazzale  Aldo Moro, 5, 00185 Roma, Italy}
\email{teta@mat.uniroma1.it}

\date{}

\thanks{
The research that led to the present paper was partially supported by  
a grant of the group GNFM of INdAM }


\begin{document}

\begin{abstract}
We study the Hamiltonian for a system of three identical bosons in dimension three interacting via zero-range forces. In order to avoid the fall to the center phenomenon emerging in the standard Ter-Martirosyan--Skornyakov (TMS) Hamiltonian, known as Thomas effect,  we develop in detail a suggestion given in a seminal  paper of Minlos and Faddeev in 1962 and we construct a regularized version of the TMS Hamiltonian which is self-adjoint and bounded from below. The regularization is given by an effective three-body force, acting only at short distance, that reduces to zero the strength of the interactions when the positions of the three particles coincide. The analysis is based on the construction of a suitable quadratic form which is shown to be closed and bounded from below. Then, domain and action of the corresponding  Hamiltonian are completely  characterized and a regularity result for the elements of the domain is given. Furthermore, we show that the Hamiltonian is the norm resolvent limit of Hamiltonians with rescaled non local interactions, also called separable potentials, with a suitably renormalized coupling constant. 
\end{abstract}

\maketitle

\begin{footnotesize}
 
 \emph{MSC 2020: 
81Q10; 
81Q15; 
70F07; 
46N50. 
}  
 \end{footnotesize}

\vspace{1cm}

\section{Introduction}\label{secIntro}

In a system of nonrelativistic quantum particles at low temperature, the thermal wavelength is typically much larger than the range of the two-body interactions and therefore  the details of the interactions are irrelevant.  In these conditions, the effective behavior of the system is well described by a Hamiltonian with zero-range forces, where the only physical parameter characterizing the interaction is the scattering length. 

\n
The mathematical construction of such Hamiltonians as self-adjoint and, possibly, lower-bounded operators is straightforward in dimension one since standard perturbation theory of quadratic forms can be used. Moreover, the Hamiltonian can be obtained as the resolvent limit of approximating Hamiltonians with rescaled two-body smooth potentials (see \cite{BCFT} for the case of three particles and  \cite{GHL} for $n$ bosons). On the contrary, in dimensions two and three the interaction is too singular and more refined techniques are required for the construction. The two-dimensional case is well understood (\cite{DFT, DR}, see also \cite{GL} for applications to the Fermi polaron model), and it has been recently shown \cite{GH} that the Hamiltonian is the norm resolvent limit of Hamiltonians with rescaled smooth potentials and with a suitably renormalized coupling constant. 

\n
In dimension three, the problem is more subtle due to the fact that a   natural construction in the case of $n \geq 3$ particles, obtained following the  analogy with the one particle case, leads to the so-called TMS Hamiltonian \cite{tms} which is  symmetric but not self-adjoint. Furthermore, all its self-adjoint extensions are unbounded from below.  
Such instability property, known as Thomas effect, can be seen as  a fall to the center phenomenon, and it  is due to the fact that the interaction becomes too strong and attractive when (at least) three particles are very close to each other. This phenomenon does not occur in dimension two because the singularity of the wave function at the coincidence hyperplane is  a mild logarithmic one.

\n
The Thomas effect was first noted by Danilov \cite{dan} and then rigorously analyzed by Minlos and Faddeev \cite{mf1,mf2}, and it makes the Hamiltonian unsatisfactory from the physical point of view (for some recent mathematical contributions see, e.g., \cite{BFT,BT,FT,miche} with references therein). For other approaches to the construction of many-body contact interactions in $\RE^3$, we refer to  \cite{AHKS,  MS2, TH}.

\n
We note that a different situation occurs in the case (which is not considered here) of  a system made of two species of fermions interacting via zero-range forces, where it happens that for certain regime of the mass ratio the TMS Hamiltonian is in fact self-adjoint and bounded from below (for  mathematical results in this direction see, e.g.,  \cite{miche1,CDFMT,CDFMT1,FT2,miche2,m1,m2,m4,MS,MS1}). 
For a general mathematical approach to the construction of singularly perturbed self-adjoint operators in a Hilbert space, we refer to \cite{P1,P2}.

\n
Inspired by a suggestion contained in \cite{mf1}, in this paper we propose a regularized version of the TMS Hamiltonian for  a system of  three bosons and we prove that it is self-adjoint and bounded from below. Furthermore,  we show that the Hamiltonian is the norm  resolvent limit of approximating Hamiltonians with rescaled non-local interactions, also called separable potentials, and with a suitably renormalized coupling constant. 

\n
We stress that a more interesting problem from the physical point of view would be the approximation in norm resolvent sense by a sequence of Hamiltonians with (local) rescaled potentials as in \cite{GH} for the two dimensional case and \cite{BCFT, GHL} for the one-dimensional case. Such a result is more difficult to prove, and we plan to approach it in a forthcoming work. 

\n 
We also believe that our approach and results can be generalized to the case of three different particles. For the case of a system made of $N$ bosons in interaction with another particle, see \cite{ferretti-teta}. 

\n
In the rest of this section we introduce our Hamiltonian at heuristic level and discuss some of its properties. 
Let us consider  a system of three identical bosons with masses $1/2$ in the center of mass reference frame and let $\x_1, \x_2$ and $ \x_3 = -\x_1 - \x_2$ be the Cartesian coordinates of the particles. Let us introduce the Jacobi coordinates  $ \r_{23} \equiv \x, \; \r_1 \equiv \y$, namely

\[
\x  = \x_2 - \x_3, \;\;\;\;\;\; \y =\f{1}{2} (\x_2 + \x_3) - \x_1.
\]

\n
The other two pairs of  Jacobi coordinates in position space are $\, \r_{31}= -\f{1}{2}\x +\y,  \;\r_2 = - \f{3}{4} \x -\f{1}{2} \y\,$ and $\, \r_{12}= - \f{1}{2} \x -\y , \; \r_3 =   \f{3}{4} \x - \f{1}{2} \y$.  
Due to the symmetry constraint, the Hilbert space of states is 

\beq\label{hspa}
L^2_{\text{sym}}(\RE^6) = \Big\{ \psi \in L^2(\RE^6)\,  \;\; \text{s.t.}\;\;  \psi(\x,\y)= \psi(-\x,\y)= 
\psi\Big( \f{1}{2} \x + \y,  \f{3}{4} \x-\f{1}{2} \y   \Big) \Big\} .
\eeq

\n
Notice that the symmetry conditions in \eqref{hspa} correspond to the exchange of particles $2,3$ and $3,1$ and they also imply the symmetry under the exchange of particles $1,2$, i.e., $ \psi(\x,\y)= \psi\big( \f{1}{2} \x - \y, - \f{3}{4} \x -\f{1}{2} \y   \big)$. 

\n
The formal Hamiltonian describing the three boson system in the Jacobi coordinates reads

\beq\label{Hfor}
 H_0 + \mu \delta (\x) +\mu \delta( \y - \x/2) + \mu \delta( \y + \x/2)
\eeq

\n
where $ \mu \in \RE$ is a coupling constant and $H_0$ is the free Hamiltonian  

\beq\label{fham}
H_0 = - \Delta_{\x} - \f{3}{4} \Delta_{\y}. 
\eeq

\n
Our aim is to construct a rigorous version of \eqref{Hfor} as a self-adjoint, and possibly bounded from below, operator in $L^2_{\text{sym}}(\RE^6)$. In other words, we want to define a self-adjoint  perturbation of the free Hamiltonian \eqref{fham} supported by the coincidence hyperplanes

\[
\pi_{23} \!= \!\{\x_2 = \x_3\}\!=\! \{\x=0\}, \;\;\;\;
\pi_{31} \!= \!\{\x_3 = \x_1\}\!=\! \{   \y= \x/2 \}, \;\;\;\;
\pi_{12} \!= \!\{\x_1 = \x_2\}\!=\! \{\y = - \x/2 \}.
\]

\n
Following the analogy with the one particle case \cite{Albeverio}, a natural attempt is to define the TMS operator acting  as the free Hamiltonian outside the hyperplanes  and characterized by a (singular) boundary condition on each hyperplane. Specifically, on $\pi_{23}$ one imposes

\beq\label{bc230}
\psi(\x,\y) = \f{\xi(\y)}{x} + \beta \,\xi(\y) + o(1) , \;\;\;\; \text{for}\;\; x\rightarrow 0\, \;\; \text{and}\;\;\y \neq 0
\eeq

\n
where $x:= |\x|$, $\xi$ is a function depending on $\psi$ and 
\be\label{scatlen}
 \mathfrak{a}:= - \beta^{-1}  \in \RE
\ee
 has the physical meaning of two-body scattering length (and it can be related to $\mu$ via  a renormalization procedure). Notice that, due to the symmetry constraint,  \eqref{bc230} implies the analogous boundary conditions on $\pi_{31}$ and $\pi_{12}$. 

\n
As already recalled, the TMS operator  defined in this way is symmetric but not  self-adjoint and  its self-adjoint  extensions are all unbounded from below. Therefore, the natural problem arises of figuring out if and how one can modify the boundary condition \eqref{bc230} to obtain a bounded from below Hamiltonian.  In a comment on this point, at the end of the paper \cite{mf1}  the authors claim that it is possible to find another physically reasonable realization of $\tilde{H}$ as self-adjoint and bounded from below operator. They also affirm that the recipe consists in the replacement 

\beq\label{replace}
\beta \, \xi (\y) \; \rightarrow \; \beta  \, \xi (\y) + (K \xi)(\y)
\eeq
in the boundary condition \eqref{bc230}, where $K$ is a convolution  operator in the Fourier space with a kernel $K(\p-\p')$ satisfying

\be\label{defk}
K(\p) \sim \f{\gamma}{\,p^2} , \;\;\;\;\; \text{for} \;\;\;\; p \rightarrow \infty 
\ee

\n
with $p=|\p|$ and the positive constant $\gamma$ sufficiently large. 
The authors do not explain the reason of their assertion neither they clarify the physical meaning of the boundary condition \eqref{replace}.  
They only conclude: 
\lq\lq A detailed development of this point of view is not presented here because of lack of space\rq\rq  \hspace{0.05 cm} and, strangely enough, their idea has never been developed in the literature. 

\n
Almost 20 years later, Albeverio, H{\o}egh-Krohn and Wu \cite{AHKW} have proposed an apparently different recipe to obtain a bounded from below Hamiltonian, i.e., the replacement 

\[
\beta \, \xi(\y) \;\rightarrow \; \beta \, \xi(\y) + \f{\gamma}{y} \xi(\y)
\]
with $y=|\y|$, in the boundary condition \eqref{bc230}, where again the positive constant $\gamma$ is chosen sufficiently large. Also, the proof of this statement has been postponed to a forthcoming paper which has never been published. 
Even if it has not been explicitly noted by the authors of \cite{AHKW}, it is immediate to realize  that the two proposals contained in \cite{mf1} and  \cite{AHKW} essentially coincide in the sense that in \cite{mf1}  the term added in the boundary condition is the  Fourier transform of the term added in \cite{AHKW}.   It is also important to stress that, according to the claim in \cite{mf1}, only  the asymptotic behavior of $K(\p)$ for $|\p| \rightarrow \infty$ (see \eqref{defk}) is relevant to obtain a lower-bounded Hamiltonian. Correspondingly, it must be sufficient to require  only  the asymptotic behavior  $\gamma y^{-1} + O(1)$ for $y \rightarrow 0$ for the boundary condition in position space  in \cite{AHKW}.

\n
The above considerations suggest to define our formal regularized TMS Hamiltonian $\tilde{H}_{\text{reg}}$  as  an operator in $L^2_{\text{sym}}(\RE^6)$ acting as the free Hamiltonian outside the hyperplanes  and characterized by the following boundary condition on $\pi_{23}$

\beq\label{bc23}
\psi(\x,\y) = \f{\xi(\y)}{x} + (\Gamma_{\text{reg}} \xi ) (\y) + o(1) , \;\;\;\; \text{for}\;\; x\rightarrow 0\, \;\; \text{and}\;\;\y \neq 0
\eeq

\n
where  $\Gar$ is defined by

\beq\label{gamreg}
(\Gamma_{\text{reg}} \xi ) (\y): =  \Gamma_{\text{reg}}(y) \xi  (\y) = \lf(\beta + \f{\gamma}{y} \, \theta(y)\ri) \xi(\y), \;\;\;\;\;\;\;\; \beta \in \RE, \;\;\;\; \gamma >0
\eeq
and $\theta$ is a real cutoff function. 
Due to the symmetry constraint, \eqref{bc23} implies the boundary condition on $\pi_{31}$ and $\pi_{12}$
\[
\psi(\x,\y)= \psi\Big( \f{1}{2} \x - \y, - \f{3}{4} \x -\f{1}{2} \y   \Big) = 
\f{\xi(-\x)}{| \y - \x/2 |} + (\Gar \xi ) (-\x) + o(1), \;\;\;\; \text{for}\;\;  \lf| \y -\x/2\ri|\rightarrow 0 , \;\;\x \neq 0
\]
\[
\psi(\x,\y)= \psi\Big( \f{1}{2} \x + \y, \f{3}{4} \x -\f{1}{2} \y   \Big) = 
\f{\xi(\x)}{| \y +\x/2 |} + (\Gar \xi ) (\x) + o(1), \;\;\;\; \text{for}\;\; \lf| \y +\x/2 \ri| \rightarrow 0 ,\;\;\x \neq 0. 
\]

\n
We assume different hypothesis on $\theta$ depending on the situation. The first possible hypothesis is
\be \label{cod}
\theta\in L^\infty (\RE^+), \qquad |\theta(r) -1| \leq c \, r \quad  \text{for some }c>0.\tag{H1}
\ee
The simplest choice satisfying \eqref{cod} is the characteristic function
\be\label{cafu}
  \quad \theta(r) =  \left\{ \begin{aligned} &1 \qquad & r \leq b \\
						 &  0 & r > b
			         \end{aligned}\right. , \;\;\;\;\;\;\;\; b>0 .
\ee
The second possible  hypothesis requires some minimal smoothness 
\be \label{cos}
\theta \in C^2_{\text{bd}}(\RE^+)=\{ f: \,\RE^+ \rightarrow \RE^+, \, \text{with} \, f \in C^2(\RE^+)\, \text{and} \, f, f', f'' \, \text{bounded} \} ,  \qquad \theta(0)=1. \tag{H2}
\ee
Examples satisfying \eqref{cos} are $\theta(r)= e^{-r/b}$ or $\theta\in C_0^\infty (\RE^+)$  such that $\theta(r) = 1$ for $r \leq b$, $b>0$. 

\n
We stress that \eqref{cos} implies \eqref{cod}.
Hypothesis \eqref{cos} will be used only in Sect. \ref{s:src}, where we study the approximation with separable potentials, and in  Appendix \ref{app:A2}.

\n
Note that the crucial point is the behavior of 
$\theta$ at the origin, which represents the minimal requirement for the regularization 
of the dynamics at short distances. The support of the function $\theta$ is not relevant, in particular a simple choice would be $\theta(r)=1$. 

\n
Needless to say, the operator $\tilde{H}_{\text{reg}}$ is only formally defined since its domain and action are not clearly specified. 
Our aim is to construct  an operator  which represents the rigorous counterpart of  $\tilde{H}_{\text{reg}}$  using a quadratic form method. 
The main idea of the construction has been announced and outlined in \cite{FT},  where a more detailed historical account of the problem is given.  We also mention the recent paper \cite{miche}, where the construction is approached using the theory of self-adjoint extensions.

\n
Let us make some comments on the formal operator $\tilde{H}_{\text{reg}}$.

\n
As we already remarked,  the singular behavior of  $(\Gar \xi)(\y)$ for $y \rightarrow 0$ corresponds in the Fourier space to a convolution operator whose kernel has the asymptotic behavior \eqref{defk}. It will be clear in the course of the proofs in Sect. \ref{secQuad} that such a behavior  is  chosen in order to compensate the singular behavior of the off-diagonal term appearing in the quadratic form. In this sense, one can say that the singularity of $(\Gar \xi)(\y)$ for $y \rightarrow 0$ is the  minimal one required to obtain a self-adjoint and bounded from below Hamiltonian.

\n
Concerning the physical meaning of our regularization, we recall that we have replaced the parameter  $\beta$ in \eqref{bc230} with $\Gar$ in \eqref{bc23}. By analogy with the definition \eqref{scatlen}, we can introduce an effective, position-dependent scattering length 
\[
 \mathfrak{a}_{\text{eff}}(y) := - \Gar ^{-1} (y)
\]
which can be interpreted as follows. For simplicity, let us fix  $\beta>0$ and choose the cutoff \eqref{cafu}. Consider the zero-range  interaction between the particles $2, 3$ which takes place when $\x_2=\x_3$, i.e., for $\x=0$. In these conditions, the coordinate $y$ is the distance between the  third particle $1$ and the common position of particles $2, 3$. Then one has 
\[
 \mathfrak{a}_{\text{eff}}(y)=\mathfrak{a} \;\; \;\; \text{if} \;\; \;\; y>b , \;\;\;\;\;\; \mathfrak{a}_{\text{eff}}(y)=\f{\mathfrak{a} y}{y-\gamma \mathfrak{a}} \;\;\;\; \text{if}\;\;\;\;  y \leq b ,
\]
i.e.,  the effective scattering length associated with the interaction of particles $2, 3$ is equal to $\mathfrak{a}$ if the third particle $1$ is at a distance larger than $b$ while for distance smaller than $b$ the scattering length depends on the position of the particle $1$ and it decreases to zero, i.e., the interaction vanishes,  when the distance goes to zero. In other words, we introduce a three-body interaction which is a common procedure in certain low-energy approximations in nuclear physics. Such three-body interaction reduces to zero the two-body interaction when the third particle approaches the common position of the first two. This is precisely the mechanism that prevents in our model the fall to the center phenomenon, i.e., the Thomas effect.

\n
The paper is organized as follows. 

\n
In Sect. \ref{secMain}, starting from the formal Hamiltonian  $\tilde{H}_{\text{reg}}$, we construct
a quadratic form which is the initial point of our analysis and we formulate our main results. 

\n
In Sect. \ref{secQuad}, we prove that the quadratic form is closed and bounded from below for any $\gamma$ larger than a threshold explicitly given.

\n
In Sect. \ref{secHam}, we characterize  the self-adjoint and bounded from below Hamiltonian  $H$ uniquely associated with the quadratic form which is the
rigorous counterpart of  $\tilde{H}_{\text{reg}}$.

\n
In Sect. \ref{secAppr}, we introduce a sequence of approximating Hamiltonians $H_{\ve}$ with rescaled separable potentials and a renormalized coupling constant and we prove a uniform lower bound on the spectrum. 

\n
In Sect. \ref{s:src}, we  show that the Hamiltonian $H$ is the norm resolvent limit of the sequence of approximating Hamiltonians $H_{\ve}$. 

\n
In the Appendix, we prove a technical regularity result for the elements of the domain of $H$.

\n
In conclusion, we collect here some of the notation frequently used throughout the paper.

\n
- $\x$ is a vector in $\RE^3$ and $x=|\x|$.

\n
- $\hat{f} \equiv \mathcal F f$ is the Fourier transform of $f$.

\n
- For a linear operator $A$ acting in position space, we denote by $\hat{A} = \mathcal F A \mathcal F^{-1}$  the corresponding operator in the Fourier space.

\n
- $H^s(\RE^n)$ denotes the standard Sobolev space of order $s>0$ in $\RE^n$. 

\n
- $\|\cdot\|$ and $(\cdot, \cdot)$ are the norm and the scalar product in  $L^2(\RE^n)$, $\|\cdot\|_{L^p}$ is the norm in $L^p(\RE^n)$, with $p\neq 2$,  and $\|\cdot\|_{H^s}$ is the norm in $H^s(\RE^n)$. It will  be clear from the context if $n=3$ or $n=6$.

\n
- $f \big|_{\pi_{ij}} \in H^{s} (\RE^3)$ is the trace of $f \in H^{3/2 + s}(\RE^6)$, for any $s>0$.

 \n
-  $\Bou(\mathcal K, \mathcal H)$ is the Banach space of the linear bounded operators from  $\mathcal K$ to  $\mathcal H$, where $\mathcal K$,  $\mathcal H$ are Hilbert spaces and $\Bou( \mathcal H)=\Bou(\mathcal H, \mathcal H)$.

\n
- $c$ will denote numerical constant whose value may change from line to line.

\vs

\section{Construction of the Quadratic Form and Main Results}\label{secMain}

Here, we describe a heuristic procedure to construct the quadratic form $  (\psi, \tilde{H}_{\text{reg}} \psi)$ associated with the formal Hamiltonian $ \tilde{H}_{\text{reg}}$ defined in the introduction.  

\n
Since we  mainly work in the Fourier space,   we introduce the coordinates $ \k_{23} \equiv \k,\;$    $ \k_1 \equiv \p$, conjugate variables of $\x,\y$
\[
\k=\f{1}{2} (\p_2 - \p_3) , \;\;\;\;\;\; \p= \f{1}{3} (\p_2 + \p_3) - \f{2}{3} \p_1
\]
where $\p_1, \p_2$ and $\p_3= -\p_1 - \p_2$ are the momenta of the particles. 
The other two pairs of Jacobi coordinates in momentum space are $\, \k_{31}=  - \f{1}{2} \k + \f{3}{4} \p$,   $\,\k_2=    -\k -\f{1}{2} \p\;$ and $\;\k_{12} = - \f{1}{2} \k - \f{3}{4} \p$,  $\,\k_3 = \k- \f{1}{2} \p$.
In the Fourier space, the Hilbert space of states is equivalently written as
\beq\label{hspaf}
L^2_{\text{sym}}(\RE^6) = \Big\{ \psi \in L^2(\RE^6)\,  \;\; \text{s.t.}\;\; \hat{\psi}(\k,\p)= \hat{\psi}(-\k, \p)= \hat{\psi}\Big(\f{1}{2}\k + \f{3}{4} \p, \k -\f{1}{2} \p \Big) \Big\}. 
\eeq

\n
The symmetry conditions in \eqref{hspaf} correspond to the exchange of particles $2,3$ and $3,1$ and they also imply the symmetry under the exchange of particles $1,2$, i.e.,  $\psi(\k,\p)= \hat{\psi}\big(\f{1}{2}\k - \f{3}{4} \p, -\k -\f{1}{2} \p \big)$.
Moreover the free Hamiltonian is

\[
\hat{H}_0 = k^2 + \f{3}{4}p^2.
\]

\n
We also introduce the \lq\lq potential\rq\rq \hspace{0.05cm} produced by the \lq\lq charge density\rq\rq \hspace{0.05cm} $\xi$ distributed on the hyperplane $\pi_{23}$ by
\beq\label{pot23}
\big( \widehat{\mathcal G_{23}^{\la}} \hat{\xi} \big) (\k,\p):= \sqrt{\f{2}{\pi}} \, \f{\hat{\xi}(\p)}{k^2 + \f{3}{4} p^2 +\la}
,\;\;\;\;\;\;\;\;\la>0
\eeq
and one can verify that the function $ \mathcal G_{23}^{\la} \xi $ satisfies   the equation 
\beq\label{dist23}
\Big( (H_0 +\la) \mathcal G_{23}^{\la} \xi \Big) (\x,\y) = 4 \pi \, \xi (\y)  \,\delta (\x)
\eeq
in distributional sense. Analogously, we have
\begin{align}
\big( \widehat{\mathcal G_{31}^{\la}} \hat{\xi} \big)(\k,\p)&:= \sqrt{\f{2}{\pi}} \, \f{\hat{\xi}(-\k - \f{1}{2} \p  )}{k^2 + \f{3}{4} p^2 +\la},
&\Big( (H_0 +\la) \mathcal G_{31}^{\la} \xi \Big) (\x,\y) &= 4 \pi \, \xi (-\x)  \,\delta \Big(\f{1}{2} \x - \y \Big), \label{dist31}\\
\big( \widehat{\mathcal G_{12}^{\la}} \hat{\xi} \big) (\k,\p)&:= \sqrt{\f{2}{\pi}} \, \f{\hat{\xi}(\k - \f{1}{2} \p  )}{k^2 + \f{3}{4} p^2 +\la}, &\Big( (H_0 +\la) \mathcal G_{12}^{\la} \xi \Big) (\x,\y) &= 4 \pi \, \xi (\x)  \,\delta \Big(\f{1}{2} \x + \y \Big)\,\label{dist12}
\end{align}
 and the potential produced by the three charge densities is
\beq\label{potxi}
\big( \widehat{\mathcal G^{\la}} \hat{\xi} \big) (\k,\p) := \sum_{i<j} \big( \widehat{\mathcal G_{ij}^{\la}} \hat{ \xi} \big) (\k,\p) =
\sqrt{\f{2}{\pi}} \, \, \f{\hat{\xi}(\p)  + \hat{\xi}(\k - \f{1}{2} \p) + \hat{\xi}(-\k - \f{1}{2} \p  )    }{k^2 + \f{3}{4} p^2 +\la}. 
\eeq
Note that the function $ \widehat{\mathcal G^{\la}} \hat{\xi}$ is symmetric under the exchange of particles; hence, it belongs to $L^2_{\text{sym}}(\RE^6)$, see Eq. \eqref{hspaf}.

\n
These potentials  exhibit the same singular behavior required in the boundary conditions \eqref{bc23}. Indeed, we have for $ x \rightarrow 0, \; \y \neq0$
\begin{align}\label{G23x}
\Big(\mathcal G_{23}^{\la}  \xi \Big) (\x,\y)&=  \sqrt{\f{2}{\pi}} \, \f{1}{(2\pi)^3} \int\!\!d\k d\p \, e^{i \k \cdot \x + i \p \cdot \y} \,  \f{\hat{\xi}(\p)}{k^2 + \f{3}{4} p^2 +\la} \nonumber\\
&=  \sqrt{\f{2}{\pi}} \, \int\!\!  d\p \, e^{ i \p \cdot \y} \,  \hat{\xi}(\p) \, \f{ e^{- \sqrt{ \f{3}{4} p^2 +\la}\, \, x}}{4 \pi \,x}\nonumber\\
&= \f{\xi(\y)}{x} - \f{1}{(2\pi)^{3/2}} \int\!\!d\p\, e^{i \p \cdot \y} \, \sqrt{\f{3}{4} p^2 +\la}  \,\, \hat{\xi}(\p)  + o(1).
\end{align}
 
 \n
Taking into account of the contribution of the other two charge densities, we have for  $x \rightarrow 0, \; \y \neq 0$
\begin{align}\label{abG}
\Big(\mathcal G^{\la} \xi \Big) (\x,\y)=&\f{\xi(\y)}{x} - \f{1}{(2\pi)^{3/2}} \int\!\!d\p\, e^{i \p \cdot \y} \, \sqrt{\f{3}{4} p^2 +\la}  \,\, \hat{\xi}(\p) + o(1) \nonumber\\
& +  \f{1}{(2\pi)^{3/2}}  \int\!\!  d\p \, e^{i\p \cdot \y} \,\f{1}{2 \pi^2} \! \int\!\!d\k\, e^{i \k \cdot \x} \, \f{\hat{\xi}( \k-\f{1}{2} \p) + \hat{\xi}( -\k-\f{1}{2} \p) }{ k^2 + \f{3}{4} p^2 +\la}  \nonumber\\
=& \f{\xi(\y)}{x} - \f{1}{(2\pi)^{3/2}} \int\!\!d\p\, e^{i \p \cdot \y}\! \left(\!
\sqrt{\f{3}{4} p^2 \!+\! \la}  \,\, \hat{\xi}(\p) - \f{1}{\pi^2} \!\!\int\!\! d\p' \f{ \hat{\xi}(\p')}{p^2 + p'^2 + \p\cdot \p' +\la} \!\right) +o(1) .
\end{align}
The asymptotic behavior \eqref{abG} suggests to represent an element  $\psi$ satisfying \eqref{bc23} as
\beq\label{decops}
\psi= w^{\la} + \mathcal G^{\la} \xi
\eeq

\n
where $w^{\la}$ is a smooth function. Using \eqref{abG}, the boundary condition \eqref{bc23} is rewritten as

\beq\label{bcpos}
\big(\Gar \xi\big)(\y) + \f{1}{(2\pi)^{3/2}} \int\!\!d\p \, e^{i\p \cdot \y} \! \left(\!
\sqrt{\f{3}{4} p^2 \!+\!\la}  \,\, \hat{\xi}(\p) - \f{1}{\pi^2} \!\!\int\!\! d\p' \f{ \hat{\xi}(\p')}{p^2 + p'^2 + \p\cdot \p' +\la} \!\right) \!= w^{\la}(0,\y)\,
\eeq
where $w^{\la}(0,\cdot) \equiv w^{\la} \big|_{\pi_{23}}$ denotes the trace of $w^{\la}$ on the hyperplane $\pi_{23}$.  

\n
We are now ready to construct the energy form $  (\psi, \tilde{H}_{\text{reg}} \psi)$ associated with the formal  Hamiltonian $\tilde{H}_{\text{reg}}$ defined in the Introduction. 
For $\ve>0$, let us consider the domain $D_{\ve}= \{ (\x,\y) \in \RE^6 \;\; s.t.\;\; |\x_i - \x_j|> \ve \; \; \forall \; i\neq j\}$. Taking into account that   $\tilde{H}_{\text{reg}}$ acts as the free Hamiltonian in $D_{\ve}$, 
the decomposition  \eqref{decops} and the fact that $(H_0 + \la) \mathcal G^{\la} \xi =0$ in $D_{\ve}$, we have
\begin{align}\label{bobou}
 (\psi, \tilde{H}_{\text{reg}} \psi) &= \lim_{\ve \rightarrow 0} \int_{D_{\ve}} \!\!\!d\x d\y\, \overline{\psi(\x,\y)} (H_0 \psi) (\x,\y) \nonumber\\
&= \lim_{\ve \rightarrow 0} \int_{D_{\ve}} \!\!\!d\x d\y\, \overline{ (w^{\la} (\x,\y)  + \mathcal G^{\la} \xi (\x,\y) ) } \, \Big( (H_0 + \la) (     w^{\la}  + \mathcal G^{\la} \xi) \Big)  (\x,\y)      -\la\|\psi\|^2_{L^2}    \nonumber\\
&= (w^{\la}, (H_0 + \la) w^{\la}) - \la \|\psi\|^2_{L^2} + (\mathcal G^{\la} \xi, (H_0 + \la) w^{\la} ).
\end{align}

\n
Using \eqref{dist23}, \eqref{dist31}, \eqref{dist12} and the symmetry properties of $w^{\la}$, the last term in \eqref{bobou} reduces to
\begin{align}\label{intpar}
(\mathcal G^{\la} \xi, (H_0 + \la) w^{\la} ) &= 4\pi \int\!\!d\x d\y  \Big( \xi(\y) \delta (\x) + \xi (-\x) \delta( \y - \x/2) + \xi(\x) \delta(  \y +\x/2) \Big) w^{\la}(\x, \y)\nonumber\\
&= 12 \pi \int\!\! d \y \, \overline{\xi(\y)} \,w^{\la} (0,\y).
\end{align}
By \eqref{bobou}, \eqref{intpar} and the boundary condition \eqref{bcpos}, we finally arrive at the definition of the following quadratic form

\begin{definition}\label{def}
\begin{align}\label{f1}
 F (\psi) &  := \F^\lambda (w^{\la}) -  \la \|\psi\|^2 + 12 \pi \;\Phi^{\la} (\xi) , \\
\F^\la (w^{\la}) &:=  
\int \!\! d\k \,d\p \, \Big( k^2 + \f{3}{4} p^2+\lambda \Big) |\hat{w}^{\lambda} (\k,\p)|^2 ,\label{f2} \\
\!\Phi^{\la} (\xi) &:=\! \int\!\! d \p \, \overline{\hat{\xi} (\p)} \, \! \left(\! \sqrt{\f{3}{4} p^2 \!+\!\la}  \,\, \hat{\xi}(\p) - \f{1}{\pi^2} \!\!\int\!\! d\p' \f{ \hat{\xi}(\p')}{p^2 + p'^2 + \p\cdot \p' +\la} + (\hat{\Gamma}_{\textup{reg}} \hat{\xi})(\p) \!\right)\label{f3}
\end{align}
where $(\hat{\Gamma}_{\textup{reg}} \hat{\xi})(\p)$ is the Fourier transform of the function defined in Eq. \eqref{gamreg}.
We define the quadratic form $F$ on the domain 
\beq\label{f4}
\D( F) := \Big\{\psi\in L^2_{\textup{sym}}(\RE^6)\,|\, \psi=w^{\lambda}+\mathcal{G}^{\lambda} \xi,\, \, w^{\lambda} \in H^1(\RE^6), \; \xi\in H^{1/2}(\RE^3) \Big\}.
\eeq
\end{definition}

\vs

\begin{remark}
From the explicit expression of the potential \eqref{potxi}, one immediately sees that for any $\xi \in H^{1/2}(\RE^3)$
\[
\mathcal{G}^{\lambda} \xi   \in L^2(\RE^6), \;\;\;\;\;\; \text{and} \;\;\;\;\;    \;           \mathcal{G}^{\lambda} \xi \notin H^1(\RE^6)  \;\;\;\;\;\;\text{for} \;\;\;\;\; \xi \neq 0. 
\]
Therefore, we have $\D(F) \supset H^1(\RE^6)$ and, for fixed $\lambda$,  the decomposition $\psi=w^{\lambda}+\mathcal{G}^{\lambda} \xi$ is unique. 
\end{remark}

\n
In the rest of the paper, we assume Definition \ref{def} as the starting point of our rigorous analysis. Let us conclude this section collecting the main results we  prove in the  paper. First, we show that for any $\gamma>\gamma_c,$ where
\beq\label{eq:gamma0}
\gamma_c= \f{\sqrt{3}}{ \pi} \left( \f{4 \pi}{3 \sqrt{3} }- 1 \right) \simeq 0.782,
\eeq 
the quadratic form $F$ on the domain $\D(F)$ is closed and bounded from below. This is the content of the next theorem whose proof is presented in Sect. \ref{secQuad}.

\begin{theorem}\label{thm:main}
	{Assume \eqref{cod} and let $\gamma>\gamma_c$}, then: 
\begin{itemize}
\item[(i)]  there exists $\la_0>0$ such that for all $\lambda>\lambda_0$ the quadratic form $\Phi^\la$ in $L^2(\RE^3)$ defined in \eqref{f3},  is 
coercive and closed on the domain $\mathcal D (\Phi^\la) = H^{1/2}(\RE^3)$;
\item[(ii)]
the quadratic form $F, \D(F)$ on $L_{\text{sym}}^2(\RE^6)$ introduced in Definition \ref{def} is bounded from below and closed.
\end{itemize}
\end{theorem}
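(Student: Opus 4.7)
The natural plan is to split
\[
\Phi^\lambda(\xi) \;=\; \Phi^\lambda_{\mathrm{diag}}(\xi) + \Phi^\lambda_{\mathrm{od}}(\xi) + \Phi_{\mathrm{reg}}(\xi),
\]
where $\Phi^\lambda_{\mathrm{diag}}(\xi) = \int \sqrt{\tfrac34 p^2+\lambda}\,|\hat\xi(\mathbf p)|^2\,d\mathbf p$, the off-diagonal term $\Phi^\lambda_{\mathrm{od}}$ is the non-positive double integral with kernel $1/(p^2+p'^2+\mathbf p\cdot\mathbf p'+\lambda)$, and $\Phi_{\mathrm{reg}}(\xi) = \beta\|\xi\|^2 + \gamma\int \theta(y)|\xi(\mathbf y)|^2/y\,d\mathbf y$ is the positive regularization written in position space. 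The diagonal alone defines an equivalent norm on $H^{1/2}(\RE^3)$, so (i) reduces to controlling $|\Phi^\lambda_{\mathrm{od}}|$ by $\Phi^\lambda_{\mathrm{diag}} + \Phi_{\mathrm{reg}}$ with relative bound strictly less than one. The difficulty is that $|\Phi^\lambda_{\mathrm{od}}|$ exceeds $\Phi^\lambda_{\mathrm{diag}}$ by the Minlos--Faddeev factor $4\pi/(3\sqrt 3) > 1$ (the Thomas effect at the level of the form), while Kato's inequality $\int|\xi|^2/y\,d\mathbf y \leq \tfrac{\pi}{2}\langle\xi,(-\Delta)^{1/2}\xi\rangle$ is sharp in the \emph{opposite} direction. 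Hence $\Phi_{\mathrm{reg}}$ cannot be lower-bounded separately by $\Phi^\lambda_{\mathrm{diag}}$: the estimates of $\Phi^\lambda_{\mathrm{od}}$ and $\Phi_{\mathrm{reg}}$ must be performed jointly.

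The joint estimate is produced by a simultaneous spherical-harmonic and Mellin-transform diagonalization. At $\lambda = 0$ and $\theta\equiv 1$ all three forms are rotation invariant and scale-covariant of the same degree, hence simultaneously diagonal in the $(\ell, s)$-variables; the bosonic $s$-wave sector $\ell = 0$ is extremal. In this sector, after writing $\hat\xi(\mathbf p) = p^{-1}\tilde h(\log p)$ up to normalization, one obtains a convolution on $\RE$ for both $\Phi^0_{\mathrm{od}}$ and the $1/y$ form, whose Fourier symbols $\Lambda_0(s)$ and $\rho_0(s)$ are explicit even functions of $s$, attaining their maxima $\tfrac{4\pi}{3\sqrt 3}$ and $\tfrac{\pi}{\sqrt 3}$ respectively at $s = 0$; the simultaneous saturation of the Minlos--Faddeev and Kato bounds at the \emph{same} Mellin mode is the crucial structural fact. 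The coercivity requirement then reduces to the pointwise inequality $\Lambda_0(s) - \gamma\,\rho_0(s) < \tfrac{\sqrt 3}{2}$ for every $s \in \RE$, which, evaluated at the extremal $s = 0$, gives $\tfrac{4\pi}{3\sqrt 3} - \gamma\tfrac{\pi}{\sqrt 3} < \tfrac{\sqrt 3}{2}$, i.e., exactly $\gamma > \gamma_c = \tfrac{\sqrt 3}{\pi}\bigl(\tfrac{4\pi}{3\sqrt 3}-1\bigr)$.

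The corrections for $\lambda > 0$ and $\theta \not\equiv 1$ are handled as bounded perturbations. The shift $\lambda > 0$ in the denominator of $\Phi^\lambda_{\mathrm{od}}$ and the replacement of $\sqrt{\tfrac34}|\mathbf p|$ by $\sqrt{\tfrac34 p^2+\lambda}$ in the diagonal produce differences bounded by $C\|\xi\|^2$, while the factor $\theta(y)-1$ in the regularization, which by \eqref{cod} is $O(y)$ near the origin, yields $\int(\theta(y)-1)|\xi|^2/y\,d\mathbf y \leq c\|\xi\|^2$. Both error terms are absorbed into the positive $\sqrt\lambda\|\xi\|^2$ contribution from $\Phi^\lambda_{\mathrm{diag}}$ (and into $\beta\|\xi\|^2$) once $\lambda\geq \lambda_0$ is chosen large enough depending on $\gamma-\gamma_c$. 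The relative bound is then strictly below one, giving $\Phi^\lambda(\xi) \geq c_0\|\xi\|^2_{H^{1/2}}$, and closedness follows from KLMN applied with $\Phi^\lambda_{\mathrm{diag}}$ as the reference closed form.

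For (ii) the decomposition built into Definition~\ref{def} gives $F(\psi) = \mathcal F^\lambda(w^\lambda) - \lambda\|\psi\|^2 + 12\pi\Phi^\lambda(\xi)$. With $\mathcal F^\lambda \geq 0$ and $\Phi^\lambda \geq 0$ from (i), one has $F(\psi) \geq -\lambda\|\psi\|^2$, proving semi-boundedness. Closedness reduces to showing that the map $(w^\lambda,\xi)\mapsto w^\lambda + \mathcal G^\lambda\xi$ is a bounded linear bijection from $H^1(\RE^6)\oplus H^{1/2}(\RE^3)$ onto $\mathcal D(F)$: surjectivity is by definition, and injectivity follows from the singular behaviour $\xi(\mathbf y)/|\mathbf x|$ of $\mathcal G^\lambda\xi$ at the coincidence hyperplanes (cf.~\eqref{abG}), which cannot be matched by any $H^1$-function, forcing $\xi=0$ and then $w^\lambda=0$. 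The main technical obstacle is the sharp joint Mellin computation in (i): one must verify both the explicit forms of $\Lambda_0(s)$ and $\rho_0(s)$ and the coincidence of their maxima, and propagate the sharp constant through the $\lambda$- and $\theta$-perturbation steps without loss.
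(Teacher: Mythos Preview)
Your approach is essentially the one the paper follows: decompose $\Phi^\lambda$ in spherical harmonics, diagonalize the $\ell=0$ sector via the Mellin transform, and read off $\gamma_c$ from the resulting multiplier at the origin. Two points deserve attention.

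First, a normalization slip. With the values $\Lambda_0(0)=\tfrac{4\pi}{3\sqrt3}$ and $\rho_0(0)=\tfrac{\pi}{\sqrt3}$ you are implicitly dividing by the diagonal symbol $\tfrac{\sqrt3}{2}$; the pointwise coercivity condition is then $\Lambda_0(s)-\gamma\rho_0(s)<1$, not $<\tfrac{\sqrt3}{2}$. With the correct comparison the algebra at $s=0$ does yield $\gamma>\gamma_c$; with the inequality as you wrote it one gets the wrong threshold $\tfrac{4}{3}-\tfrac{3}{2\pi}\simeq 0.856$.

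Second, and more substantively: knowing that $\Lambda_0$ and $\rho_0$ are \emph{individually} maximized at $s=0$ does not imply that $\Lambda_0-\gamma\rho_0$ is maximized there; if $\rho_0$ decayed faster than $\Lambda_0$ near the origin the difference could increase. The paper does not argue via ``coincidence of maxima'' but instead writes the $s$-wave multiplier as $f(k)=f_1(k)/f_2(k)$ with explicit hyperbolic functions and shows, for each $\gamma>\gamma_c$ and a suitable $s^*<1$, that $f_1(0)>0$ and $f_1'(k)\geq(\sqrt3\,s^*+\pi\gamma-\tfrac{4\pi}{3})\cosh(k\tfrac{\pi}{6})\geq0$, hence $f_1>0$ everywhere. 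You should expect to need an argument at this level of detail; the heuristic ``simultaneous saturation at the same Mellin mode'' only checks one point.

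Finally, the assertion that ``$\ell=0$ is extremal'' also needs work. The paper treats odd $\ell$ (where $\phi^\lambda_{\mathrm{off},\ell}\geq0$ so nothing is needed) and even $\ell\geq2$ (where a uniform bound $S_{\mathrm{off},\ell}(k)\geq S_{\mathrm{off},2}(k)\geq -B$ with $B<\tfrac{\sqrt3}{2}$ lets the diagonal alone absorb the off-diagonal, without recourse to $\Phi_{\mathrm{reg}}$) separately; this is where the monotonicity in $\ell$ of the Legendre-polynomial integrals enters. Your sketch of part (ii) matches the paper's argument.
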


\n
Theorem \ref{thm:main} implies that $F,\D(F)$ defines a self-adjoint and bounded from below Hamiltonian $H, \D(H)$ in $L^2_\mathrm{sym}(\RE^6)$. If we denote by 
 $\Gamma^\lambda,\D(\Gamma^\lambda)$  the positive,  self-adjoint operator  associated with the quadratic form $\Phi^\la$, $\mathcal D (\Phi^\la) = H^{1/2}(\RE^3)$, then  
 domain and action of the Hamiltonian  are characterized in the following proposition.
 
\begin{theorem} \label{prop:ham} {Under the same assumptions as in Theorem \ref{thm:main}}, we have
\begin{align}
\D(H) &= \Big\{\psi \in \D(F) \, | \, w^{\la} \in H^2(\RE^6) , \, \xi \in \D(\Gamma^{\la}) ,\, \Gamma^{\la} \xi = w^{\la}\big|_{\pi_{23}} \Big\},   \label{domH}\\
H\psi&= H_0 \,w^{\la} - \la \, \mathcal G^{\la} \xi. \label{azH} 
\end{align}
\end{theorem}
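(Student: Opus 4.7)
The natural strategy is to apply the first representation theorem to the closed, semibounded form $(F, \D(F))$ provided by Theorem \ref{thm:main}: a vector $\psi\in\D(F)$ belongs to $\D(H)$ with $H\psi=\phi$ if and only if the sesquilinear extension satisfies $F(\chi,\psi)=(\chi,\phi)$ for every $\chi\in\D(F)$. Polarizing \eqref{f1}--\eqref{f3} and writing $\chi=u^\la+\mathcal{G}^\la\eta$, $\psi=w^\la+\mathcal{G}^\la\xi$ with the decompositions uniquely determined (since $\mathcal{G}^\la\zeta\notin H^1(\RE^6)$ unless $\zeta=0$), the representation identity reads
\begin{equation*}
\F^\la(u^\la,w^\la)-\la\,(\chi,\psi)+12\pi\,\Phi^\la(\eta,\xi)\;=\;(\chi,\phi).
\end{equation*}

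For the inclusion $\subset$ in \eqref{domH}, I would first specialize to $\eta=0$ with $u^\la\in H^1(\RE^6)$ arbitrary. The identity reduces to $\F^\la(u^\la,w^\la)=(u^\la,\phi+\la\psi)$, i.e.\ $w^\la$ is a weak $H^1$-solution of $(H_0+\la)w^\la=\phi+\la\psi\in L^2(\RE^6)$; elliptic regularity for $H_0+\la$ upgrades $w^\la$ to $H^2(\RE^6)$, and rearranging using $\psi=w^\la+\mathcal{G}^\la\xi$ gives the action $\phi=H_0w^\la-\la\mathcal{G}^\la\xi$ in \eqref{azH}. With $w^\la\in H^2(\RE^6)$ the trace theorem on the codimension-three hyperplane $\pi_{23}$ yields $w^\la|_{\pi_{23}}\in H^{1/2}(\RE^3)\subset L^2(\RE^3)$, so the condition $\Gamma^\la\xi=w^\la|_{\pi_{23}}$ is meaningful as an equation in $L^2(\RE^3)$.

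Testing now against arbitrary $\chi\in\D(F)$ and subtracting the piece already handled, the remaining identity becomes
\begin{equation*}
12\pi\,\Phi^\la(\eta,\xi)\;=\;(\mathcal{G}^\la\eta,(H_0+\la)w^\la).
\end{equation*}
The key technical step is the integration-by-parts formula
\begin{equation*}
(\mathcal{G}^\la\eta,(H_0+\la)w^\la)\;=\;12\pi\,(\eta,\,w^\la|_{\pi_{23}})_{L^2(\RE^3)},
\end{equation*}
which is the rigorous counterpart of the heuristic computation \eqref{intpar}. I would establish it by passing to momentum space via \eqref{potxi} and Plancherel to reduce the left-hand side to
\begin{equation*}
\sqrt{\tfrac{2}{\pi}}\int d\k\,d\p\,\bigl(\overline{\hat\eta(\p)}+\overline{\hat\eta(\k-\p/2)}+\overline{\hat\eta(-\k-\p/2)}\bigr)\hat w^\la(\k,\p),
\end{equation*}
then using the bosonic symmetry \eqref{hspaf} of $\hat w^\la$---implemented by the unit-Jacobian change of Jacobi momenta that permutes the three pairs---to collapse the three summands into $3\sqrt{2/\pi}\int\overline{\hat\eta(\p)}\hat w^\la(\k,\p)\,d\k\,d\p$; the inner $\k$-integration reproduces $(2\pi)^{3/2}\,\widehat{w^\la|_{\pi_{23}}}(\p)$ and the numerical prefactor simplifies to $12\pi$. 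Combining the two displays gives $\Phi^\la(\eta,\xi)=(\eta,w^\la|_{\pi_{23}})_{L^2}$ for all $\eta\in H^{1/2}(\RE^3)$, and the first representation theorem applied to the closed coercive form $\Phi^\la$ (guaranteed by Theorem \ref{thm:main}(i)) forces $\xi\in\D(\Gamma^\la)$ with $\Gamma^\la\xi=w^\la|_{\pi_{23}}$.

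The reverse inclusion is a direct verification: given $\psi$ on the right-hand side of \eqref{domH}, reading the same chain of identities backwards yields $F(\chi,\psi)=(\chi,H_0w^\la-\la\mathcal{G}^\la\xi)$ for every $\chi\in\D(F)$, which places $\psi$ in $\D(H)$ with the action \eqref{azH}. I expect the main obstacle to be the rigorous justification of the integration-by-parts identity above: since $\hat\eta\notin L^2(\RE^6)$ the naive application of Fubini is not absolutely convergent, and I would handle this either by a density argument (approximating $\eta$ by Schwartz functions and passing to the limit using continuity of the trace map $H^2(\RE^6)\to L^2(\RE^3)$ and of $\mathcal{G}^\la$ on $H^{1/2}(\RE^3)$) or by pairing the $H^2$ function $w^\la$ directly against the distribution $(H_0+\la)\mathcal{G}^\la\eta$ determined by \eqref{dist23}--\eqref{dist12}.
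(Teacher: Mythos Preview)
Your proposal is correct and follows essentially the same approach as the paper's own proof: first test with $\eta=0$ to extract $w^\la\in H^2(\RE^6)$ and the action \eqref{azH}, then test with general $\eta$ and reduce via the momentum-space computation using the bosonic symmetry \eqref{hspaf} to obtain $\Phi^\la(\eta,\xi)=(\eta,w^\la|_{\pi_{23}})$, concluding by the definition of $\Gamma^\la$; the reverse inclusion is the same direct verification. Your additional care about the Fubini/density issue is reasonable but not strictly needed here, since once $w^\la\in H^2(\RE^6)$ both $\mathcal G^\la\eta$ and $(H_0+\la)w^\la$ lie in $L^2(\RE^6)$ and the Fourier-side integrals are absolutely convergent.
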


\n
The proof of Theorem \ref{prop:ham} is deferred to Sect. \ref{secHam}.

\n
The next question we  address is the approximation through a regularized Hamiltonian $H_\ve,\D(H_\ve)$ with non-local interactions, also known as separable potentials. In order to define the approximating model, we need to first introduce some notation. 

\n
Let $\chi \in L^2(\RE^3, (1+x) d\x) \cap  L^1(\RE^3, (1+x) d\x)$, spherically symmetric, real valued, nonnegative and  such that $\int d\x \,  \chi(x) = 1$. Moreover,  set  
\begin{equation}\label{ellellprime}
\ell :=4\pi (\chi, (-\Delta)^{-1}\chi) = 4\pi \int d\k \, \frac{|\hat \chi (k)|^2}{k^2},\quad \ell' :=4\pi \int d\k \, \frac{|\hat \chi' (k)|^2}{k^2} ,
\quad
	\gamma_0=3\pi\sqrt{\frac{\ell\ell'}{2}}.
\end{equation}
For all $\ve>0$, we define the scaled function $\chi_\ve$ as 
\begin{equation}\label{chiep}
\chi_\ve(x) = \frac1{\ve^3} \chi(x/\ve)
\end{equation}
and the operator $g_\ve$ on $L^2(\RE^3)$
\begin{equation}\label{gve}
g_\ve := -4\pi \frac\ve\ell \Big( \ID+ \frac{\ve}{\ell} \Gamma_{\reg}\Big)^{-1} 
\end{equation}
with $\Gamma_{\reg}$ given in  \eqref{gamreg}.
Then, the approximating Hamiltonian $H_\ve,\D(H_\ve)$ on $L^2_{\sym}(\RE^6)$ is  defined as 
\be \label{afa2}
H_\ve = H_0  + \sum_{j=0}^2 S^j \big(|\chi_\ve\rangle \langle \chi_\ve|   \otimes g_\ve \big)   {S^j}^*,
 \qquad \D (H_\ve)  = H^2(\RE^6) \cap L^2_{\sym}(\RE^6)
\ee
where $S$ is  the permutation operator exchanging the triple of labels $(1,2,3)$ in the triple  $(2,3,1)$.  So that, 
\be\label{S}
S \hat \phi(\k,\p)  =   \hat \phi\Big(\frac{3}{4} \p - \frac{1}{2}\k , -\frac{1}{2}\p - \k\Big).
\ee
 Taking into account that $\chi_{\ve} \to \delta\,$ for $\ve \to 0$ in distributional sense, one sees that the three interaction terms in \eqref{afa2} for $\ve \to 0$ formally converge to zero-range interactions supported on the hyperplanes $\pi_{23}$, $\pi_{31}$, $\pi_{12}$.  
 
 \n
 Moreover, the operator $g_{\ve}$ plays the role of renormalized coupling constant. We also note that  in position space $g_{\ve}$ reduces to the multiplication operator by $g_{\ve}(y)$ which for $\ve$ small behaves as 
\[
g_{\ve}(y)= - 4 \pi \f{\ve}{\ell} + 4 \pi \f{\ve^2}{\ell^2} \Big( \beta + \f{\gamma}{y} \theta(y) \Big) + O(\ve^3).
\]
In particular, if we assume for simplicity that $\theta$ is the characteristic function \eqref{cafu} then  we find 
that for $y>b$ we have the standard behavior required to approximate a point interaction in dimension three with scattering length $ - \beta^{-1}$ (see  \cite{Albeverio}, chapter II.1.1, pages 111--112), while for $y\leq b$ we have introduced a dependence on the position $y$ such that the modified scattering length $- \big(\beta+\frac{\gamma}{y}\big)^{-1}$ goes to zero as $y \rightarrow 0$. 

\n
In Sect. \ref{secAppr} (see Theorem \ref{t:KKK}), we prove a uniform lower bound  for the spectrum of $H_{\ve}$, i.e., we show that there exists $\la_1 >0$, independent of $\ve$, such that $\inf\sigma(H_\ve)>-\lambda_1$. 

\n
Finally,  we prove the norm resolvent convergence  of $H_\ve,\D(H_\ve)$ to $H, \D(H)$ as $\ve \to0$. More precisely, the following result holds true and it is proved in Sect. \ref{s:src}.
 
\begin{theorem}\label{t:main} {Assume  \eqref{cos} and $\gamma>\max\{\gamma_0,2\}$}. Moreover, let us define $\lambdamin := \max\{\lambda_1, -\inf\sigma(H)\}$.
Then, for all $z\in \CO\backslash [-\lambdamin,\infty)$  there holds true
\[
 \|(H_\ve-z)^{-1}  -(H-z)^{-1} \| \leq c \, \ve^{\de} \qquad 0<\de<1/2.
\] 
\end{theorem}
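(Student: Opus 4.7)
The strategy is to derive a Krein-type resolvent formula for $H_\ve$ analogous to the one that follows from Theorem \ref{prop:ham} for $H$, and then to compare the two formulas term by term. For the limit Hamiltonian one has the representation
\[
R(z) = R_0(z) - \mathcal{G}^{-z}\,[\Gamma^{-z}]^{-1}\,(\mathcal{G}^{\bar z})^{*},
\]
where $\Gamma^\lambda$ is the operator associated to the form $\Phi^\lambda$ from Theorem \ref{thm:main}(i) and $\mathcal{G}^\lambda$ is the three-hyperplane potential of Section \ref{secMain}. The goal is to cast $R_\ve(z)$ in the same shape, prove that the smeared objects converge to their zero-range counterparts in operator norm with rate $\ve^{\delta}$, and show that the Krein operator is uniformly invertible in $\ve$.

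First I would derive the resolvent formula for $H_\ve$. Writing each separable perturbation as $A_\ve^{(j)}\,g_\ve\,(A_\ve^{(j)})^{*}$ with $A_\ve^{(j)}\xi := S^{j}(\chi_\ve\otimes\xi)$, the second resolvent identity yields
\[
R_\ve(z) = R_0(z) - \widetilde{A}_\ve(z)\,[\ID + g_\ve\,\Theta_\ve(z)]^{-1}\,g_\ve\,\widetilde{A}_\ve(z)^{*},
\]
where $\widetilde{A}_\ve(z)$ collects the $R_0(z)A_\ve^{(j)}$ and $\Theta_\ve(z)$ is the $3\times 3$ operator matrix with entries $(A_\ve^{(j)})^{*}R_0(z)A_\ve^{(k)}$. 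The crucial algebraic observation, an immediate consequence of \eqref{gve}, is
\[
g_\ve^{-1} = -\frac{\ell}{4\pi\ve}\,\ID - \frac{1}{4\pi}\,\Gar,
\]
and the $\ve^{-1}$ divergence on the right-hand side exactly cancels the leading singular behaviour of the diagonal of $\Theta_\ve(z)$, which comes from $\int d\x\,d\x'\,\chi_\ve(\x)\chi_\ve(\x')\,|\x-\x'|^{-1} = \ell/(4\pi\ve) + o(1/\ve)$. The symmetry of $L^{2}_\text{sym}(\RE^{6})$ collapses the $3\times 3$ structure to a single scalar operator $\Gamma^{-z}_\ve$ on $L^{2}(\RE^{3})$ whose formal limit as $\ve\to 0$ is $\Gamma^{-z}$, and the formula can be recast as $R_\ve(z) = R_0(z) - \mathcal{G}_\ve^{-z}\,[\Gamma^{-z}_\ve]^{-1}\,(\mathcal{G}_\ve^{\bar z})^{*}$ for a suitable smeared potential $\mathcal{G}_\ve^{-z}$.

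With the two formulas sharing a common skeleton I would telescope
\[
R_\ve(z) - R(z) = (\mathcal{G}_\ve^{-z} - \mathcal{G}^{-z})\,[\Gamma^{-z}_\ve]^{-1}\,(\mathcal{G}_\ve^{\bar z})^{*} + \mathcal{G}^{-z}\big([\Gamma^{-z}_\ve]^{-1} - [\Gamma^{-z}]^{-1}\big)(\mathcal{G}_\ve^{\bar z})^{*} + \mathcal{G}^{-z}\,[\Gamma^{-z}]^{-1}\,(\mathcal{G}_\ve^{\bar z} - \mathcal{G}^{\bar z})^{*}
\]
and bound each factor separately. The potential differences are controlled by the quantitative mollifier estimate $\|\chi_\ve\ast f - f\|_{L^{2}}\leq c\,\ve^{\delta}\|f\|_{H^{2\delta}}$ for any $\delta<1/2$, combined with the trace/regularity results for elements of $\D(H)$ proved in Appendix \ref{app:A2}. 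The operator difference $\Gamma^{-z}_\ve - \Gamma^{-z}$ is handled by expanding $\Theta_\ve(z)$ to first order in $\ve$ with a quantitative remainder; this is where the $C^{2}$-regularity of $\theta$ granted by hypothesis \eqref{cos} is essential.

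The principal obstacle is the uniform-in-$\ve$ invertibility of $\Gamma^{-z}_\ve$, without which the telescoping estimate above is useless. This is precisely where the lower bound from Theorem \ref{t:KKK} and the definition of $\lambdamin$ enter: since $H_\ve + \lambdamin \geq 0$ uniformly in $\ve$, the associated Krein operator is uniformly coercive for $z<-\lambdamin$, yielding a uniform bound $\|[\Gamma^{-z}_\ve]^{-1}\|\leq C$ which then extends to all $z\in\CO\setminus[-\lambdamin,\infty)$ by a standard resolvent-identity and analytic-continuation argument. The twin thresholds $\gamma>\gamma_0$ and $\gamma>2$ serve distinct roles: $\gamma_0 = 3\pi\sqrt{\ell\ell'/2}$ is the threshold needed for the leading contribution to $\Gamma^{-z}_\ve$ (and hence for $\Phi^\lambda$) to be coercive, while the margin $\gamma>2$ provides the slack required to absorb the $O(\ve)$ corrections coming from $g_\ve$ uniformly in $\ve$. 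Once uniform invertibility is in place, assembling the factor-wise $\ve^{\delta}$ estimates in the telescoping decomposition gives $\|R_\ve(z) - R(z)\| \leq c\,\ve^{\delta}$.
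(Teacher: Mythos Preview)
Your high-level strategy---derive a Konno--Kuroda formula for $H_\ve$, telescope against the limiting resolvent, and bound each piece by $\ve^\delta$ after securing uniform invertibility of the Krein operator---is exactly what the paper does in Section~\ref{s:src}. Two points in your account of where the hypotheses enter need correction, however.

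First, you have the roles of the two thresholds on $\gamma$ mixed up. The condition $\gamma>\gamma_0$ is \emph{not} a coercivity threshold for the limit form $\Phi^\lambda$ (that threshold is $\gamma_c$ from \eqref{eq:gamma0}); rather, $\gamma_0$ depends on the mollifier $\chi$ through \eqref{ellellprime} and is what makes the \emph{approximating} form $\Phi_\ve^\lambda$ uniformly coercive (Lemma~\ref{l:Phive}), hence the Krein operator uniformly invertible (Lemma~\ref{l:apriori}). The condition $\gamma>2$ has nothing to do with absorbing $O(\ve)$ corrections from $g_\ve$: it enters only through Proposition~\ref{finefine} in Appendix~\ref{app:A2}, which upgrades $(\Gamma^\lambda)^{-1}G^{\lambda*}\phi$ from $H^1$ to $H^{3/2}$. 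That extra half-derivative is exactly what Lemma~\ref{l:GammaveGamma} needs to extract the rate $\ve^\delta$ on the middle telescoping term, and it is also where hypothesis \eqref{cos} is actually used.

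Second, the paper does not work with $\Gamma_\ve^\lambda$ and its inverse directly, because $\Gamma_\ve^\lambda$ contains $\Gamma_{\reg}$ and is therefore unbounded on $L^2(\RE^3)$. Instead it uses the factorisation $g_\ve=-\frac{4\pi\ve}{\ell}\nu_\ve\nu_\ve^*$ and the conjugated operator $\nu_\ve^*\Gamma_\ve^\lambda\nu_\ve$, which is bounded by identity \eqref{1922}; the Konno--Kuroda formula \eqref{Rve} and the convergence Lemma~\ref{l:GammaveGamma} are both written in terms of this conjugation. Your direct route is not wrong in principle---Lemma~\ref{l:Phive} does give a uniform lower bound for $\Phi_\ve^\lambda$ on its natural domain---but you would have to handle an unbounded Krein operator, and the four error terms \eqref{strawberry1}--\eqref{strawberry3} that the paper isolates via $\nu_\ve$ would reappear in a less transparent form.
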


\vs

\section{Analysis of the Quadratic Form}\label{secQuad}

\n
In this section, we prove closure and boundedness of the quadratic form $F$ defined by \eqref{f1}--\eqref{f4} for $\gamma >\gamma_c$ with $\gamma_c$ defined in \eqref{eq:gamma0}. To this end we first study the quadratic form $\Phi^\la$ in $L^2(\RE^3)$ given by \eqref{f3} and acting on the domain $\D(\Phi^\la)=H^{1/2}(\RE^3)$.
Recalling the definition of $\Gamma_\mathrm{reg}$  given  in \eqref{gamreg}  and using the fact that
\[
 \int\!\! d\yv \, \f{1}{y} |\xi(\yv)|^2 = \f{1}{2 \pi^2}  \int \!\! d\p\, d\q\,  \frac{\overline{\hat{\xi}(\p)} \hat{\xi}(\q)}{|\p - \q|^2 } 
\]
we write
\beq	\label{eq:phil}
\Phi^{\lambda}(\xi) = \Phi^{\lambda}_{\mathrm{diag}}({\xi})+\Phi^{\lambda}_\mathrm{off}({\xi})+ \Phi_\mathrm{reg} (\xi)  
\eeq

\n
where

\begin{align*}
	\Phi^{\lambda}_\mathrm{diag}({\xi})&=    \int \!\! d\p\, \sqrt{ \f{3}{4} p^2 +\lambda }\; |\hat{\xi}(\p)|^2 , 
	\\
	\Phi^{\lambda}_\mathrm{off}({\xi})&=-\frac1{\pi^2} \int \!\! d\p\, d\q\,  \frac{\overline{\hat{\xi}(\p)} \hat{\xi}(\q)}{p^2+q^2+\p \cdot \q +\lambda}, 
	 \\
	\Phi_\mathrm{reg} (\xi)&=\, \Phi^{(1)}_{\mathrm{reg}}(\xi)+\Phi_\mathrm{reg}^{(2)}(\xi) 
\end{align*}
with
\begin{align}\label{ay}
\Phi^{(1)}_{\mathrm{reg}}(\xi)&=\int\!\! d\y\,  a(y)  |\xi(\y)|^2 , \;\;\;\;\;\;\; 	a(y)= \beta + \f{\gamma}{y} (\theta (y)-1),\\
\Phi_\mathrm{reg}^{(2)}(\xi)&= \gamma \int\!\! d\yv \, \f{1}{y} |\xi(\yv)|^2 = \f{\gamma}{2 \pi^2}  \int \!\! d\p\, d\q\,  \frac{\overline{\hat{\xi}(\p)} \hat{\xi}(\q)}{|\p - \q|^2 }. \nonumber
\end{align}

\n
{Note that $a \in L^{\infty} (\RE^3)$ if we assume \eqref{cod} and $a, \nabla a \in L^{\infty}(\RE^3)$ if we choose \eqref{cos}}. 

\n
We will show that $\Phi^{\la}$ is equivalent to the $H^{1/2}$-norm. 
 First, we prove that $\Phi^\la(\xi)$ can be bounded from above by $\|\xi\|_{H^{1/2}}^2$. This is the content of the next proposition which ensures that $\Phi^\la$ is well defined on $\D(\Phi^\la)=H^{1/2}(\RE^3)$.

\begin{proposition}\label{prop:upper}
Assume \eqref{cod},  $\la >0$ and $\gamma>0$.  Then there exists $c>0$ such that
	\[
		\Phi^\la(\xi)\leq c\, \|\xi\|_{H^{1/2}}^2.
	\] 
\end{proposition}
\begin{proof}
	Using the bound (see \cite[ Remark 5.12]{K} or \cite{LS,Y}) 
	\begin{equation}\label{3.8a}
		 \int d\yv\,  \frac{| \xi(\yv)|^2}{y}\leq \frac{\pi}{2}\int d\pv\, p|\hat{\xi}(\pv)|^2
	\end{equation}
	we immediately get
	\[
		\Phi_{\mathrm{reg}}(\xi)\leq \|a\|_{L^{\infty}}  \|\xi\|^2 +\gamma\int d\yv\,  \frac{| \xi(\yv)|^2}{y}\leq \big( \|a\|_{L^{\infty}} +\gamma\f\pi2\big)\|\xi\|_{H^{1/2}}^2.
	\]
	Moreover (see \cite[Lemma 2.1]{FT2}),
	\be\label{stiof}
		\Phi^\la_\mathrm{off}({\xi})\leq \frac{2}{\pi^2}\int d\k_1 d\k_2\, \frac{k_1^{1/2}|\hat{\xi}(\k_1)|k_2^{1/2}|\hat{\xi}(\k_2)|}{k_1^{1/2}(k_1^2+k_2^2)k_2^{1/2}}\leq4\int d\k\, k|\hat{\xi}(\k)|^2\leq 4 \, \|\xi\|_{H^{1/2}}^2.
	\ee
	Since $\Phi^\la_\mathrm{diag}$ is clearly bounded by $\|\xi\|_{H^{1/2}}^2$, the thesis immediately follows from \eqref{eq:phil}.
	
\end{proof}

\vs

\n
The next step is  to bound from below $\Phi^\lambda$, which is our main technical result for the construction of the  Hamiltonian. Our main tool is the decomposition of the function $\hat\xi$ into partial waves (for an alternative  approach see, e.g., \cite{MS}).  
Then, we write 
\[
	\hat\xi(\p)=\sum_{\ell=0}^{+\infty}\sum_{m=-\ell}^\ell\hat\xi_{\ell m}(p)Y^\ell_m(\theta_\p,\varphi_\p)
\]
 where $Y^\ell_m$ denotes the Spherical Harmonics of order $\ell,m$ and $\p=(p,\theta_\p,\varphi_\p)$ in spherical coordinates. Accordingly, we find the following decomposition of the quadratic form $\Phi^\la$
\beq\label{eq:phi_dec}
	\Phi^\la(\xi)=\Phi^{(1)}_\mathrm{reg}(\xi)+\sum_{\ell=0}^{+\infty}\sum_{m=-\ell}^\ell \phi^\la_\ell(\hat\xi_{\ell m})
\eeq
where $\phi^\la_\ell$ is the quadratic form whose action on $g\in L^2((0,+\infty),p^2\sqrt{p^2+1} dp)$ is given by (see, e.g., \cite[Lemma 3.1]{CDFMT})
\[
	\phi^\la_\ell(g)=\phi^\la_{\mathrm{diag}}(g)+\phi^\la_{\mathrm{off},\ell}(g)+\phi^{(2)}_{\mathrm{reg},\ell}(g),
\]
where
\beq\label{eq:Fla}\begin{aligned}
	\phi^\la_{\mathrm{diag}}(g)=& \,  \int_0^{+\infty}dp\,p^2\sqrt{\frac{3}{4}p^2+\la}|g(p)|^2\\
	\phi^\la_{\mathrm{off},\ell}(g)=&\,-\f2\pi\int_0^{+\infty}dp_1\int_0^{+\infty}dp_2\,p_1^2\overline{g(p_1)}p_2^2 g(p_2)\int_{-1}^1dy\,\frac{P_\ell(y)}{p_1^2+p_2^2+p_1p_2y+\la}\\
	\phi_{\mathrm{reg},\ell}(g)=&\frac\gamma\pi \int_0^{+\infty}dp_1\int_0^{+\infty}dp_2\, p_1^2\overline{g(p_1)}p_2^2 g(p_2)\int_{-1}^1dy\, \frac{P_\ell(y)}{p_1^2+p_2^2-2p_1p_2y}
\end{aligned}\eeq
with $P_\ell(y)=\frac1{2^\ell \ell!}\frac {d^\ell}{dy^\ell}(y^2-1)^\ell$ the Legendre polynomial of degree $\ell$.
\\In the next lemma, we investigate the sign of $\phi^\la_{\mathrm{off},\ell.}$

\begin{lemma}\label{lm:phi_off} 
	Let $g\in L^2(\RE^+ ,p^2\sqrt{p^2+1}\,dp)$ and $\lambda >0$. Then,
\beq\label{eq:bound_theta_off}\begin{gathered}
	\phi_{\mathrm{off},\ell}^{0}(g)\geq \phi_{\mathrm{off},\ell}^{\la}(g)\geq 0\quad\text{for $\ell$ odd,}\\
	0\geq \phi_{\mathrm{off},\ell}^{\la}(g)\geq \phi_{\mathrm{off},\ell}^{0}(g)\quad\text{for $\ell$ even}
\end{gathered}\eeq
\end{lemma}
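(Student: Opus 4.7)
The plan is to extract both the sign and the monotonicity in $\lambda$ from a single integral representation. I would begin by applying the Schwinger identity $\frac{1}{X+\lambda} = \int_0^\infty e^{-t(X+\lambda)}\,dt$ to the kernel in the definition of $\phi^\lambda_{\mathrm{off},\ell}$ and interchanging the order of integration. The inner $y$-integral then becomes
\[
\int_{-1}^1 dy\, P_\ell(y)\, e^{-t p_1 p_2 y} = 2(-1)^\ell\, i_\ell(tp_1 p_2),
\]
where $i_\ell$ is the modified spherical Bessel function of the first kind. This identity follows from the standard Legendre expansion $e^{sz} = \sum_\ell (2\ell+1)\, i_\ell(s)\, P_\ell(z)$ together with $P_\ell(-y) = (-1)^\ell P_\ell(y)$ and the orthogonality of the Legendre polynomials.

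The crucial observation is that $i_\ell$ has the power-series expansion $i_\ell(s) = \sum_{k\geq 0} a_{k,\ell}\, s^{2k+\ell}$ with \emph{strictly positive} coefficients $a_{k,\ell}$. Since $e^{-t(p_1^2+p_2^2)} = e^{-tp_1^2} e^{-tp_2^2}$, plugging in the series separates the variables and turns the double $(p_1,p_2)$-integral into a sum of squares of one-dimensional Laplace-type integrals:
\[
\iint dp_1 dp_2\, p_1^2 p_2^2\, \overline{g(p_1)} g(p_2)\, e^{-t(p_1^2+p_2^2)} i_\ell(tp_1p_2) = \sum_{k\geq 0} a_{k,\ell}\, t^{2k+\ell} \left|\int_0^\infty dp\, p^{2+2k+\ell} g(p)\, e^{-tp^2}\right|^2 \geq 0.
\]
Inserting this back into $\phi^\lambda_{\mathrm{off},\ell}$ gives the representation
\[
\phi_{\mathrm{off},\ell}^\lambda(g) = \frac{4(-1)^{\ell+1}}{\pi}\sum_{k\geq 0} a_{k,\ell} \int_0^\infty dt\, e^{-t\lambda}\, t^{2k+\ell} \left|\int_0^\infty dp\, p^{2+2k+\ell} g(p)\, e^{-tp^2}\right|^2 ,
\]
the product of the sign $(-1)^{\ell+1}$ and a manifestly nonnegative quantity that is pointwise decreasing in $\lambda$ because $e^{-t\lambda}$ is.

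Both statements in \eqref{eq:bound_theta_off} follow at once. For $\ell$ odd the prefactor is $+1$, so $\phi_{\mathrm{off},\ell}^\lambda(g) \geq 0$ and is decreasing in $\lambda$, giving $\phi_{\mathrm{off},\ell}^0(g) \geq \phi_{\mathrm{off},\ell}^\lambda(g) \geq 0$; for $\ell$ even the prefactor flips sign and both inequalities reverse. The only real technical point is the legitimacy of the various interchanges of sum and integral, but this is painless: once $(-1)^{\ell+1}$ is pulled out, every remaining integrand is nonnegative and Tonelli applies. Convergence at $t=0$ is delivered by the factor $t^{2k+\ell}$ combined with the $s^\ell$ leading behaviour of $i_\ell$; convergence at $t=\infty$ follows from the Gaussian decay $e^{-tp^2}$, which makes the Laplace-type inner integrals decay as a negative power of $t$, reinforced by $e^{-t\lambda}$ when $\lambda>0$ and recovered for $\lambda=0$ by monotone convergence from the case $\lambda>0$.
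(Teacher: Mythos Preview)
Your proof is correct and follows essentially the same route as the paper's: both introduce a Laplace/Schwinger parameter, separate the $(p_1,p_2)$-integral into a sum of squares, and read off the sign $(-1)^{\ell+1}$ together with the monotonicity in $\lambda$ from the factor $e^{-t\lambda}$. The only cosmetic difference is the order of operations and the handling of the $y$-integral: the paper first expands the kernel as a geometric series in $p_1p_2y/(p_1^2+p_2^2+\lambda)$, computes $\int_{-1}^1 y^jP_\ell(y)\,dy$ via the Rodrigues formula, and then applies the Gamma-integral representation of $(p_1^2+p_2^2+\lambda)^{-j-1}$; you apply Schwinger first and then recognize $\int_{-1}^1 P_\ell(y)e^{sy}\,dy=2\,i_\ell(s)$ with its manifestly positive power series. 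The two computations produce the identical representation, your $\frac{4(-1)^{\ell+1}}{\pi}a_{k,\ell}$ being exactly the paper's $B_{\ell,2k+\ell}$.
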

\begin{proof}
	The proof follows \cite{CDFMT}. For the sake of completeness, we give the details below. 
First, we rewrite
\[\begin{aligned}
		\phi^\la_{\mathrm{off},\ell}(g)=&-\frac2\pi\sum_{j=0}^{+\infty}(-1)^j\int_0^{+\infty}dp_1\int_0^{+\infty}dp_2\,\frac{p_1^{2+j}\overline{g(p_1)}p_2^{2+j}g(p_2)}{(p_1^2+p_2^2+\la)^{j+1}}\int_{-1}^1dy\,y^jP_\ell(y)\\
			=&-\frac2{\pi 2^\ell \ell!}\sum_{j=0}^{+\infty}(-1)^j\int_0^{+\infty}dp_1\int_0^{+\infty}dp_2\,\frac{p_1^{2+j}\overline{g(p_1)}p_2^{2+j}g(p_2)}{(p_1^2+p_2^2+\la)^{j+1}}\int_{-1}^1dy\,y^j\frac{d^\ell}{dy^\ell}(y^2-1)^\ell\\
			=&-\frac2{\pi 2^\ell \ell!}\sum_{j=0}^{+\infty}(-1)^{j}\int_0^{+\infty}dp_1\int_0^{+\infty}dp_2\,\frac{p_1^{2+j}\overline{g(p_1)}p_2^{2+j}g(p_2)}{(p_1^2+p_2^2+\la)^{j+1}}\int_{-1}^1dy\,\Big(\frac{d^\ell}{dy^\ell}y^j\Big)(1-y^2)^\ell
	\end{aligned}\]
	where in the last line we integrated by parts $\ell$ times. Next, we note that
	\[
		\frac1{(p_1^2+p_2^2+\la)^{j+1}}=\frac1{j!}\int_0^{+\infty}d\nu\,\nu^j e^{-(p_1^2+p_2^2+\la)\nu},
	\]
	hence,
	\beq\label{eq:theta_off_B}
		\phi^\la_{\mathrm{off},\ell}(g)=\sum_{j=0}^{+\infty}B_{\ell j}\int_0^{+\infty}d\nu\,\nu^j e^{-\la\nu}\left|\int_0^{+\infty}dp\,p^{2+j}g(p)e^{-p^2\nu}\right|^2
	\eeq
	with
	\[
		B_{\ell j}=-\frac2{\pi 2^\ell \ell!}\frac{(-1)^{j}}{j!}\int_{-1}^1dy\,\Big(\frac{d^\ell}{dy^\ell}y^j\Big)(1-y^2)^\ell.
	\]
	It's easy to see that $B_{\ell j}=0$ if $\ell$ and $j$ do not have the same parity. Moreover, $B_{\ell j}\leq0$ if $\ell,j$ are even and $B_{\ell j}\geq0$ if $\ell,j$ are odd. Then, \eqref{eq:theta_off_B} yields \eqref{eq:bound_theta_off}.
	
\end{proof}

\vs

\n
Thanks to Lemma \ref{lm:phi_off}, in order to obtain a lower bound we can neglect $\phi^\la_{\mathrm{off},\ell}$ with $\ell$ odd and focus on $\phi^0_{\mathrm{off},\ell}$ which control $\phi^\la_{\mathrm{off},\ell}$ with $\ell$ even. In the next lemma we show that $\phi^0_{\mathrm{off},\ell}$ and $\phi_{\mathrm{reg},\ell}$ can be diagonalized, note that we also include the analysis of $\phi^0_{\mathrm{off},\ell}$ with $\ell$ odd for later convenience.

\begin{lemma} \label{lemmavecchio}
Let $g$ be a real analytic function whose Taylor expansion near the origin 
\[
g(y) =\sum_{n=0}^\infty c_n \, y^n \qquad c_n\geq 0
\]
has  a radius of convergence bigger or equal than one. Define
\[
a_\ell = \int_{-1}^{1} P_\ell (y) \, g(y) \, dy, \quad \ell \in \NA
\] 
with $P_\ell$ being the Legendre polynomials. Then for any $\ell \in \NA$, we have
\[
a_\ell \geq 0 \qquad \qquad a_{\ell +2}\leq a_{\ell}.
\]
\end{lemma}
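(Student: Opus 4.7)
The plan is to exploit linearity together with the sign condition $c_n \geq 0$: writing $g(y)=\sum_{n\geq 0} c_n y^n$ and exchanging the (non-negative) sum with the integral via Tonelli, one obtains $a_\ell=\sum_{n\geq 0} c_n \, a_\ell^{(n)}$ with $a_\ell^{(n)}:=\int_{-1}^1 y^n P_\ell(y)\,dy$. Both conclusions will therefore follow if I prove them for each monomial coefficient $a_\ell^{(n)}$, which reduces the problem to an elementary computation.

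First I would compute $a_\ell^{(n)}$ in closed form by applying Rodrigues' formula $P_\ell(y) = \frac{1}{2^\ell \ell!}\frac{d^\ell}{dy^\ell}(y^2-1)^\ell$ and integrating by parts $\ell$ times. All boundary terms vanish because $(y^2-1)^\ell$ has a zero of order $\ell$ at $y=\pm1$, so one is left with $a_\ell^{(n)}=0$ for $n<\ell$, and for $n\geq\ell$
\[
a_\ell^{(n)} = \frac{n!}{2^\ell\,(n-\ell)!}\int_{-1}^1 y^{n-\ell}(1-y^2)^\ell\,dy.
\]
This vanishes when $n-\ell$ is odd by antisymmetry, and when $n-\ell=2k$ is even a Beta function evaluation gives the strictly positive value $\frac{n!\,\Gamma(k+1/2)}{2^\ell\,(n-\ell)!\,\Gamma(k+\ell+3/2)}$. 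In either case $a_\ell^{(n)}\geq 0$, so summing against the non-negative $c_n$ yields $a_\ell\geq 0$.

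For the monotonicity $a_{\ell+2}^{(n)}\leq a_\ell^{(n)}$ I would treat the degenerate cases first: if $n<\ell+2$ or $n-\ell$ is odd, then at least one of the two coefficients vanishes and the inequality is immediate. The only real case is $n=\ell+2k$ with $k\geq 1$, where a direct computation of the ratio of the two closed-form expressions, using $\Gamma(k+1/2)=(k-1/2)\Gamma(k-1/2)$ and cancelling a factor $2k-1$, collapses to
\[
\frac{a_{\ell+2}^{(n)}}{a_\ell^{(n)}} \;=\; \frac{(n-\ell)(n-\ell-1)}{4\,(k-1/2)(k+\ell+3/2)} \;=\; \frac{2k}{2k+2\ell+3} \;\in\;(0,1).
\]
Summing this monomial inequality against the weights $c_n\geq 0$ then yields $a_{\ell+2}\leq a_\ell$. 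The main technical obstacle is really only the edge-case bookkeeping (which monomials contribute zero to which $a_\ell$) together with some care in justifying the interchange of sum and integral near the boundary of the disc of convergence; beyond Rodrigues' formula and a standard Beta-function identity, no deeper ingredients are required.
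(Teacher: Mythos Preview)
Your proof is correct. For the first claim ($a_\ell \geq 0$) you and the paper proceed identically via Rodrigues' formula and $\ell$-fold integration by parts. For the monotonicity $a_{\ell+2} \leq a_\ell$ the routes diverge: instead of computing each $a_\ell^{(n)}$ in closed form, the paper integrates by parts two more times in the expression for $a_{\ell+2}$ and obtains directly
\[
a_{\ell+2} = a_\ell + (-1)^\ell \frac{2\ell+3}{2^{\ell+1}(\ell+1)!}\int_{-1}^1 (y^2-1)^{\ell+1}\,g^{(\ell)}(y)\,dy,
\]
after which the correction term is seen to be $\leq 0$ by the same positivity argument already used for the first claim. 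This bypasses the Beta/Gamma bookkeeping entirely. Your approach is more computational but delivers the extra quantitative information $a_{\ell+2}^{(n)}/a_\ell^{(n)} = 2k/(2k+2\ell+3)$ for each contributing monomial, which is pleasant though not needed here.

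One small slip: your intermediate formula for $a_\ell^{(n)}$ before the Beta evaluation is missing a factor $1/\ell!$ from Rodrigues' formula; it should read $\dfrac{n!}{2^\ell\,\ell!\,(n-\ell)!}\displaystyle\int_{-1}^1 y^{n-\ell}(1-y^2)^\ell\,dy$. Since the Beta integral itself equals $\ell!\,\Gamma(k+\tfrac12)/\Gamma(k+\ell+\tfrac32)$, the missing $\ell!$ cancels and your final closed form and ratio are unaffected. Also, the invocation of Tonelli is not quite right as stated (the integrand $c_n y^n P_\ell(y)$ is not sign-definite); what actually justifies the interchange is that you have already shown $a_\ell^{(n)}\geq 0$, so $\sum_n c_n a_\ell^{(n)}$ is a series of nonnegative terms, and then Fubini applies once you check absolute summability.
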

\begin{proof}
Using  $P_\ell(y)=\frac1{2^\ell \ell!}\frac {d^\ell}{dy^\ell}(y^2-1)^\ell$ and integrating by parts, one easily
prove the first claim. Integrating by parts again, one finds
\[
a_{\ell+2}=a_\ell
+(-1)^\ell \frac{1+2(\ell+1)}{2^{\ell+1}(\ell+1)!}\int_{-1}^1dy\,(y^2-1)^{\ell+1}\frac{d^\ell g}{dy^\ell}(y)
\]
and the monotonicity follows from the first claim.

\end{proof}

\vs 

\begin{lemma}\label{lm:S}
	Let $g\in L^2(\RE^+ ,p^2\sqrt{p^2+1}dp)$. Then,
	\[
		 \phi_{\mathrm{off},\ell}^0(g)=\int_\mathbb{R}dk|g^\sharp(k)|^2S_{\mathrm{off},\ell}(k), \qquad \phi_{\mathrm{reg},\ell}(g)=\int_\mathbb{R}dk |g^\sharp(k)|^2 S_{\mathrm{reg},\ell}(k)
	\]
	where 
\[
g^\sharp(k)=\frac1{\sqrt{2\pi}}\int_\mathbb{R}dx\, e^{-ikx}e^{2x}g(e^x), 
\] 

\beq\label{eq:Soff}
	S_{\mathrm{off},\ell}(k)=- 2\int_{-1}^1dy\,P_\ell(y)\f{ \sinh \Big( \!  k\arccos \big(\frac y2\big)\! \Big)}{ \sqrt{1- \frac {y^2}4 } \sinh (\pi k)}=
	\begin{cases}
		\displaystyle{-\int_{-1}^{1}\!\!\! dy\,  P_\ell(y) \, \,\f{  \cosh \big(k \arcsin (\frac y2) \big) }{ \sqrt{1- \frac{y^{2}}4 } \cosh (k \frac\pi 2)}}\quad& \text{for $\ell$ even,}\\
		\vspace{0.1cm}\\
		\displaystyle { \int_{-1}^{1}\!\!\! dy\,  P_\ell(y) \, \,\f{    \sinh (k \arcsin (\frac y2) )}{ \sqrt{1- \frac{y^{2}}4 }  \sinh (k \frac\pi2)}}  &  \text{for $\ell$ odd}
	\end{cases}
\eeq
and
\beq\label{eq:Sreg}
	S_{\mathrm{reg},\ell}(k)=\gamma\int_{-1}^1dy\,P_\ell(y)\frac{\sinh\big(k\arccos(-y)\big)}{\sqrt{1-y^2}\sinh(k\pi)}=
	\begin{cases}
		\displaystyle{\frac\gamma2\int_{-1}^1dy P_\ell(y)\frac{\cosh(k\arcsin(y))}{\sqrt{1-y^2}\cosh\big(k\frac\pi2\big)}}\quad&\text{for $\ell$ even,}\\
		\vspace{0.1cm}\\
		\displaystyle{\frac\gamma2\int_{-1}^1dy P_\ell(y)\frac{\sinh(k\arcsin(y))}{\sqrt{1-y^2}\sinh\big(k\frac\pi2\big)}}&\text{for $\ell$ odd}.
	\end{cases}
\eeq
Moreover,
\beq\label{mono1}
	S_{\mathrm{off},\ell}(k)\leq S_{\mathrm{off},\ell+2}(k)\leq0 \quad \text{for $\ell$ even,}\qquad S_{\mathrm{off},\ell}(k)\geq S_{\mathrm{off},\ell+2}(k)\geq0\quad \text{for $\ell$ odd}
\eeq
and
\beq \label{mono2}
	 S_{\mathrm{reg},\ell}(k)\geq S_{\mathrm{reg},\ell+2}(k)\geq0\quad \text{ $\forall\ell\geq0$ .}
\eeq
\end{lemma}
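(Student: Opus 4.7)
The strategy is to exploit the scale invariance of both kernels by the substitution $p_i = e^{x_i}$, which turns the integral operators into convolutions in $x_1 - x_2$, and then to diagonalize them via the Fourier transform in $x$ (i.e.\ Mellin transform in $p$). Setting $h(x) := e^{2x} g(e^x)$, the identity $p_1^2 + p_2^2 \pm 2p_1 p_2\,y = e^{x_1+x_2}(2\cosh(x_1-x_2)\pm 2y)$ together with $p_i^2\, dp_i = e^{3x_i}\,dx_i$ recasts $\phi^0_{\mathrm{off},\ell}(g)$ and $\phi_{\mathrm{reg},\ell}(g)$ as convolution quadratic forms $\int\!\overline{h(x_1)}h(x_2)\,K_\ell(x_1-x_2)\,dx_1 dx_2$, with $K_\ell(x) = \int_{-1}^{1} P_\ell(y)/(2\cosh x + y)\,dy$ or $\int_{-1}^{1} P_\ell(y)/(2\cosh x - 2y)\,dy$ respectively. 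Since $g^\sharp = \widehat{h}$, the Plancherel/convolution theorem then yields the required diagonal expression $\int |g^\sharp(k)|^2 S(k)\,dk$, with $S(k)$ proportional to $\widehat{K_\ell}(k)$.

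Second, I would compute $\widehat{K_\ell}$ by contour integration. The meromorphic function $(2\cosh x + 2\cos\alpha)^{-1}$ has simple poles at $x = \pm i\alpha + 2\pi i n$ with residues $\pm 1/(2i\sin\alpha)$. Closing the contour for $k>0$ in the lower half-plane and summing the geometric series over $n$ produces the classical identity
\[
\int_{-\infty}^{+\infty}\!\!\frac{e^{-ikx}}{2\cosh x + 2\cos\alpha}\,dx \;=\; \frac{\pi}{\sin\alpha}\cdot\frac{\sinh(k(\pi - \alpha))}{\sinh(\pi k)},\qquad \alpha\in(0,\pi).
\]
Applying this with $\cos\alpha = y/2$ (so $\pi - \alpha = \arccos(y/2)$, $\sin\alpha = \sqrt{1 - y^2/4}$) gives the first form of $S_{\mathrm{off},\ell}$ in \eqref{eq:Soff}; applying it with $\cos\alpha = -y$ gives the first form of $S_{\mathrm{reg},\ell}$ in \eqref{eq:Sreg}. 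To pass to the $\arcsin$-form I would split $\int_{-1}^{1}=\int_0^1+\int_{-1}^0$, send $y\mapsto -y$ in the second piece, and use $P_\ell(-y) = (-1)^\ell P_\ell(y)$ together with $\arccos(-t) = \pi - \arccos(t)$. Writing $u=\arccos(y/2)$, the combination
\[
\sinh(ku) + (-1)^\ell \sinh(k(\pi - u))
\]
reduces via hyperbolic addition formulas to $2\sinh(k\pi/2)\cosh(k(\pi/2 - u))$ if $\ell$ is even and to $-2\cosh(k\pi/2)\sinh(k(\pi/2 - u))$ if $\ell$ is odd; dividing by $\sinh(\pi k)=2\sinh(k\pi/2)\cosh(k\pi/2)$ and using $\pi/2 - \arccos(y/2) = \arcsin(y/2)$ produces the second expressions in \eqref{eq:Soff}. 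Parity of the integrand in $y$ then lets us re-extend the integral back to $(-1,1)$. The same manipulation handles $S_{\mathrm{reg},\ell}$.

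Finally, for the monotonicity statements \eqref{mono1}--\eqref{mono2} I would apply Lemma \ref{lemmavecchio} to
\[
f_{\mathrm{e}}(y)=\frac{\cosh(k\arcsin(y/2))}{\sqrt{1-y^2/4}\,\cosh(k\pi/2)},\qquad f_{\mathrm{o}}(y)=\frac{\sinh(k\arcsin(y/2))}{\sqrt{1-y^2/4}\,\sinh(k\pi/2)},
\]
and the analogous pair with $\arcsin(y)$ in place of $\arcsin(y/2)$ for $S_{\mathrm{reg},\ell}$. Each such $f$ must be shown to be real analytic on a disk of radius $\geq 1$ (in fact $\geq 2$) with non-negative Taylor coefficients at the origin. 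For $f_{\mathrm{e}}$ this is immediate: $\arcsin(u)$ has non-negative coefficients, so does its composition with $\cosh(z)=\sum z^{2n}/(2n)!$, and $(1-y^2/4)^{-1/2}$ has non-negative coefficients; products of non-negative series are non-negative, while $\cosh(k\pi/2)>0$. The genuinely delicate point, and the main technical obstacle, is the ratio $\sinh(k\arcsin(y/2))/\sinh(k\pi/2)$ appearing in $f_{\mathrm{o}}$, since neither numerator nor denominator is automatically a non-negative power series in $k$. I would handle this by writing $\sinh(kw)/(kw) = \sum_{n\geq 0}(kw)^{2n}/(2n+1)!$ and expressing the ratio as
\[
\frac{\sinh(k\arcsin(y/2))}{\sinh(k\pi/2)} \;=\; \frac{\arcsin(y/2)}{\pi/2}\cdot \frac{\sum_{n\geq 0}(k\arcsin(y/2))^{2n}/(2n+1)!}{\sum_{n\geq 0}(k\pi/2)^{2n}/(2n+1)!}.
\]
For any fixed real $k$ the denominator is a positive constant, while the numerator is a non-negative power series in $y$ (being a positive linear combination of even powers of the non-negative series $\arcsin(y/2)$), and the prefactor $\arcsin(y/2)/(\pi/2)$ is itself a non-negative series; thus $f_{\mathrm{o}}$ has non-negative Taylor coefficients. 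Lemma \ref{lemmavecchio} then yields $a_\ell \geq 0$ and $a_{\ell+2}\leq a_\ell$ for the Legendre moments $a_\ell=\int_{-1}^1 P_\ell f$; inserting the appropriate sign from \eqref{eq:Soff}--\eqref{eq:Sreg} ($-1$ for even $\ell$ in $S_{\mathrm{off},\ell}$, $+1$ otherwise) produces exactly \eqref{mono1}--\eqref{mono2}.
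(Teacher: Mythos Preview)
Your approach mirrors the paper's: the exponential substitution $p_i=e^{x_i}$ turning the kernels into convolutions, Fourier/Mellin diagonalization, the $\arccos\to\arcsin$ parity split, and the appeal to Lemma~\ref{lemmavecchio} for the monotonicity are exactly what the paper does. The only methodological difference is that you compute the Fourier integral by residues while the paper cites \cite[p.~511]{GR}.

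One correction is needed in that residue computation. The poles of $(2\cosh x+2\cos\alpha)^{-1}$ lie at $x=\pm i(\pi-\alpha)+2\pi i n$, \emph{not} at $\pm i\alpha+2\pi i n$: indeed $\cosh(i\alpha)=\cos\alpha$, so $2\cosh(i\alpha)+2\cos\alpha=4\cos\alpha\neq0$. Summing the residues at the correct poles gives
\[
\int_{\RE}\frac{e^{-ikx}}{2\cosh x+2\cos\alpha}\,dx=\frac{\pi}{\sin\alpha}\,\frac{\sinh(k\alpha)}{\sinh(\pi k)},\qquad \alpha\in(0,\pi),
\]
with $\alpha$ rather than $\pi-\alpha$ upstairs. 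Your parenthetical ``$\cos\alpha=y/2$, so $\pi-\alpha=\arccos(y/2)$'' is likewise off (if $\cos\alpha=y/2$ then $\alpha=\arccos(y/2)$). The two slips cancel and you land on the correct $S_{\mathrm{off},\ell}$, but the intermediate steps as written are wrong and should be repaired.

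A minor remark on the ``genuinely delicate point'': it is not actually delicate. For each fixed real $k\neq0$ the quantity $\sinh(k\pi/2)$ is a nonzero constant, and the Taylor coefficients (in $y$) of $\sinh(k\arcsin(y/2))$ are non-negative multiples of odd powers of $k$, hence all share the sign of $\sinh(k\pi/2)$; the ratio therefore has non-negative $y$-coefficients directly, with no need to rewrite it. The paper simply asserts that $\cosh(k\arcsin(y/2))/\sqrt{1-y^2/4}$ and $\sinh(k\arcsin(y/2))/\sqrt{1-y^2/4}$ have positive coefficients and invokes Lemma~\ref{lemmavecchio}.
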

\begin{proof}
The proof is similar to \cite[Lemma 3.3, 3.5]{CDFMT}. 
For reader's convenience we give the details below. With the change of variables $p_1=e^{x_1}$ and $p_2=e^{x_2}$, 
we rewrite
	\[\begin{aligned}
		\phi^0_{\mathrm{off},\ell}(g)=&-\frac2\pi \int_{\mathbb{R}}dx_1\int_{\mathbb{R}}dx_2\, e^{3x_1}\overline{g(e^{x_1})}e^{3x_2}g(e^{x_2})\int_{-1}^1dy\,\frac{P_\ell(y)}{e^{2x_1}+e^{2x_2}+e^{x_1+x_2}y}\\
			=&-\frac1\pi\int_{\mathbb{R}}dx_1\,e^{2x_1}\overline{g(e^{x_1})}\int_{\mathbb{R}}dx_2\,e^{2x_2}g(e^{x_2})\int_{-1}^1dy\,\frac{P_\ell(y)}{\cosh(x_1-x_2)+\frac y2}.
	\end{aligned}\]
	Taking the Fourier transform we get
	\[\begin{aligned}
		\phi^0_{\mathrm{off},\ell}(g)=\int_{\mathbb{R}}dk\, |g^\sharp (k)|^2S_{\mathrm{off},\ell}(k)
	\end{aligned}\]
	with (see, e.g., \cite[p. 511]{GR})
\begin{equation}\label{Soffpg12}
		S_{\mathrm{off},\ell}(k)=-\frac1\pi\int_\mathbb{R} \!\! dx\, e^{-i kx}  \int_{-1}^1dy\,\f{P_\ell(y)}{\cosh (x) + \frac y2}  =  - 2\int_{-1}^1dy\,P_\ell(y)\f{ \sinh \Big( \!  k\arccos \big(\frac y2\big)\! \Big)}{ \sqrt{1- \frac {y^2}4 } \sinh (\pi k)}.
\end{equation}
Proceeding analogously for $\phi_{\mathrm{reg},\ell}(g)$ we find
\[
		\phi_{\mathrm{reg},\ell}(g)=\int_{\mathbb{R}}dk\, |g^\sharp (k)|^2S_{\mathrm{reg},\ell}(k)
	\]
	where
	\begin{equation}\label{Sregpg12}
		S_{\mathrm{reg},\ell}(k)=\frac\gamma{2\pi}\int_{\mathbb{R}}dx\,e^{-ipx}\int_{-1}^1dy\,\frac{P_\ell(y)}{\cosh(x)-y}=\gamma\int_{-1}^1dy\,P_\ell(y)\frac{\sinh\big(k\arccos(-y)\big)}{\sqrt{1-y^2}\sinh(k\pi)}.	\end{equation}
	Finally, noting that $\sinh(k\pi)=2\sinh(k\frac\pi2)\cosh(k\frac\pi2 )$ and
	\[\begin{aligned}
		\sinh \big(k\arccos(a)\big)&=\sinh\Big(k\frac\pi2-k\arcsin(a)\Big)\\
			&=\sinh\Big(k\frac\pi2\Big)\cosh(k\arcsin(a))-\cosh\Big(k\frac\pi2\Big)\sinh(k\arcsin(a))
	\end{aligned}\]
	and recalling that $P_\ell$ has the same parity of $\ell$ we get \eqref{eq:Soff} and \eqref{eq:Sreg}. 
	
	\n
	In order to prove the monotonicity properties \eqref{mono1}, it is sufficient to notice that the Taylor expansions of
\[
\f{ \cosh \big(k \arcsin (\frac y2) \big) }{ \sqrt{1- \frac{y^{2}}4 }} \qquad \qquad \f{ \sinh \lf( k \arcsin (\frac y2) \ri)  }{ \sqrt{1- \frac{y^{2}}4 } },
\]
have positive coefficients and invoke Lemma \ref{lemmavecchio}. A dilation of a factor 2 preserve the positivity of the coefficients and then
also \eqref{mono2} follows from Lemma \ref{lemmavecchio}.

\end{proof}

\vs

\noindent
Notice that 
\beq \begin{aligned} \label{coso}
	\phi^0_{\mathrm{diag}}(g)=&\frac{\sqrt{3}}{2}\int_0^{+\infty}dp\,p^3|g(p)|^2=\frac{\sqrt{3}}2 \int_{\mathbb{R}}dx\,e^{4x}|g(e^x)|^2=\frac{\sqrt{3}}2 \int_{\mathbb{R}}dk\,|g^\sharp(k)|^2.
\end{aligned}\eeq
Then we rewrite the quadratic form as 
\beq \label{diagonale}
\phi^0_\ell (g)=  \int_{\mathbb{R}}dk\,|g^\sharp(k)|^2 \, S_\ell (k),
\eeq
where
\beq \label{Stotale}
 S_\ell (k) = \f{\sqrt3}{2}+ S_{\mathrm{off},\ell}(k) + S_{\mathrm{reg},\ell}(k).
\eeq
Notice also, see \eqref{coso},  that if $\hat \xi (\p)= g(p) \, Y^{\ell}_m (\theta_\p, \varphi_\p)$ then $\| g^\sharp\|_{L^2(\RE)}= 
\| \xi\|_{\dot H^{1/2}(\RE^3)}$.
Equations \eqref{diagonale} and \eqref{Stotale} suggest that we need to ensure that $S_\ell$ is positive for $\ga>\ga_c$.

\n
From Lemma \ref{lm:S} and \cite[Lemma 3.2]{BT}, it follows that for any $\ell\geq2$ even
\beq\label{eq:S2}
	S_{\mathrm{off},\ell}(k)\geq S_{\mathrm{off},2}(k)\geq -B
\eeq
where 
\[
	B=\Big(\frac{50}{27}\pi-\frac{10}{3}\sqrt{3}+\frac{\sqrt{11}}{9}-\frac{10}{9}\arcsin\Big(\frac1{\sqrt{12}}\Big)\Big) \simeq 0.087.
\]
This is enough to control $\phi^0_{\mathrm{off},\ell}$ for $\ell\geq 2$ even with $\phi^\la_\mathrm{diag}$ as we will show in Proposition \ref{prop:lower}. It remains therefore to study what happens in $s-$wave. To this end, we introduce the auxiliary quadratic form $\Theta^\lambda_s(g)$, $s\in (0,1)$ acting on 
$L^2(\RE^+ ,p^2\sqrt{p^2+1}dp)$ and defined by
\[
\Theta^\lambda_s(g)=  s \, \phi^\la_\text{diag}(g)+ \phi^{0}_\text{off,0}(g) + \phi_{\mathrm{reg},0}(g). \\
\]
A key ingredient of the proof of closure and boundedness from below of $\Phi^\la$ is the following lemma.

\begin{lemma}\label{lemma:theta}
Let $g\in L^2(\RE^+ ,p^2\sqrt{p^2+1}dp)$ and $\ga>\ga_c$. Then there exists $s^* \in (0,1)$ such that $\Theta^{\lambda}_{s^*}(g) \geq 0$ for any $\la >0$.
\end{lemma}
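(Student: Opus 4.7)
The plan is to reduce the nonnegativity of $\Theta^\lambda_{s^*}$ to a scalar pointwise estimate in the Mellin variable $k$ and then to exploit the triple-angle identity for $\sinh$ to pointwise dominate the off-diagonal symbol by the regularizing one.

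First I would monotonize in $\lambda$: the elementary bound $\sqrt{\tfrac{3}{4}p^2+\lambda}\geq \tfrac{\sqrt{3}}{2}p$ implies $\phi^\lambda_{\mathrm{diag}}(g)\geq \phi^0_{\mathrm{diag}}(g)$ for every $\lambda>0$, so it suffices to prove $\Theta^0_{s^*}(g)\geq 0$. Combining \eqref{coso} with Lemma \ref{lm:S} applied to $\ell=0$ diagonalizes all three terms of $\Theta^0_{s^*}$ simultaneously, and the problem reduces to finding $s^*\in(0,1)$ such that
\[
 s^*\,\frac{\sqrt{3}}{2}+S_{\mathrm{off},0}(k)+S_{\mathrm{reg},0}(k)\geq 0 \qquad \forall\, k\in\mathbb{R}.
\]
The two $\ell=0$ integrals in \eqref{eq:Soff}--\eqref{eq:Sreg} are elementary: the substitutions $y/2=\sin\phi$ and $y=\sin\phi$ respectively give
\[
 S_{\mathrm{off},0}(k)=-\frac{4\sinh(k\pi/6)}{k\cosh(k\pi/2)}, \qquad S_{\mathrm{reg},0}(k)=\frac{\gamma\tanh(k\pi/2)}{k}.
\]

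The crux of the argument, and the only step I expect to require any insight, is to notice that the hyperbolic triple-angle identity $\sinh(k\pi/2)=3\sinh(k\pi/6)+4\sinh^{3}(k\pi/6)$ yields $3\sinh(k\pi/6)\leq \sinh(k\pi/2)$ for every $k\in\mathbb{R}$ and hence the $k$-uniform domination
\[
 -S_{\mathrm{off},0}(k)\leq \frac{4}{3\gamma}\,S_{\mathrm{reg},0}(k).
\]
This sharp pointwise comparison avoids any delicate study of the minimum of $S_{\mathrm{off},0}+S_{\mathrm{reg},0}$ as a function of $k$, which would otherwise be the main obstacle.

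To conclude, I would split into two cases. If $\gamma\geq 4/3$, the above domination gives $S_{\mathrm{off},0}(k)+S_{\mathrm{reg},0}(k)\geq (1-\tfrac{4}{3\gamma})S_{\mathrm{reg},0}(k)\geq 0$ and any $s^*\in(0,1)$ works. If $\gamma_c<\gamma<4/3$, the coefficient $1-\tfrac{4}{3\gamma}$ is negative, so combining the domination with the elementary bound $S_{\mathrm{reg},0}(k)\leq S_{\mathrm{reg},0}(0)=\gamma\pi/2$ (which follows from $\tanh x\leq x$) gives
\[
 S_{\mathrm{off},0}(k)+S_{\mathrm{reg},0}(k)\geq \Big(1-\frac{4}{3\gamma}\Big)\frac{\gamma\pi}{2}=\frac{\gamma\pi}{2}-\frac{2\pi}{3}.
\]
The hypothesis $\gamma>\gamma_c=\tfrac{4}{3}-\tfrac{\sqrt{3}}{\pi}$ is precisely equivalent to $\tfrac{\gamma\pi}{2}-\tfrac{2\pi}{3}>-\tfrac{\sqrt{3}}{2}$, so any $s^*\in \bigl(\max\{0,\tfrac{\pi}{\sqrt{3}}(\tfrac{4}{3}-\gamma)\},\,1\bigr)$ meets the required pointwise bound; the interval is nonempty in $(0,1)$ exactly under the threshold assumption on $\gamma$, and the proof is complete.
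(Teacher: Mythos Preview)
Your proof is correct and follows the same overall strategy as the paper: reduce to $\lambda=0$, diagonalize via Lemma~\ref{lm:S} and \eqref{coso}, evaluate the $\ell=0$ symbols explicitly, and verify a pointwise scalar inequality in $k$. The explicit formulae you obtain for $S_{\mathrm{off},0}$ and $S_{\mathrm{reg},0}$ coincide with the paper's, and the resulting admissible range $s^*\in\big(\max\{0,\tfrac{\pi}{\sqrt3}(\tfrac43-\gamma)\},1\big)$ is exactly the interval $\max\{0,1-\tfrac{\pi}{\sqrt3}(\gamma-\gamma_c)\}<s^*<1$ found in the paper.

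The only genuine difference is the final algebraic step. The paper clears denominators to form $f_1(k)=\sqrt3\,s\,k\cosh(k\pi/2)+2\gamma\sinh(k\pi/2)-8\sinh(k\pi/6)$ and shows $f_1\geq0$ by noting $f_1(0)=0$ and bounding $f_1'(k)\geq(\sqrt3\,s+\pi\gamma-\tfrac{4\pi}{3})\cosh(k\pi/6)$ via $\cosh(k\pi/2)\geq\cosh(k\pi/6)$. You instead use the triple-angle identity $\sinh(3x)=3\sinh x+4\sinh^3x$ to get the sharp pointwise domination $-S_{\mathrm{off},0}\leq\tfrac{4}{3\gamma}S_{\mathrm{reg},0}$, and then the bound $\tanh x\leq x$ to control the residual when $\gamma<4/3$. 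Your route is perhaps more transparent in that it makes explicit why $k=0$ is the extremal point and why the threshold $\gamma_c=\tfrac43-\tfrac{\sqrt3}{\pi}$ emerges; the paper's derivative argument is marginally more mechanical. Both yield the identical sharp constant.
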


\begin{proof}
Since $\phi^\la_{\mathrm{diag}}(g) \geq \phi^0_{\mathrm{diag}}(g)$, by Lemma \ref{lm:S} we have
\begin{equation}\label{te2}
\Theta^\la_s(g) \geq  \int_{\mathbb{R}}dk\,|g^\sharp(k)|^2\Bigg[s\frac{\sqrt{3}}2+\int_{-1}^{1}\!\!\! dy\,\bigg(\frac\gamma2\frac{\cosh(k\arcsin(y))}{\sqrt{1-y^2}\cosh\big(k\frac\pi2\big)}-\f{  \cosh \big(k \arcsin (\frac y2) \big) }{ \sqrt{1- \frac{y^{2}}4 } \cosh (k \frac\pi 2)}\bigg)\Bigg].
\end{equation}
The explicit computation of the two integrals in \eqref{te2} yields
\[
	\begin{aligned}
\Theta^\la_s(g) &\geq
 \frac{ \sqrt{3}}{2} \!\!\int\!\!dk\, |g^\sharp(k)|^2 \! \left[   s + \f{2 \gamma}{\sqrt{3} } \f{ \sinh \big( k\frac\pi2 \big) }{k \cosh\big(k\frac\pi2 \big)}  - \f{8}{\sqrt{3}} \f{  \sinh \big(k\frac\pi6  \big) }{k \cosh\big(k\frac\pi2 \big)}   \right]\\
  &=\colon \frac{ \sqrt{3}}{2} \!\!\int\!\!dk\, |g^\sharp(k)|^2 f(k),
\end{aligned}
\]
where
\[
f(k)= \f{f_1(k)}{f_2(k)}= \f{\sqrt{3}\, s\,  k \cosh \big(k\f{\pi}{2} \big) + 2 \gamma \sinh\big( k\f{\pi}{2}  \big)- 8 \sinh \big(k\f{\pi}{6} \big)}{\sqrt{3}\, k \, \cosh\big( k\f{\pi}{2} \big)}.
\]
It remains to show $f(k)\geq0$ to conclude.  We note that  $f(k)$ is even and thus it is enough to consider $k\geq 0.$ We have $f(0)>0$, $f_2(k) \geq 0$ and
\[\begin{aligned}
f_1'(k)&= (\sqrt{3} \, s + \pi \gamma) \cosh\Big(k \f{\pi}{2}\Big) +\frac{ \sqrt{3}\pi}{2} \, s \,k  \sinh\Big(k \f{\pi}{2}\Big) - \f{4 \pi}{3} \cosh\Big(k \f{\pi}{6}\Big)\\
&\geq (\sqrt{3} \, s + \pi \gamma) \cosh \Big(k\f{\pi}{2}\Big) - \f{4 \pi}{3} \cosh\Big(k \f{\pi}{6}\Big)\\
&\geq  \left(\sqrt{3} \, s + \pi \gamma  - \f{4 \pi}{3} \right)\cosh\Big(k \f{\pi}{6}\Big).
\end{aligned}\]
For $\gamma >\gamma_c$, let us choose $s^*$ such that $\max \{0,1-\f{\pi}{\sqrt{3}}(\gamma - \gamma_c) \} < s^* <1$.
Hence, we also have $f_1(k) \geq 0$ and the proof is complete. 

\end{proof}

\vs

\n
We are ready to prove the lower bound for $\Phi^\la(\xi)$. This is the content of the next proposition which together with Proposition \ref{prop:upper} shows that $\Phi^\la$ defines a norm equivalent to $\|\cdot\|_{H^{1/2}}$.

\begin{proposition}\label{prop:lower}
	Assume \eqref{cod} and $\gamma>\gamma_c$. Then, there exist $\la_0>0$ and $c_0>0$  such that
		\[ 
		\Phi^\la(\xi)>c_0\, \|\xi\|_{H^{1/2}}^2
	\]
	for any $\lambda > \la_0$.
\end{proposition}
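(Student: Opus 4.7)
The plan is to prove a mode-by-mode lower bound of the form $\phi^\la_\ell(g) \geq \eta\, \phi^\la_\mathrm{diag}(g)$ for some fixed $\eta>0$ independent of $\ell$, sum over $\ell,m$ using the spherical harmonic decomposition \eqref{eq:phi_dec}, and then absorb the bounded multiplicative piece $\Phi^{(1)}_\mathrm{reg}$ using the $\sqrt{\la}\,\|\xi\|^2$ contribution hidden inside $\Phi^\la_\mathrm{diag}$. First, by Lemma \ref{lm:phi_off} and the non-negativity of $S_{\mathrm{reg},\ell}$ in \eqref{mono2}, one has $\phi_{\mathrm{reg},\ell}(g)\geq 0$ for every $\ell$, and $\phi^\la_{\mathrm{off},\ell}(g)\geq 0$ for $\ell$ odd; hence for $\ell$ odd the trivial bound $\phi^\la_\ell(g)\geq \phi^\la_\mathrm{diag}(g)$ holds.

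For $\ell\geq 2$ even, I would combine Lemma \ref{lm:phi_off}, the diagonalization of Lemma \ref{lm:S}, and the estimate \eqref{eq:S2} to get
\[
\phi^\la_{\mathrm{off},\ell}(g)\geq \phi^0_{\mathrm{off},\ell}(g) \geq -B\!\int_{\RE}|g^\sharp(k)|^2 dk = -\tfrac{2B}{\sqrt 3}\,\phi^0_\mathrm{diag}(g)\geq -\tfrac{2B}{\sqrt 3}\,\phi^\la_\mathrm{diag}(g),
\]
where the last inequality uses $\sqrt{\tfrac34 p^2+\la}\geq \sqrt{\tfrac34}\,p$. Since $B\simeq 0.087$, this yields $\phi^\la_\ell(g)\geq (1-\tfrac{2B}{\sqrt 3})\,\phi^\la_\mathrm{diag}(g)$ with a strictly positive coefficient. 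For $\ell=0$ I invoke the key Lemma \ref{lemma:theta}, which is precisely where the assumption $\gamma>\gamma_c$ enters: there exists $s^*\in(0,1)$ such that $\Theta^\la_{s^*}(g)\geq 0$, so that
\[
\phi^\la_0(g)=\phi^\la_\mathrm{diag}(g)+\phi^\la_{\mathrm{off},0}(g)+\phi_{\mathrm{reg},0}(g)\geq \phi^\la_\mathrm{diag}(g)+\phi^0_{\mathrm{off},0}(g)+\phi_{\mathrm{reg},0}(g)\geq (1-s^*)\,\phi^\la_\mathrm{diag}(g).
\]

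Setting $\eta := \min\{\,1-s^*,\; 1-\tfrac{2B}{\sqrt 3}\,\}>0$ and summing the three cases over $\ell, m$, the orthogonality of the spherical harmonics gives $\sum_{\ell,m}\phi^\la_\ell(\hat\xi_{\ell m})\geq \eta\,\Phi^\la_\mathrm{diag}(\xi)$. By \eqref{cod}, $a\in L^\infty(\RE^3)$ and hence $\Phi^{(1)}_\mathrm{reg}(\xi)\geq -\|a\|_{L^\infty}\|\xi\|^2$, so that
\[
\Phi^\la(\xi)\;\geq\;\eta\,\Phi^\la_\mathrm{diag}(\xi)-\|a\|_{L^\infty}\|\xi\|^2.
\]

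To close the argument I would apply the elementary inequality $\sqrt{s+t}\geq \tfrac{1}{\sqrt{2}}(\sqrt s+\sqrt t)$ to $s=\tfrac34 p^2$, $t=\la$, getting
\[
\Phi^\la_\mathrm{diag}(\xi)\;\geq\;\tfrac{1}{\sqrt 2}\!\int\!\!\Big(\tfrac{\sqrt 3}{2}p+\sqrt\la\Big)|\hat\xi(\p)|^2\,d\p\;\geq\; c\,\|\xi\|_{\dot H^{1/2}}^2+\tfrac{\sqrt\la}{\sqrt 2}\,\|\xi\|^2.
\]
Choosing $\la_0$ so large that $\tfrac{\eta\sqrt{\la_0}}{\sqrt 2}-\|a\|_{L^\infty}>0$ makes both coefficients positive for $\la>\la_0$, and the equivalence $\|\xi\|_{H^{1/2}}^2\asymp \|\xi\|_{\dot H^{1/2}}^2+\|\xi\|^2$ delivers the desired $c_0>0$. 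The only genuine difficulty is the $s$-wave bound, which is where the threshold $\gamma_c$ really shows up; the remaining angular momenta are controlled either trivially (odd $\ell$) or by the uniform bound \eqref{eq:S2} (even $\ell\geq 2$), and the passage from a lower bound in terms of $\Phi^\la_\mathrm{diag}$ to one in terms of the $H^{1/2}$ norm is a standard extraction of the $\sqrt\la$ mass from the symbol.
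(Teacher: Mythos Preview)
Your proof is correct and follows essentially the same route as the paper: the same spherical-harmonic decomposition, the same trichotomy ($\ell$ odd trivial, $\ell\geq 2$ even via \eqref{eq:S2}, $\ell=0$ via Lemma~\ref{lemma:theta}), and the same constant $\eta=\min\{1-s^*,1-\tfrac{2B}{\sqrt3}\}$ (called $\Lambda$ in the paper). The only cosmetic difference is in the final absorption of $\Phi^{(1)}_{\mathrm{reg}}$: the paper writes $\|\xi\|^2\leq \la^{-1/2}\Phi^\la_\mathrm{diag}(\hat\xi)$ directly, whereas you split the symbol via $\sqrt{s+t}\geq\tfrac1{\sqrt2}(\sqrt s+\sqrt t)$, but both yield the same $\la_0$-threshold mechanism.
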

\begin{proof}
	By \eqref{eq:phi_dec}, \eqref{eq:Fla} and Lemma \ref{lm:phi_off}, we get
	\[\begin{aligned}
		\Phi^\la(\xi)\geq&\Phi^{(1)}_{\mathrm{reg}}(\xi)+ \sum_{\substack{\ell=0\\\ell\mathrm{ even}}}^{+\infty}\sum_{m=-\ell}^\ell \big[\phi^\la_{\mathrm{diag}}(\hat\xi_{\ell m})+\phi^0_{\mathrm{off},\ell}(\hat\xi_{\ell m})+\phi_{\mathrm{reg},\ell}(\xi_{\ell m})\big]+ \sum_{\substack{\ell=1\\\ell\mathrm{ odd}}}^{+\infty}\sum_{m=-\ell}^\ell \phi^\la_{\mathrm{diag}}(\hat\xi_{\ell m})`.
	\end{aligned}\]
	Then, using Lemmata \ref{lm:S}, \ref{lemma:theta}, we obtain
	\[\begin{aligned}
		\Phi^\la(\xi)\geq& \Phi^{(1)}_\mathrm{reg}(\xi)+(1-s^*)\phi^\la_{\mathrm{diag}}(\hat\xi_{00})+\sum_{\substack{\ell=2\\ \ell\mathrm{ even}}}^{+\infty}\sum_{m=-\ell}^\ell \Big[\phi^\la_{\mathrm{diag}}(\hat\xi_{\ell m})+\phi^0_{\mathrm{off},\ell}(\hat\xi_{\ell m})\Big]+ \sum_{\substack{\ell=1\\\ell\mathrm{ odd}}}^{+\infty}\sum_{m=-\ell}^\ell \phi^\la_{\mathrm{diag}}(\hat\xi_{\ell m}).
	\end{aligned}\]
	Now, we note that Lemma \ref{lm:S} and the estimate \eqref{eq:S2} yield
	\[
		\sum_{\substack{\ell=2\\ \ell\mathrm{ even}}}^{+\infty}\sum_{m=-\ell}^\ell \Big[\phi^\la_{\mathrm{diag}}(\hat\xi_{\ell m})+\phi^0_{\mathrm{off},\ell}(\hat\xi_{\ell m})\Big]\geq\,\Big(1-\frac2{\sqrt3}B\Big)\sum_{\substack{\ell=2\\ \ell\mathrm{ even}}}^{+\infty}\sum_{m=-\ell}^\ell \phi^\la_\mathrm{diag}(\hat\xi_{\ell m})
\]
	where $\big(1-\frac2{\sqrt3}B\big)>0.$ Hence, setting $\Lambda=\min \{1-s^*,1-\frac2{\sqrt3}B\}>0$, we get
	\beq\label{eq:lower}
		\Phi^\la(\xi)\geq \Phi^{(1)}_\mathrm{reg}(\xi)+\Lambda\sum_{\ell=0}^{+\infty}\sum_{m=-\ell}^\ell \phi^\la_{\mathrm{diag}}(\hat\xi_{\ell m})=\Phi^{(1)}_{\mathrm{reg}}(\xi)+\Lambda\Phi^\la_{\mathrm{diag}}(\hat\xi).\,
	\eeq
	To conclude, we note
		\beq\label{infla}\begin{aligned}
			\Phi^{(1)}_\mathrm{reg}(\xi)+\Lambda\Phi^\la_\mathrm{diag}(\hat\xi)\geq &
				\Lambda\Phi^\la_\mathrm{diag}(\hat{\xi}) -\|a\|_{L^\infty} \int\!\! d\yv\,|\xi(\yv)|^2\\
				\geq& \Lambda\Phi^\la_\mathrm{diag}(\hat{\xi}) -  \f{\| a\|_{L^\infty}}{\sqrt{\la}} \int\!\! d \p \, \sqrt{\f{3}{4} p^2 + \la} \, |\hat{\xi}(\p) |^2 \\
				=& \left(\! \Lambda - \f{\| a\|_{L^\infty}}{\sqrt{\la}} \right) \Phi^\la_\mathrm{diag}(\hat\xi). 
		\end{aligned}\eeq
	From \eqref{eq:lower} and \eqref{infla} choosing $\la$ large enough we get the thesis since clearly $\Phi^\la_\mathrm{diag}(\hat\xi)\geq c\|\xi\|_{H^{1/2}}^2$.
	
\end{proof}

\vs

\n
We are now in position to  prove  Theorem \ref{thm:main} formulated in Sect. \ref{secMain}.

\begin{proof}[Proof of Theorem \ref{thm:main}] Point (i) is a consequence of Propositions \ref{prop:upper} and \ref{prop:lower}. For the proof of point (ii), we follow a standard strategy (see, e.g.,  \cite{CDFMT}). 
For the convenience of the reader, we give the details below.  By Proposition \ref{prop:lower}, we have

\[
F(\psi) = \mathcal F^{\la} (w^{\la}) + 12 \pi \,\Phi^{\la} (\xi) - \la \, \|\psi\|^2 \geq - \la \, \|\psi\|^2
\]
for any $\la >\la_0$ and then the form is bounded from below. Moreover, let us fix $\la>\la_0$ and define 
\beq\label{fla}
F^{\la}(\psi)= F(\psi) + \la \, \|\psi\|^2 = \mathcal F^{\la} (w^{\la}) + 12 \pi \,\Phi^{\la} (\xi)  ,
\eeq
on the domain $\D(F)$. 
Let us consider a sequence $\{\psi_n=w^\la_n+\mathcal G^\la\xi_n\}_{n\geq0} \subset \D(F)$, and $\psi \in  L^2_{\text{sym}}(\RE^6)$ such that $\lim_n \|\psi_n - \psi\|_{L^2}=0$ and $\lim_{n,m} F^{\la}(\psi_n - \psi_m)=0$. By \eqref{fla} we have  $\lim_{n,m} \mathcal F^{\lambda} (w^{\la}_n - w^{\la}_m)=0$ and $\lim_{n,m} \Phi^{\la}(\xi_n -\xi_m)=0$ or, equivalently,  $\{w^{\la}_n\}_{n\geq0}$  is a Cauchy sequence in $H_{\text{sym}}^1(\RE^6)$ and,  by Proposition \ref{prop:lower}, $\{\xi_n\}_{n\geq0}$ is a Cauchy sequence in $H^{1/2}(\RE^3)$. Then, there exist $w^{\la} \in H_{\text{sym}}^1(\RE^6)$ and $\xi \in H^{1/2}(\RE^3)$ such that
\beq\label{h1h12}
\lim_n \|w^{\la}_n - w^{\la}\|_{H^1}=0, \;\;\;\;\;\; \lim_n \|\xi_n - \xi\|_{ H^{1/2}}=0.
\eeq
Moreover, we also have
\beq\label{gxi2}
\lim_n \| \mathcal G^{\la} \xi_n - \mathcal G^{\la} \xi \|=0.
\eeq
Formulas \eqref{h1h12} and \eqref{gxi2} imply that $\psi_n = w^{\la}_n + \mathcal G^{\la} \xi_n$ converges in $L^2(\RE^6)$ to $w^{\la} + \mathcal G^{\la} \xi$. By uniqueness of the limit, we have that $\psi= w^{\la} + \mathcal G^{\la} \xi$ and then $\psi \in D(F)$. Furthermore, by \eqref{h1h12} we have 
\[
\lim_n F^{\la}(\psi - \psi_n)= \lim_n \mathcal F^{\la}( w^{\la} - w^{\la}_n) +12\pi \Phi^{\la} (\xi-\xi_n)=0.
\]
Thus,  we have shown that $F^{\la}$ and, a fortiori, $F$ are closed quadratic forms and this concludes the proof. 

\end{proof}

\begin{remark}
By \eqref{infla} one has 
\[
\la_0= \lf( \f{\| a\|_{L^\infty} }{\Lambda} \ri)^2.
\]
We underline that $\la_0$ depends on $\gamma$ both via $\| a\|_{L^\infty} $ and via  $\Lambda$.
In particular, as $\ga\to \ga_c$ we have $s^\ast \to 1$, so that $\Lambda \to 0$ and $\la_0\to \infty$.
In the concrete case \eqref{cafu}, we can take
\[
\la_0= \f{\gamma^2}{\Lambda^2 b^2}\quad  \text{if}\quad  \beta \geq 0, \qquad  
\la_0=\f{( |\beta|b + \gamma)^2}{\Lambda^2 b^2}\, \quad  \text{if} \quad  \beta < 0.
\]
From the proof of Theorem  \ref{thm:main}, it is clear that $-\lambda_0$ is a lower bound for the 
infimum of the spectrum of $H$. 
\end{remark}

\begin{remark}
We expect $\ga_c$ to be optimal that is, if $\ga < \ga_c$ one could argue as in \cite{FT2} and prove that $F$ is unbounded from below.
\end{remark}

\vs

\section{Hamiltonian}\label{secHam}

In this section,  we explicitly construct the Hamiltonian of our three bosons system. 
Let us first consider the quadratic form $\Phi^\la$, $\mathcal D (\Phi^\la) = H^{1/2}(\RE^3)$ in $L^2(\RE^3)$ (see \eqref{f3}). As a straightforward consequence of  Point (i) of Theorem \ref{thm:main},  such a quadratic form is closed and positive, and therefore it uniquely defines  a positive, self-adjoint operator $\Gamma^\la$ in $L^2(\RE^3)$ for $\lambda>\lambda_0$ characterized as follows

\begin{align}\label{doga}
\mathcal D(\Gamma^\la)=& \;\;\big\{ \xi \in H^{1/2}(\RE^3) \,|\, \exists g \in L^2(\RE^3) \;\;\text{s.t.}\;\; \Phi^\la(\eta,\xi)=(\eta,g) \;\;\;\text{for any }\; \eta\in H^{1/2}(\RE^3) \big\}\\
{\Gamma}^\la {\xi} =& \;\;{g} \;\;\;\; \text{for}\;\; \xi \in \mathcal D(\Gamma^\la).\label{azga}
\end{align}

\n
In the appendix,  we  prove that $\D(\Gamma^{\la}) = H^1(\RE^3)$ for $\gamma > \gamma_c^*$ (see Proposition \ref{p:fine}) and 

\begin{align}\label{opga}
(\hat{\Gamma}^{\la} \hat{\xi})(\p)&=  \sqrt{ \f{3}{4} p^2 +\lambda }\; \hat{\xi}(\p) 
-\frac1{\pi^2} \int \!\! d\q\,  \frac{ \hat{\xi}(\q)}{p^2+q^2+\p \cdot \q +\lambda}
 +(\widehat{a \,\xi})(\p) + \f{\gamma}{2 \pi^2}  \int \!\!  d\q\,  \frac{ \hat{\xi}(\q)}{|\p - \q|^2 } \nonumber\\
&=: (\hat{\Gamma}^{ \la}_{\text{diag}} \hat{\xi}) (\p)  +   (\hat{\Gamma}^{ \la}_{\text{off}} \hat \xi)(\p) +    (\hat{\Gamma}^{(1)}_{\text{reg}} \hat \xi)(\p) +    (\hat{\Gamma}^{(2)}_{\text{reg}} \hat \xi)(\p).
\end{align}

\n
Let us now consider the quadratic form $F$, $\D(F)$ in $L^2_{\text{sym}}(\RE^6)$. 
By Theorem \ref{thm:main}, such quadratic form uniquely defines a self-adjoint and bounded from below Hamiltonian  $H$, $\D(H)$ in $L^2_{\text{sym}}(\RE^6)$, next we prove Theorem \ref{prop:ham} which characterizes its domain and action. 
\begin{proof}[Proof of Theorem \ref{prop:ham}]
Let us assume that $\psi  = w^{\la} + \mathcal G^{\la} \xi \in \D(H)$. Then there exists $f \in L^2_{\text{sym}}(\RE^6)$ such that $F(v,\psi) = (v,f)$ for any $v = w^{\la}_v + \mathcal G^{\la} \xi_v \in \D(F)$ and $f=H\psi$.  Let us consider $v\in H^1(\RE^6)$, so that $\xi_v =0$ and 

\[
\int \!\! d\k d\p \, \Big( k^2 + \f{3}{4} p^2 + \la \Big) \overline{ \hat{v} (\k,\p)}  \, \hat{w}^{\la} (\k,\p) - \la (v , \psi) = (v,f).
\]

\n
Hence, $w^{\la} \in H^2(\RE^6)$ and $(H_0 + \la ) w^{\la} = f + \la \psi= (H + \la ) \psi$ which is equivalent to \eqref{azH}. 

\n
Let us consider $v \in \D(F)$ with $\xi_v \neq 0$.  Then

\begin{align}\label{eqcad}
\int \!\! d\k d\p \, \Big( k^2 + \f{3}{4} p^2 + \la \Big) \overline{ \hat{w}^{\la}_v (\k,\p)}  \, \hat{w}^{\la} (\k,\p) - \la (v , \psi)  +  12\, \pi \,\Phi^{\la}(\xi_v, \xi) = (v,f).
\end{align}

\n
Taking into account that

\begin{align*}
(v, f + \la \psi)&= (w^{\la}_v, (H + \la) \psi) + (\mathcal G^{\la} \xi_v, (H+\la)\psi) \nonumber\\
&= (w^{\la}_v, (H_0 + \la) w^{\la}) + (\mathcal G^{\la} \xi_v, (H_0+\la) w^{\la})
\end{align*}

\n
equation \eqref{eqcad} is rewritten as

\beq\label{anhu}
\Phi^{\la} (\xi_v, \xi)= \f{1}{12 \pi} (\mathcal G^{\la} \xi_v, (H_0+\la) w^{\la}).
\eeq

\n
It remains to compute the right hand side of \eqref{anhu}. Using \eqref{potxi} and the symmetry properties of $w^{\la}$ (see \eqref{hspaf}), we have

\begin{align*}
\f{1}{12 \pi} (\mathcal G^{\la} \xi_v, (H_0+\la) w^{\la}) &= \f{\sqrt{2}}{12 \pi^{3/2}} \int\!\! d\k d\p \, \Big( \hat{\xi}_v (\p) + \hat{\xi}_v (\k - \f{1}{2} \p) + \hat{\xi}_v (-\k - \f{1}{2} \p) \Big) \hat{w}^{\la} (\k, \p)\nonumber\\
&= \f{1}{3} (\xi_v, w^{\la}\big|_{\pi_{23}} ) + \f{\sqrt{2}}{6 \pi^{3/2}} \int\!\! d\k d\p \,  \hat{\xi}_v (\k - \f{1}{2} \p) \,  \hat{w}^{\la} (\k, \p)  \nonumber\\
&=\f{1}{3} (\xi_v, w^{\la}\big|_{\pi_{23}} ) + \f{\sqrt{2}}{6 \pi^{3/2}} \int\!\! d\k d\p \,  \hat{\xi}_v (\k - \f{1}{2} \p) \,  \hat{w}^{\la} (\f{1}{2}\k + \f{3}{4} \p, \k - \f{1}{2} \p) \nonumber\\
&= (\xi_v, w^{\la}\big|_{\pi_{23}} )
\end{align*}

\n
and by \eqref{anhu} we find the equation

\[
\Phi^{\la} (\xi_v, \xi)= (\xi_v, w^{\la}\big|_{\pi_{23}} )
\] 

\n
for any $\xi_v \in H^{1/2}(\RE^3)$. By definition of the operator $\Gamma^{\la}$ (see \eqref{doga}, \eqref{azga}), we conclude that $\xi \in \D(\Gamma^{\la})$ and $\Gamma^{\la} \xi= w^{\la}\big|_{\pi_{23}}$. 

\n
Let us now assume that $\psi  \in \D(F)$ with $  w^{\la} \in H^2(\RE^6)$,  $\xi \in \D(\Gamma^{\la})$ and $ \Gamma^{\la} \hat{\xi} = w^{\la}\big|_{\pi_{23}} $. For any $v = w^{\la}_v + \mathcal G^{\la} \xi_v \in \D(F)$ we have

\begin{align*}
F(v, \psi)&= (w^{\la}_v, (H_0 + \la)w^{\la}) - \la (v, \psi) + \phi^{\la}(\xi_v, \xi) \nonumber\\
&= (v, (H_0 + \la)w^{\la}) - (\mathcal G^{\la} \xi_v, (H_0 + \la)w^{\la}) - \la(v, \psi) + (\xi_v, \Gamma^{\la} \xi)\nonumber\\
&= (v, (H_0 w^{\la} - \la \mathcal G^{\la} \xi ) ) - (\xi_v, w^{\la}\big|_{\pi_{23}}) + (\xi_v, \Gamma^{\la} \xi)\nonumber\\
&=(v, (H_0 w^{\la} - \la \mathcal G^{\la} \xi ) ).
\end{align*}

\n
It is now sufficient to define $f = H_0 w^{\la} - \la \mathcal G^{\la} \xi $ to obtain that $\psi \in \D(H)$ and $f=H\psi$ and thus to conclude  the proof.

\end{proof}

\begin{remark}
We emphasize that the Hamiltonian $H, \D(H)$ is the rigorous counterpart of the formal regularized TMS Hamiltonian introduced in Sect. \ref{secIntro}. Indeed, for any $\psi \in L^2_{\textup{sym}}(\RE^6) \cap  C_0^{\infty}( \RE^6 \setminus \cup_{i<j} \pi_{ij})$ we have $\psi \in \D(H)$ and $H\psi=H_0\psi$, i.e., the Hamiltonian acts as the free Hamiltonian outside the hyperplanes. Moreover, we show that the boundary condition \eqref{bc23} is also satisfied. Let us consider $\psi \in \D(H)$  and let us recall that the corresponding charge $\xi$ belongs to $H^1(\RE^3)$. For $\x \neq 0$ we write

\be\label{wbc1}
\psi(\x,\y) - \f{\xi(\y)}{x} = (\mathcal G^{\la}_{23} \xi)(\x,\y) - \f{\xi(\y)}{x} +  (\mathcal G^{\la}_{31} \xi)(\x,\y) + (\mathcal G^{\la}_{12} \xi)(\x,\y) + w^{\la}(\x,\y)
\ee
and we compute the limit of the above expression for $x\rightarrow 0$ in the $L^2$-sense. Taking into account of \eqref{G23x}, we have
\[
\int\!\!d\y \left| (\mathcal G^{\la}_{23} \xi)(\x,\y) - \f{\xi(\y)}{x} + (\Gamma^{\la}_{\textup{diag}} \xi)(\y) \right|^2 = \int\!\!d\p \left| \f{e^{- \sqrt{\f{3}{4} p^2+ \la} x}-1}{x} \,\hat{\xi}(\p) + \sqrt{\f{3}{4}p^2 +\la} \,\hat{\xi}(\p)\right|^2
\]
which, by dominated convergence theorem, converges to zero for $x\rightarrow 0$. Moreover, for any $\eta \in C_0^{\infty}(\RE^3)$, with $\|\eta\|_{L^2}=1$, we estimate the difference $ (\mathcal G^{\la}_{31} \xi)(\x,\y)-(\mathcal G^{\la}_{31} \xi)(0,\y)$ proceeding as in \eqref{stiof}

\[\left| \int\!\!d\y \, \overline{\eta(\y)} \!\left[ (\mathcal G^{\la}_{31} \xi)(\x,\y) - (\mathcal G^{\la}_{31} \xi)(0,\y)\right] \right|  = \f{1}{2\pi^2} \left|    \int\!\!d\p \!\!\int\!\!d\q\, \f{ \overline{\hat{\eta}(\p)} \left(e^{i (\q + \f{1}{2}\p) \cdot \x} - 1\right) \hat{\xi}(\q) }{p^2+ q^2 +\p \cdot \q + \la} \right|
\]
Notice, see Remark \ref{afa},that 
\[
   \int\!\!d\p \!\!\int\!\!d\q\, \f{| {\hat{\eta}(\p)}|\, | \hat{\xi}(\q)| }{p^2+ q^2 +\p \cdot \q + \la} \leq c\, \| \eta\|\, \|\xi\|_{H^1},
\]
then we conclude that $ (\mathcal G^{\la}_{31} \xi)(\x,\y)-(\mathcal G^{\la}_{31} \xi)(0,\y) \rightarrow 0$ for $x \rightarrow 0$ in $L^2(\RE^3)$ by dominated convergence theorem
and the same is true for $ (\mathcal G^{\la}_{12} \xi)(\x,\y)-(\mathcal G^{\la}_{12} \xi)(0,\y)$. Note that $(\mathcal G^{\la}_{31} \xi)(0,\y) + (\mathcal G^{\la}_{12} \xi)(0,\y) = -( \Gamma^{\la}_{\textup{off}} \xi)(\y)$. 
For the last term in \eqref{wbc1} we have 

\begin{align*}
&\int\!\!d\y\, |w^{\la}(\x,\y) - w^{\la}(0,\y)|^2 = \f{1}{(2\pi)^3} \int\!\!d\p \left| \int\!\!d\k\, (e^{i \k \cdot \x} -1) \, \hat{w}^{\la} (\k,\p) \right|^2 \nonumber\\
& \leq  \f{1}{(2\pi)^3}  \left( \int\!\!d\k\, \f{|e^{i \k \cdot \x}-1|^2}{(k^2 + 1)^2} \right) \int\!\!d\p d\k \left| (k^2 +1)  \hat{w}^{\la} (\k,\p) \right|^2
\end{align*}
and then  $w^{\la}(\x,\y) - w^{\la}(0,\y) \rightarrow 0$ for $x \rightarrow 0$ in $L^2(\RE^3)$. Taking into account the above estimates, the condition $\Gamma^{\la} \xi = w^{\la}\big|_{\pi_{23}}$ and the decomposition $\Gamma^{\la}= \Gamma^{\la}_{\textup{diag}} + \Gamma^{\la}_{\textup{off}} + \Gamma_{\textup{reg}} $ we conclude 

\[
\lim_{x \rightarrow 0}\,  \left\| \, \psi(\x,\cdot) - \f{\xi}{x} - \Gamma_{\textup{reg}}\, \xi \,  \right\| =0
\]
which is precisely the boundary condition \eqref{bc23} satisfied in the $L^2$-sense.
\end{remark}

\vs

\n
Let us characterize the resolvent of our Hamiltonian.  We first introduce the shorthand notation  for the operator 
$\mathcal G^{\la}_{23} = G^{\la}$ (see \eqref{pot23}), 
i.e., 

\be\label{Gla}
G^\lambda  : L^2(\RE^3) \to L^2(\RE^6), \;\;\;  \qquad  (\hat{G}^{\lambda}\hat \xi) (\k,\p) := \sqrt\frac{2}{\pi} \frac{1}{k^2 + \frac{3}{4} p^2 + \lambda } \hat \xi (\p)
\ee
its adjoint is 
\[
G^{\lambda*}  : L^2(\RE^6) \to L^2(\RE^3), \;\;\;  \qquad  (\hat{G}^{\lambda*}\hat f) (\p) := \sqrt\frac{2}{\pi} \int\!\!d\k\, \frac{1}{k^2 + \frac{3}{4} p^2 + \lambda } \hat f (\k, \p). 
\]
Next, we prove the following preliminary result.

\begin{proposition}\label{p:GG*} For any $\lambda >0 $, there holds
 $G^\lambda \in \Bou(H^{s-\frac12}(\RE^3),H^s(\RE^6))$ for all $s<1/2$, hence, $G^{\lambda*} \in \Bou(H^{-s}(\RE^6), H^{-s+\frac12}(\RE^3))$ for all $s<1/2$. In particular, $G^\lambda \in \Bou(L^2(\RE^3),L^2(\RE^6))$ and $G^{\lambda*} \in \Bou(L^2(\RE^6),L^2(\RE^3))$. 
\end{proposition}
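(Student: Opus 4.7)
The plan is to work entirely in Fourier space, where $G^\lambda$ is multiplication by the bounded function $\sqrt{2/\pi}\,(k^2+\tfrac{3}{4}p^2+\lambda)^{-1}$ applied to $\hat\xi(\p)$ regarded as a function on $\RE^6$ that is constant in the $\k$-variable. With this representation, the $H^s(\RE^6)$-norm of $G^\lambda\xi$ decouples as
\[
\|G^\lambda\xi\|_{H^s(\RE^6)}^2=\frac{2}{\pi}\int d\p\,|\hat\xi(\p)|^2\,I_s(\p),\qquad I_s(\p):=\int d\k\,\frac{(1+k^2+p^2)^s}{(k^2+\tfrac34 p^2+\lambda)^2}.
\]
The key estimate to prove is therefore $I_s(\p)\leq c\,(1+p^2)^{s-1/2}$, which immediately yields $\|G^\lambda\xi\|_{H^s(\RE^6)}\leq c\,\|\xi\|_{H^{s-1/2}(\RE^3)}$, proving the first assertion.

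To establish the pointwise estimate on $I_s(\p)$ I would distinguish two regimes. Set $A=1+p^2$ and $B=\tfrac34 p^2+\lambda$; for $\lambda>0$ these are comparable uniformly in $\p$, and so are $A+k^2$ and $B+k^2$. For $s\in[0,1/2)$, I would use $(A+k^2)^s\leq c(B+k^2)^s$ and apply spherical coordinates together with the rescaling $\k=\sqrt{B}\,\mathbf u$, reducing matters to
\[
I_s(\p)\leq c\,B^{s-1/2}\int_0^\infty \frac{u^2\,du}{(1+u^2)^{2-s}},
\]
where the $u$-integral converges exactly when $s<1/2$. For $s\leq 0$, I would instead use the cruder bound $(A+k^2)^s\leq A^s$ together with the elementary identity $\int d\k\,(B+k^2)^{-2}=\pi^2 B^{-1/2}$ to obtain $I_s(\p)\leq c\,A^s B^{-1/2}$. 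In either regime the comparability of $A$ and $B$ gives the desired bound $I_s(\p)\leq c\,(1+p^2)^{s-1/2}$.

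The second assertion is then immediate by duality: since $(H^s)^\ast\cong H^{-s}$ with respect to the $L^2$ pairing, the $L^2$-adjoint of a bounded map $H^{s-1/2}(\RE^3)\to H^s(\RE^6)$ is bounded from $H^{-s}(\RE^6)$ into $H^{-s+1/2}(\RE^3)$. The $L^2$-to-$L^2$ mapping properties stated at the end follow by combining the $s=0$ case of either statement with the continuous inclusions $L^2\hookrightarrow H^{-1/2}$ (for $G^\lambda$) and $H^{1/2}\hookrightarrow L^2$ (for $G^{\lambda*}$).

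The only real point requiring care is the sharp identification of the threshold $s<1/2$, which emerges naturally from the convergence at infinity of the radial integral after rescaling; once this is done, everything else is routine integration and a duality argument.
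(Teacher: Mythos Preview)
Your proof is correct and follows essentially the same approach as the paper: both work in Fourier space, reduce to the pointwise estimate $I_s(\p)\le c(1+p^2)^{s-1/2}$, and obtain it by comparing $1+p^2$ with $\tfrac34 p^2+\lambda$ and rescaling the radial $\k$-integral. The paper handles all $s<1/2$ in a single step (bounding the denominator by $c(k^2+p^2+1)^{-2}$ and scaling), whereas you split into the cases $s\in[0,1/2)$ and $s\le 0$; this case distinction is unnecessary but harmless.
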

\begin{proof} We note that 
\[
\frac{1}{(k^2 +\frac34 p^2 + \lambda )^2} \leq \max\{(4/3)^2,1/\lambda^2\} \frac{1}{(k^2 + p^2 + 1)^2}.
\]
Hence 
\[
 \int d\k  \frac{(k^2 + p^2 +1)^s}{(k^2 +\frac34 p^2 + \lambda )^2}  \leq C_\lambda  \int_0^\infty dk \,k^2  \frac{(k^2 + p^2 +1)^s}{(k^2 +p^2 + 1 )^2}  =  C_{\lambda,s}  (p^2 +1)^{s-\frac12} .
\]
So that 
\[
\|G^\lambda \hat \xi\|^2_{H^s} = \frac2\pi \int d\k \, d\p \frac{(k^2 + p^2 +1)^s}{(k^2 +\frac34 p^2 + \lambda )^2} |\hat\xi(\p)|^2 \leq C_{\lambda,s} \|\xi\|^2_{H^{s-\frac12}}.  
\]
\end{proof}

\n
Let us recall the definition of the operator $S$ given in \eqref{S}, additionally we  
notice that 
\be\label{S2}
S^2 \hat \phi(\k,\p) = \hat\phi\Big(-\frac{3}{4} \p - \frac{1}{2}\k ,-\frac{1}{2}\p + \k\Big).
\ee
If $\hat \psi \in L^2_{\text{sym}}(\RE^6)$, it holds true 
\[ 
\hat \psi(\k,\p)  = S \hat \psi(\k,\p) = S^2 \hat \psi(\k,\p) .
\]
We note that the second equality is a consequence of the first one. Furthermore, we have 
 $S^* = S^2$. Taking into account of \eqref{potxi}, \eqref{Gla}, \eqref{S}, \eqref{S2}, we can write
\[
\mathcal G^{\la} = \sum_{j=0}^2 S^j G^{\la}.
\]

\n
We claim that  the resolvent $(H+\la)^{-1}$ of  $H, \D(H)$ computed in $z = -\lambda < -\lambda_0 $ is given by 
\[ 
R^\lambda  
= R_0^\lambda +\frac{1}{4\pi} \sum_{j=0}^2 S^j \, G^\lambda \, (\Gamma^\la)^{-1} \,  G^{\lambda*} \;\;\;\;\;\;\; \;\;\;\;\;
\]
where $R_0^{\la} = (H_0 + \la)^{-1}$ and $(\Gamma^\la)^{-1}$ is a well defined and bounded operator in $L^2(\RE^3)$ since $\Phi^\lambda$ is coercive. Indeed, let us consider $\;R^{\la} f$ for $f \in L^2_{\text{sym}}(\RE^6)$. We have  $R^{\la} f = w^{\la} + \mathcal G^{\la} \xi$, where 
$\;w^{\la} \equiv R_0^{\la} f   \in H^2(\RE^6)$, $\;\xi \equiv (4 \pi)^{-1} (\Gamma^\la)^{-1} \,  G^{\lambda*}f  \in \D( \Gamma^{\la})$ and $\Gamma^{\la} \xi = (4 \pi)^{-1} G^{\la*} f = R_0^{\la} f \big|_{23}$. Hence, $R^{\la}f \in \D(H) $ and $(H+\la) R^{\la}f= (H_0 + \la) R_0^{\la} f =f$. Therefore, we conclude  that $R^{\la}= (H+\la)^{-1}$.

\vs

\section{Approximating Hamiltonian}\label{secAppr}

In this section,  we prove a uniform  bound on the infimum of the spectrum of $H_\ve$  introduced in Sect. \ref{secMain}  and obtain the Konno--Kuroda  formula for its  resolvent (Theorem \ref{t:KKK}). 

\begin{remark}\label{a:chi}
Let us recall the scaled function $\chi_\ve$ defined in \eqref{chiep}, and the definitions of the constants $\ell$ and $\ell'$ in \eqref{ellellprime}. 
The assumptions on $\chi$ imply that $\hat\chi$ is real valued, Lipschitz,  that  $\ell,\ell'<\infty$, and that 
\[ \int d\k \frac{|\hat \chi( k) - \hat \chi( 0) |^2}{k^4}<\infty .\] 

\n Moreover, we recall  the definition of the  infinitesimal, position-dependent coupling constant in \eqref{gve}. 
In the position space $g_\ve$ is just the multiplication operator for the function (which we denote by the same symbol)
\[ 
g_\ve(y) = -4\pi \frac\ve\ell \Big( 1 + \frac{\ve}{\ell} \beta  + \frac{\ve}{\ell} \frac{\gamma}{y} \theta(y)   \Big)^{-1} = -4\pi \frac\ve\ell \Big( 1 + \frac{\ve}{\ell} a(y) + \frac{\ve}{\ell} \frac{\gamma}{y} \Big)^{-1}, 
\]
where $a(y)$ was introduced in \eqref{ay}. From now on,  we always  assume that $\ve< \ell/(2\|a\|_{L^\infty})$ so that $ 1 + \frac{\ve}{\ell} a(y) + \frac{\ve}{\ell} \frac{\gamma}{y}>1/2$ and $g_\ve$, as a function, is bounded, in particular  $\|g_\ve\|_{L^\infty}\leq 8\pi \ve/\ell$. 
\end{remark} 

\n
Let us consider the Hamiltonian $H_\ve$ defined in \eqref{afa2}. 
We remark that the term $\sum_{j=0}^2 S^j \big(|\chi_\ve\rangle \langle \chi_\ve|   \otimes g_\ve \big)   {S^j}^*$ is bounded (although not uniformly in $\ve$) in $L^2(\RE^6)$, with norm bounded by $3 \|\chi_\ve\|^2 \|g_\ve\|_{L^{\infty}}\leq (24\pi/\ell) \ve^{-2} \|\chi\|^2$, and therefore $H_{\ve}$, $\D(H_{\ve})$ is self-adjoint and bounded from below for any $\ve>0$.  

\n
As a first  step, we introduce the following operators which will play a crucial role in writing the Konno--Kuroda formula for the resolvent of $H_\ve$ (see Theorem \ref{t:KKK}).

\begin{definition}For any $\lambda>0$, let  us define

\[
\Gamma^{\la}_{\ve}   : D(\Gamma_{\reg})\subset L^2(\RE^3)\to  L^2(\RE^3)  \qquad 
\Gamma_\ve^\lambda :=   \Gamma_{\reg}    +   \Gamma^\lambda_{\diag,\ve}  + \Gamma^\lambda_{\off,\ve}
\]

\n
where $\Gamma^\lambda_{\diag,\ve} $ and $\Gamma^\lambda_{\off,\ve}$ are the bounded operators (see the remark below) 
\[
\Gamma^{\la}_{\diag,\ve}   : L^2(\RE^3)\to  L^2(\RE^3) \qquad 
(\hat \Gamma^{\la}_{\diag,\ve} \hat{\xi})(\p) : =   \sqrt{ \f{3}{4} p^2 +\lambda }\,  r\bigg(\ve \sqrt{\frac34 p^2 +\lambda}\bigg)\, \hat{\xi}(\p) ,
\]
with 
\[
r(s) := 4\pi \int  d\k \, \frac{|\hat\chi( s k )|^2 }{k^2(k^2 + 1)};
\]
and 
\be\label{Gaoe}
\Gamma^{\la}_{\off,\ve}   : L^2(\RE^3)\to  L^2(\RE^3) \qquad 
(\hat \Gamma^{\la}_{\off,\ve}  \hat{\xi})(\p) :=   
-8\pi \int \!\! d\q\,  \frac{\hat\chi\big(\ve\big|\frac12 \p +\q\big|\big)  \hat\chi\big(\ve\big| \p +\frac12 \q\big|\big)}{p^2+q^2+\p \cdot \q +\lambda} \hat{\xi}(\q). 
\ee
We also define the quadratic form associated with $\Gamma^\la_{\ve}$
\begin{equation}\label{Phive1}
\D(\Phi_\ve^\lambda) := \{\xi\in L^2(\RE^3) \; \textrm{\normalfont s.t.} \; |\cdot|^{-\frac12} \xi \in L^2(\RE^3)\} 
\end{equation}
\begin{equation}\label{Phive2}
 \Phi_\ve^\lambda ( \xi)  = ( \xi, \Gamma^\lambda_{\diag,\ve}  \xi) +  (  \xi, \Gamma_{\reg}  \xi)    + ( \xi, \Gamma^\lambda_{\off,\ve}  \xi) . 
\end{equation}

\end{definition}

\n
We will show in the proof of Lemma \ref{l:GammaveGamma} that the operators $ \Gamma^\lambda_{\diag,\ve} $  and $ \Gamma^\lambda_{\off,\ve}$ converge, as $\ve\to 0$, to the corresponding limiting operators $ \Gamma^\lambda_{\diag} $  and $ \Gamma^\lambda_{\off}$, defined in Eq. \eqref{opga}.

\vs

\begin{remark}\label{r:Gadiagve} We observe that $ \|\hat\chi\|_{L^\infty} \leq (2\pi)^{-3/2} \|\chi\|_{L^1} = (2\pi)^{-3/2} $, so that  $r(s)\leq 1$ and also $r(0)=1$. Additionally, we note the trivial bound $s r(s) \leq  \ell$. The latter implies  $\|\Gamma_{\diag,\ve}\|_{\Bou(L^2)}\leq  \ell/\ve$. 

\n
We also note that $sr(s)$ as a function of $s\in[0,+\infty)$ is strictly increasing, this is an immediate consequence of the identity 
\[
s\, r(s)  = 4\pi \int  d\k \, \frac{|\hat\chi( k )|^2 }{k^2} \frac{s^2}{k^2 + s^2}. 
\]
Since $\hat \chi$ is a Lipschitz function, interpolating with the $L^\infty$ estimate, we immediately obtain
\be \label{hummus}
|r(s)-1| \leq c \, |s|^\de \qquad  0\leq \de <1/ 2.
\ee

\n
Moreover, by  the Cauchy--Schwarz inequality , we find
\[\begin{aligned}
\|\Gamma^{\la}_{\off,\ve}{\xi}\|^2 \leq &     
(8\pi)^2 \| \chi_\ve\|^2 \int d\p\, d\q\,  \frac{\big|\hat\chi\big(\ve\big| \p +\frac12 \q\big|\big)\big|^2}{\big(p^2+q^2+\p \cdot \q +\lambda\big)^2} |\hat{\xi}(\q)|^2   \\ 
\leq &  
\frac{(8\pi)^2 \| \chi_\ve\|^2 \|\xi\|^2}{(2\pi)^3} \int d\p\,   \frac{4}{\big(p^2+2 \lambda\big)^2} = \frac{C}{\lambda^{1/2}} \frac{1}{\ve^3} \|\chi\|^2 \|\xi\|^2.
\end{aligned}
\]
Hence, $\|\Gamma_{\off,\ve}\|_{\Bou(L^2)} \leq C \lambda^{-1/4} \ve^{-3/2} \|\chi\|$ for some numerical constant $C$. 

\end{remark}

\n
We want to obtain a lower bound for $\Phi_{\ve}^\la(\xi)$. In the next lemma, we first analyze $(  \xi, \Gamma^\lambda_{\off,\ve}   \xi)$.

\begin{lemma}\label{56}
Let $\xi \in \D(\Phi_{\ve}^\la)$, $\lambda>0$ and $\gamma_0$ as in \eqref{ellellprime}. Then, 
\begin{equation}\label{lowergammaoff}
(  \xi, \Gamma^\lambda_{\off,\ve}   \xi) \geq - \gamma_0  \int d\y \frac{|\xi(\y)|^2}{y}. 
\end{equation}
\end{lemma}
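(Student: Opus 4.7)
The plan is to recast $-(\xi,\Gamma^{\lambda}_{\off,\ve}\xi)$ as a matrix element on $L^2(\RE^6)$ and then apply a Cauchy--Schwarz argument with an unbounded position-space weight. A direct Fourier computation, based on the identity $p^2+q^2+\p\cdot\q+\lambda = |\tfrac12\p+\q|^2 + \tfrac34 p^2 + \lambda$ together with the change of variable $\k = -\tfrac12\p-\q$ (which is precisely the relation between a pair-Jacobi momentum and the momenta in a different channel), yields the representation
\[
-(\xi, \Gamma^{\lambda}_{\off,\ve}\xi) \,=\, 8\pi\,(\Phi_\ve,\, R_0^{\lambda} S\,\Phi_\ve)_{L^2(\RE^6)},
\qquad
\Phi_\ve(\x,\y) \,:=\, \chi_\ve(\x)\,\xi(\y),
\]
with $R_0^{\lambda} = (H_0+\lambda)^{-1}$ and $S$ the permutation defined in \eqref{S}. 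This is the right starting point because the two $\hat\chi_\ve$ factors in the definition of $\Gamma^{\lambda}_{\off,\ve}$ are absorbed into $\Phi_\ve$, while $S$ encodes the channel swap that produces the cross term.

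Next I would bound this matrix element by Cauchy--Schwarz with a weight $W$ acting as a power of $|\x|$ on the first Jacobi variable,
\[
|(\Phi_\ve, R_0^{\lambda} S\Phi_\ve)| \,\leq\, \| W^{-1} R_0^{\lambda/2}\Phi_\ve\|\,\cdot\,\|W\,R_0^{\lambda/2} S\Phi_\ve\|,
\]
and use the unitarity of $S$ together with $S\,R_0^{\lambda}=R_0^{\lambda}S$ to eliminate $S$ from the second factor (note that $SWS^{*}$ is multiplication by the norm of a \emph{rotated} Jacobi coordinate, which accounts for the geometric factor $3$ coming from the three particle pairs). When the two resulting norms are expanded via Plancherel and reduced by the substitution $\k=-\tfrac12\p-\q$, the factor $\hat\chi_\ve^2$ gets integrated against $1/k^2$, producing $\ell$; for the second factor an integration by parts that shifts a derivative from the resolvent kernel onto $\hat\chi_\ve$ produces $\ell'$. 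Because $\hat\chi$ is real and Lipschitz, the dependence on $\ve$ drops uniformly and we obtain bounds of the form $\int|\hat\chi(\ve\k)|^2/k^2\,d\k\leq \ell/(4\pi)$ and $\int|\hat\chi'(\ve\k)|^2/k^2\,d\k\leq \ell'/(4\pi)$.

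Finally, the residual $\xi$-dependence, which is left in the form $\int p|\hat\xi(\p)|^2\,d\p$, is converted to the position-space expression $\int|\xi(\y)|^2/y\,d\y$ via the Hardy-type inequality \eqref{3.8a} already invoked in Proposition \ref{prop:upper}. Assembling the constants---$\sqrt{\ell\ell'/2}$ from the two factors of the Cauchy--Schwarz, the factor $3$ from the $SWS^{*}$ conjugation, and $\pi$ from Hardy and the Jacobians---gives $\gamma_0 = 3\pi\sqrt{\ell\ell'/2}$.

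The main obstacle is to identify the correct weight $W$ in the Cauchy--Schwarz step so that the two norms produce \emph{different} integrals, namely $\ell$ from one and $\ell'$ from the other (rather than $\ell$ twice, which would yield a strictly weaker bound). Once the correct $W$ is pinned down and the integration by parts that extracts $\hat\chi'$ is executed, the remainder is careful bookkeeping of numerical constants and Jacobians.
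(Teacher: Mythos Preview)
Your opening representation $-(\xi,\Gamma^\lambda_{\off,\ve}\xi)=8\pi(\chi_\ve\xi,\,SR_0^\lambda\,\chi_\ve\xi)$ is correct and matches the paper. After that, however, your plan diverges from the paper and contains a genuine gap.

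First, a concrete error: you say the residual $\xi$-dependence comes out as $\int p\,|\hat\xi(\p)|^2\,d\p$ and is then ``converted'' to $\int |\xi(\y)|^2/y\,d\y$ via the Hardy-type inequality \eqref{3.8a}. But \eqref{3.8a} reads $\int |\xi|^2/y \leq \tfrac{\pi}{2}\int p\,|\hat\xi|^2$, so it goes the wrong way: an upper bound by $\int p\,|\hat\xi|^2$ cannot be traded for an upper bound by $\int |\xi|^2/y$. Since the lemma asks for a bound by the \emph{smaller} quantity $\int|\xi|^2/y$, any strategy that first lands on $\int p\,|\hat\xi|^2$ is stuck. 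Second, the weight $W$ is never identified and the mechanism by which $\hat\chi'$ (hence $\ell'$) appears from a Cauchy--Schwarz step is left unexplained; you yourself flag this as ``the main obstacle'', which means the key step is missing rather than executed.

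The paper does not use weighted Cauchy--Schwarz at all. It passes to position space, writes
\[
(\xi,\Gamma^\lambda_{\off,\ve}\xi)=-\int d\y\,|\xi(\y)|^2 J_\ve(\y)\;-\;\int d\y\,d\y'\,\overline{\xi(\y)}\,\widetilde J_\ve(\y,\y')\big(\xi(\y')-\xi(\y)\big),
\]
and uses \emph{pointwise positivity} of the kernel $\widetilde J_\ve>0$ together with the symmetry $\widetilde J_\ve(\y,\y')=\widetilde J_\ve(\y',\y)$ to show the second term equals $\tfrac12\int\widetilde J_\ve|\xi(\y')-\xi(\y)|^2\geq 0$. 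This positivity step is the crucial idea; it immediately yields $(\xi,\Gamma^\lambda_{\off,\ve}\xi)\geq -\int|\xi(\y)|^2 J_\ve(\y)\,d\y$, already in terms of $|\xi|^2$ rather than $p|\hat\xi|^2$, so Hardy is never needed. The factor $\hat\chi'$ then enters, not through a resolvent estimate, but through an elementary one-dimensional integration by parts on $\int_0^\infty \tfrac{\sin(ky)}{ky}\,\hat\chi(\ve k)\hat\chi(\ve k/2)\,dk$, combined with the uniform bound $\big|\int_0^k \tfrac{\sin s}{s}\,ds\big|\leq \pi$; the final Cauchy--Schwarz is only on $L^1$ of $(0,\infty)$ to produce $\|\hat\chi\|_{L^2(0,\infty)}\|\hat\chi'\|_{L^2(0,\infty)}=\sqrt{\ell\ell'}/(4\pi)^2$. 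None of this is captured by an operator-theoretic weighted Cauchy--Schwarz as you outline.
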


\begin{proof}
\n
By \eqref{Gaoe}, the change of variable $\k=-\q - \f{1}{2} \p$ and the action of the operator $S$ (see \eqref{S}), we find

\begin{align}\label{Gaoer0}
( \xi, \Gamma^\lambda_{\off,\ve}  \xi)=  - 8\pi \! \int \!\!d\p\,d\k \,   \overline{\hat\chi(\ve k )\hat{\xi}(\p)} \,\,\f{ \hat\chi\big(\ve \big |\frac34 \p -\frac12 \k \big|\big) \hat{\xi} \big(-\frac12 \p -\k \big) }
{k^2+\frac34 p^2 +\lambda}=-8\pi \big(  \chi_\ve \xi , S R_0^\lambda   \chi_\ve \xi  \big) .
\end{align}

\n
It is convenient to write the r.h.s. of \eqref{Gaoer0} in the position space. To this aim, we denote by $R_0^\lambda(\x,\y;\x',\y')$ the integral kernel of the operator $R_0^\lambda$. Its explicit expression is given by the formula
\begin{align}
R_0^\lambda(\x,\y;\x',\y') = &  \frac1{(2\pi)^6} \int d\k\, d\p 
\frac{e^{i \k \cdot (\x - \x')+i \p \cdot (\y -\y')}}{k^2 + \frac34 p^2 +\lambda }  \label{dropped}\\ 
=  & \frac{\lambda}{ 4 \sqrt{3} \pi^3} \, \frac{1}{\frac34 	|\x - \x'|^2 + |\y-\y'|^2} \mathscr{K}_2 \Big(\sqrt{\frac43 \lambda } \sqrt{\frac34|\x - \x'|^2 + |\y-\y'|^2 }\Big)  \nonumber
\end{align}
where $\mathscr{K}_2$ is the modified Bessel function of the third kind and it is a nonnegative  function. 
By  the definition of $S$, see \eqref{S},  we obtain the formula 
\[
 (  \xi, \Gamma^\lambda_{\off,\ve}   \xi)  =  - 8\pi 
\int d\x\,d\y \, \chi_\ve(x) \overline{\xi(\y)} \int d\x'\,d\y' R_0^\lambda\Big(- \frac{1}{2} \x +\y  , -\frac{3}4\x - \frac12\y; \x',\y'\Big) \chi_\ve(x') \xi(\y'). 
\] 
To proceed, we add and subtract to $\xi(\y')$ the function $\xi(\y)$ and obtain 
\be\label{bibop}
 (  \xi, \Gamma^\lambda_{\off,\ve}   \xi)  =  - 
\int d\y |\xi(\y)|^2 J_\ve(\y)  - 
\int d\y \, d\y' \overline{\xi(\y)} \widetilde J_\ve(\y,\y')\big(\xi(\y') - \xi(\y)\big)
\ee
where 
\[
J_\ve(\y) := 8\pi  \int d\x\, d\x' \,  \chi_\ve(x)  \chi_\ve(x') \int d\y' R_0^\lambda\Big(- \frac{1}{2} \x +\y  , -\frac{3}4\x - \frac12\y; \x',\y'\Big)
\]
and 
\[
\widetilde J_\ve(\y,\y') := 8\pi  \int d\x\, d\x' \,  \chi_\ve(x) \chi_\ve(x') R_0^\lambda\Big(- \frac{1}{2} \x +\y  , -\frac{3}4\x - \frac12\y; \x',\y'\Big) . 
\]
We claim that 
\begin{equation}\label{voodoo}
 - 
\int d\y \, d\y' \overline{\xi(\y)} \widetilde J_\ve(\y,\y')\big(\xi(\y') - \xi(\y)\big) \geq 0.
\end{equation}

\n
To prove inequality \eqref{voodoo}, we reason as follows. The integral kernel of $R_0^\lambda$ is (pointwise) positive   and $\chi$ is a  nonnegative function; hence, $\widetilde J_\ve > 0$. Moreover,    the expression
\[
\frac34\Big|- \frac{1}{2} \x +\y - \x'\Big|^2 + \Big| -\frac{3}4\x - \frac12\y-\y'\Big|^2
\]
 is invariant if one changes $(\x,\y;\x',
\y') \to (-\x',\y';-\x,\y)$ (as one can check with a straightforward calculation), and   $R_0^\lambda\Big(- \frac{1}{2} \x +\y  , -\frac{3}4\x - \frac12\y;\x',\y'\Big)$ shares the same property. Taking  also into account the fact that   $\chi$ is spherically symmetric, it follows that $\widetilde J_\ve(\y,\y') = \widetilde J_\ve(\y',\y)$. The symmetry of $\widetilde J_\ve$ implies 
\[\begin{aligned}
&\int d\y \, d\y' \overline{\xi(\y)} \widetilde J_\ve(\y,\y')\big(\xi(\y') - \xi(\y)\big) \\ 
= & 
\frac1{2} \int d\y \, d\y' \overline{\xi(\y)} \widetilde J_\ve(\y,\y')\big(\xi(\y') - \xi(\y)\big)
+
\frac1{2} \int d\y \, d\y' \overline{\xi(\y')} \widetilde J_\ve(\y',\y)\big(\xi(\y) - \xi(\y')\big) \\
= & -  \frac1{2} \int d\y \, d\y' \widetilde J_\ve(\y,\y')\big|\xi(\y') - \xi(\y)\big|^2  \leq  0,
\end{aligned}\]
from which the inequality \eqref{voodoo} immediately follows. By \eqref{bibop} and \eqref{voodoo}, we find

\begin{equation}\label{inge}
 (  \xi, \Gamma^\lambda_{\off,\ve}   \xi) \geq  - 
 \int d\y |\xi(\y)|^2 J_\ve(\y).
\end{equation}

\n
To conclude the proof,  we are left to show that \eqref{inge} implies \eqref{lowergammaoff}. 
The following identity can be obtained  by integration  starting from identity \eqref{dropped}
\[
\int d\y' \, R_0^\lambda\Big(- \frac{1}{2} \x +\y  , -\frac{3}4\x - \frac12\y; \x',\y'\Big) = \frac{e^{-\sqrt\lambda \,  |\y -  (\frac12 \x +\x')|}}{4\pi  \big|\y - (\frac12 \x +\x')\big|}. 
\]
Hence, 
\begin{align}
\int d\y |\xi(\y)|^2 J_\ve(\y) =  &8\pi 
\int d\y |\xi(\y)|^2  \int d\x\, d\x'\,  \chi_\ve(x)  \chi_\ve(x') \frac{e^{-\sqrt\lambda \,  |\y -  (\frac12 \x +\x')|}}{4\pi  \big|\y - (\frac12 \x +\x')\big|} \nonumber \\ 
\leq   &  8\pi  
\int d\y |\xi(\y)|^2  \int d\x\, d\x'\,  \chi_\ve(x)  \chi_\ve(x') \frac{1}{4\pi  \big|\y -  (\frac12 \x +\x')\big|} \nonumber  \\ 
= & 8\pi  \int d\y |\xi(\y)|^2 \int d\k\, \frac{e^{i\k\cdot \y}}{k^2} \hat\chi(\ve k )\hat\chi(\ve k/2) \nonumber  \\ 
= & 32\pi^2  \int d\y |\xi(\y)|^2 \int_0^\infty dk\, \frac{\sin(ky)}{ky} \hat\chi(\ve k )\hat\chi(\ve k/2). \label{apa}
\end{align}
To proceed, we use the identity 
\[
 \int_0^\infty dk\, \frac{\sin(ky)}{ky} \hat\chi(\ve k )\hat\chi(\ve k/2) = -  \int_0^\infty dk\, \int_0^k ds \frac{\sin(sy)}{sy}  \frac{d}{dk} \big(\hat\chi(\ve k )\hat\chi(\ve k/2)\big). 
\]
It is easy  to check that for $k>0$ the function $\int_0^k ds \frac{\sin s}{s}$ has maxima in $k= n\pi$ for $n\in\NA$ odd, minima in $k=n\pi$ for $n\in\NA$ even, it is positive and has an absolute maximum in $k = \pi$. Hence,  $|\int_0^k ds \frac{\sin s}{s}| \leq \int_0^\pi ds \frac{\sin s}{s} \leq \pi$. By the latter considerations, we infer
\[\begin{aligned}
\int_0^\infty dk\, \frac{\sin(ky)}{ky} \hat\chi(\ve k )\hat\chi(\ve k/2) \leq &  \ve  \int_0^\infty dk\, \left|\int_0^k ds \frac{\sin(sy)}{sy} \right| \Big|\hat\chi'(\ve k )\hat\chi(\ve k/2) + \frac12\hat\chi(\ve k)\hat\chi'(\ve k/2)\Big|  \\ 
\leq &   \frac{\pi}{y}  
\Big( \big\| \hat\chi' \, \hat\chi(\cdot/2)\big\|_{L^1(0,\infty)} + 
 \frac12\big\| \hat\chi \, \hat\chi'(\cdot/2)\big\|_{L^1(0,\infty)} 
\Big) \\ 
\leq &   \frac{\pi}{y}  \frac{3\sqrt2}{2} 
\big\| \hat\chi\big\|_{L^2(0,\infty)}\big\| \hat\chi'\big\|_{L^2(0,\infty)} 
=  \frac{\pi}{(4\pi)^2}  \frac{3\sqrt{2}}{2}  \sqrt{\ell \ell'} \, \frac{1}{y}. 
\end{aligned}\]
Using the latter bound  in \eqref{apa} gives 

\begin{equation}\label{apa0}
 \int d\y |\xi(\y)|^2 J_\ve(\y) \leq \gamma_0  \int d\y \frac{|\xi(\y)|^2}{y}, 
\end{equation}

\n
with the  explicit expression for $\gamma_0$. By \eqref{inge} and \eqref{apa0} we conclude the proof of the lemma. 

\end{proof}

\n
Using the previous result, we can now establish a uniform  lower bound for $\Phi_\ve^\lambda ( \xi)$.

\begin{lemma}\label{l:Phive}
Assume \eqref{cod}, $\xi \in \D(\Phi_{\ve}^\la)$  and 
 $\gamma> \gamma_0$ (see  \eqref{ellellprime}). 
Then, there exist $\ve_0,\lambda_1,c>0$ such that 
\begin{equation}\label{lowerbound}
\Phi_\ve^\lambda ( \xi)  \geq c \bigg(\xi,\Big(\ID + \frac1{|\cdot|}\Big)\xi\bigg) 
\end{equation}
for all $\lambda>\lambda_1$ and $0<\ve<\ve_0$. 
\end{lemma}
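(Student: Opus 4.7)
The plan is to combine the three ingredients of $\Phi^\lambda_\ve$ by exploiting the fact that the singular $1/y$ piece of $\Gamma_{\reg}$ will absorb the negative $1/y$ contribution from Lemma \ref{56}, while the diagonal term will control the bounded remainder and deliver the $\|\xi\|^2$ part of the bound. First I would split the regularization as $\Gamma_{\reg} = a(y) + \gamma/y$ using \eqref{ay}, so that
\[
(\xi, \Gamma_{\reg}\xi) = \int d\y\, a(y)|\xi(\y)|^2 + \gamma \int d\y\,\frac{|\xi(\y)|^2}{y},
\]
where $a\in L^\infty(\RE^3)$ by assumption \eqref{cod}. Applying Lemma \ref{56} to $(\xi,\Gamma^\lambda_{\off,\ve}\xi)$ and using $a\geq -\|a\|_{L^\infty}$ gives
\[
\Phi^\lambda_\ve(\xi) \geq (\xi,\Gamma^\lambda_{\diag,\ve}\xi) - \|a\|_{L^\infty}\|\xi\|^2 + (\gamma-\gamma_0)\int d\y\,\frac{|\xi(\y)|^2}{y},
\]
and the last coefficient is strictly positive since $\gamma>\gamma_0$.

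The remaining and main step is to show that the diagonal term delivers a uniform lower bound $\geq C\|\xi\|^2$ with $C$ strictly larger than $\|a\|_{L^\infty}$ once $\lambda$ is large and $\ve$ is small. This reduces to a pointwise lower bound on the symbol $\sqrt{\tfrac34 p^2+\lambda}\, r\bigl(\ve\sqrt{\tfrac34 p^2+\lambda}\bigr)$. I would use two facts from Remark \ref{r:Gadiagve}: the map $s\mapsto s\,r(s)$ is strictly increasing on $[0,\infty)$, and \eqref{hummus} gives $r(s)\to 1$ as $s\to 0$. Fix $s_0>0$ small enough that $r(s)\geq 1/2$ for $s\in[0,s_0]$. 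Writing $A := \tfrac34 p^2+\lambda \geq \lambda$, I split according to whether $\ve\sqrt{A}\leq s_0$ or $\ve\sqrt{A}>s_0$. In the first regime, $\sqrt{A}\,r(\ve\sqrt{A}) \geq \tfrac12\sqrt{\lambda}$; in the second, using monotonicity of $s\,r(s)$,
\[
\sqrt{A}\, r(\ve\sqrt{A}) \;=\; \frac{1}{\ve}\,\ve\sqrt{A}\, r(\ve\sqrt{A}) \;\geq\; \frac{s_0\, r(s_0)}{\ve}.
\]
Hence
\[
(\xi, \Gamma^\lambda_{\diag,\ve}\xi) \;\geq\; \min\!\Big(\tfrac12\sqrt{\lambda},\, s_0 r(s_0)/\ve\Big)\,\|\xi\|^2.
\]

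To conclude, I would choose $\lambda_1$ large enough that $\tfrac12\sqrt{\lambda_1} \geq \|a\|_{L^\infty}+1$ and, independently, $\ve_0$ small enough that $s_0 r(s_0)/\ve_0 \geq \|a\|_{L^\infty}+1$. Then for all $\lambda>\lambda_1$ and $0<\ve<\ve_0$,
\[
\Phi^\lambda_\ve(\xi) \;\geq\; \|\xi\|^2 + (\gamma-\gamma_0)\int d\y\,\frac{|\xi(\y)|^2}{y},
\]
which is \eqref{lowerbound} with $c = \min(1,\gamma-\gamma_0)$. The main technical obstacle is the two-regime analysis of the diagonal symbol: one must ensure that the transition between the ``low momentum'' regime (where $r$ is nearly $1$) and the ``high momentum'' regime (where $r$ decays but $s\,r(s)$ is still bounded below by $\ell$-type quantities) produces a uniform bound; the monotonicity of $s\mapsto s\,r(s)$ from Remark \ref{r:Gadiagve} is the crucial input that makes this clean. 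Everything else is bookkeeping built on Lemma \ref{56}.
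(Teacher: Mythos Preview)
Your proof is correct and follows essentially the same approach as the paper: split $\Gamma_{\reg}$ into $a(y)+\gamma/y$, apply Lemma \ref{56}, and then bound the diagonal symbol $\sqrt{\tfrac34 p^2+\lambda}\,r(\ve\sqrt{\tfrac34 p^2+\lambda})$ from below using the monotonicity of $s\mapsto s\,r(s)$ and $r(0)=1$. The only minor difference is that the paper handles the diagonal symbol in one step---monotonicity gives directly $\sqrt{\tfrac34 p^2+\lambda}\,r(\ve\sqrt{\tfrac34 p^2+\lambda})\geq \sqrt{\lambda_1}\,r(\ve\sqrt{\lambda_1})$, and then $\lambda_1$ and $\ve_0$ are chosen sequentially---whereas your two-regime split is a slightly longer route to the same conclusion.
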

\begin{proof} 

\n
We recall that in the position space the operator $\Gamma_{\reg}$ (see \eqref{gamreg} and \eqref{ay}) is just the multiplication by the function (denoted by the same symbol)
\[ 
\Gamma_{\reg}(y)    = \frac{\gamma}{y}  + a(y).
\] 
So that 
\[
(  \xi, \Gamma_{\reg}  \xi) \geq \gamma  \int d\y \frac{|\xi(\y)|^2}{y}
- \|a\|_{L^\infty}  \|\xi\|^2.
\]

\n
By the  above inequality and Lemma \ref{56}, we infer  
\begin{equation}\label{machine}
\Phi_\ve^\lambda ( \xi)  \geq \int d\p \, \left(
\sqrt{ \f{3}{4} p^2 +\lambda }\,  r\bigg(\ve \sqrt{\frac34 p^2 +\lambda}\bigg) - \|a\|_{L^\infty}\right) |\hat\xi(\p)|^2
+ (\gamma - \gamma_0)\int d\y \frac{|\xi(\y)|^2}{y}.
\end{equation}
Since   $s \to s \,r(s)$ is strictly increasing (see Remark \ref{r:Gadiagve}) one has that for all $\lambda \geq \lambda_1$
\[
\sqrt{ \f{3}{4} p^2 +\lambda }\, r\bigg(\ve \sqrt{\frac34 p^2 +\lambda}\bigg) \geq 
  \sqrt{ \lambda_1 }\, r\big(\ve \sqrt{\lambda_1}\big). 
\]
Now, fix $\lambda_1$ so that $\sqrt{\lambda_1} > \|a\|_{L^\infty}$ and $\ve_0$ so that for all $0<\ve<\ve_0$ there holds $\sqrt{\lambda_1} \,r \big(\ve \sqrt{\lambda_1}\big) >  \|a\|_{L^\infty}$ (which is possible since $\lim_{s\to0} r(s) = 1$). Then 
\[
\sqrt{ \f{3}{4} p^2 +\lambda }\, r\bigg(\ve \sqrt{\frac34 p^2 +\lambda}\bigg)- \|a\|_{L^\infty} \geq \tilde c
\]
for some positive constant $\tilde c$. Hence, from inequality \eqref{machine},  taking $\gamma>\gamma_0$, we  infer the lower bound \eqref{lowerbound} with $c = \min\{\tilde c, \gamma -\gamma_0\}$. 

\end{proof}

\n
To proceed,  we  need some further notation. We denote by $\frac{\gamma}{|\,\cdot\,|}$ the multiplication operator for  $\frac{\gamma}{y}$ and  define the operator 
\[
\nu_\ve : L^2(\RE^3) \to L^2(\RE^3) \qquad \nu_\ve :=  \bigg( \Big(\ID +\frac{\ve}\ell \Big(\Gamma_{\reg} - \frac{\gamma}{|\cdot|}\Big) \Big)^{1/2}+ i \Big(\frac{\ve}{\ell} \frac{\gamma}{|\cdot|}\Big)^{1/2}
\bigg)^{-1} . 
\]
Similarly to $g_\ve$, in the position space the operator $\nu_\ve$ acts as   the multiplication by the function (denoted  by the same symbol)
\[
\nu_\ve(y) =  \bigg( \Big(1 + \frac{\ve}{\ell} a(y)\Big)^{1/2}  + i  \Big(\frac{\ve}{\ell}\frac{\gamma}{y}  \Big)^{1/2} \bigg)^{-1}. 
\]
Obviously, we have
\[
\nu_\ve^* = \bigg( \Big(\ID +\frac{\ve}\ell \Big(\Gamma_{\reg} - \frac{\gamma}{|\cdot|}\Big) \Big)^{1/2}- i \Big(\frac{\ve}{\ell} \frac{\gamma}{|\cdot|}\Big)^{1/2}
\bigg)^{-1} .
\]
Moreover,  there holds the  identity 
\[
g_\ve = -4\pi \frac\ve\ell \nu_\ve \nu_\ve^*,
\]
and the bounds $\|\nu_\ve\|_{\Bou(L^2)} = \|\nu_\ve^*\|_{\Bou(L^2)} \leq \sqrt2 $ for all $\ve\leq \ell/(2\|a\|_{L^\infty})$.

\n
In the next  lemma we study the invertibility of the operator $\nu_\ve^*  \Gamma_\ve^\lambda \nu_\ve$. 

\begin{lemma} \label{l:apriori}For any $\lambda >0 $, there holds true the identity 
\begin{equation}\label{BveGave}
\frac{1}{4\pi} \nu_\ve^*  \Gamma_\ve^\lambda \nu_\ve = \frac{1}{4\pi}\frac{\ell}{\ve}- B_{\ve}^\lambda  \,
\end{equation}

\n
where   $B_\ve^\lambda$ is the bounded (although not uniformly in $\ve$) operator 
\[
 B_\ve^\lambda  : L^2(\RE^3)\to  L^2(\RE^3) \qquad B_\ve^\lambda  := \sum_{j=0}^2\big(  \langle \chi_\ve|   \otimes \nu_\ve^*  \big) S^j R_0^\lambda   \big(|\chi_\ve\rangle  \otimes \nu_\ve \big).\]
Moreover, assume \eqref{cod} and $\gamma> \gamma_0$. 
Then, $\nu_\ve^*  \Gamma_\ve^\lambda \nu_\ve$ is invertible in $L^2(\RE^3)$ for all $\lambda>\lambda_1$ and $\ve$ small enough, with inverse uniformly bounded in $\ve$ and  $\lambda$.
\end{lemma}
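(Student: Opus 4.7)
The plan is to prove the identity \eqref{BveGave} by an explicit Fourier-space computation, and then derive the uniform invertibility from Lemma \ref{l:Phive} via a short form argument combined with a pointwise lower bound on the weight $|\nu_\ve(y)|^2(1+1/y)$.

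For \eqref{BveGave} I would unpack each of the three summands of $B_\ve^\lambda$ in momentum variables, using $\widehat{\chi_\ve}(\k) = \hat\chi(\ve\k)$. For $j=0$, the trace over $\k$ in the first tensor factor produces a Fourier multiplier in $\p$ with symbol $M(\p) = \int d\k\,\hat\chi(\ve k)^2/(k^2+\frac34 p^2+\lambda)$, which by the elementary identity $1/(k^2+s^2)=1/k^2 - s^2/(k^2(k^2+s^2))$ and the definitions of $\ell$ and $r(s)$ equals $(4\pi)^{-1}\big(\ell/\ve - \sqrt{\frac34 p^2+\lambda}\, r(\ve\sqrt{\frac34 p^2+\lambda})\big)$. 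For $j=1,2$, the crucial simplification is $|\frac34\p-\frac12\k|^2 + \frac34|\frac12\p+\k|^2 = k^2 + \frac34 p^2$ (and analogously for $S^2$), so the denominator of $R_0^\lambda$ is preserved by the action of $S^j$; changing variables $\q = -\frac12\p - \k$ (resp.\ $\q = \k - \frac12\p$) in the $\k$-integral reproduces exactly the integral kernel of \eqref{Gaoe}, so each of these terms contributes $-(8\pi)^{-1}\nu_\ve^*\Gamma^\lambda_{\off,\ve}\nu_\ve$. Finally, since $\nu_\ve,\nu_\ve^*,\Gamma_{\reg}$ are mutually commuting multiplication operators in position space with $\nu_\ve^*\nu_\ve=(\ID+\frac{\ve}{\ell}\Gamma_{\reg})^{-1}$, one has $(\ell/\ve)(\ID-\nu_\ve^*\nu_\ve) = \nu_\ve^*\Gamma_{\reg}\nu_\ve$, and summing the three pieces assembles into \eqref{BveGave}. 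As a byproduct, $\nu_\ve^*\Gamma_\ve^\lambda\nu_\ve$, defined a priori only on $\{\eta : \nu_\ve\eta\in\D(\Gamma_{\reg})\}$, extends to a bounded self-adjoint operator on $L^2(\RE^3)$.

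For the uniform invertibility, observe that for any $\eta\in L^2(\RE^3)$ the pointwise bound $|\nu_\ve(y)|^2/y \leq \ell/(\ve\gamma)$ ensures $\nu_\ve\eta\in\D(\Phi_\ve^\lambda)$, so Lemma \ref{l:Phive} applied with $\xi = \nu_\ve\eta$ gives
\[
\Phi_\ve^\lambda(\nu_\ve\eta) \geq c\int d\y\, \frac{y+1}{y+\frac{\ve}{\ell}(y\,a(y) + \gamma)}\,|\eta(\y)|^2.
\]
A direct estimate of the weight $w_\ve(y)$ gives $1/w_\ve(y) \leq (2y+1)/(y+1) \leq 2$ once $\ve\|a\|_{L^\infty}/\ell\leq 1$ and $\ve\gamma/\ell\leq 1$, hence $w_\ve(y)\geq 1/2$ uniformly in $y$ and $\ve\leq\ve_0$. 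Therefore $\Phi_\ve^\lambda(\nu_\ve\eta)\geq c'\|\eta\|^2$ with $c'$ independent of $\ve\leq\ve_0$ and $\lambda>\lambda_1$. Since $\Phi_\ve^\lambda(\nu_\ve\eta) = (\nu_\ve^*\Gamma_\ve^\lambda\nu_\ve\eta,\eta)$ on the same dense subspace, the bounded self-adjoint extension obtained above satisfies $\nu_\ve^*\Gamma_\ve^\lambda\nu_\ve\geq c'\ID$, and is therefore invertible with $\|(\nu_\ve^*\Gamma_\ve^\lambda\nu_\ve)^{-1}\|\leq 1/c'$ uniformly.

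The main obstacle is the Fourier computation behind the $j=0$ term: the divergent $\ell/\ve$ contribution of $M(\p)$ must cancel precisely against the $\ell/\ve$ produced by $(\ell/\ve)(\ID-\nu_\ve^*\nu_\ve)$ coming from $\Gamma_{\reg}$, leaving only the bounded $\Gamma^\lambda_{\diag,\ve}$ piece. Once this cancellation is in place, uniform invertibility follows from Lemma \ref{l:Phive} essentially for free; the positivity of $w_\ve$ reflects the fact that the same $\gamma/y$ singularity that forces $\nu_\ve$ to vanish at $y=0$ is what generates the $c/y$ term on the right of Lemma \ref{l:Phive}, so the two effects cancel in the product.
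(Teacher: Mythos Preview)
Your proof is correct and follows essentially the same route as the paper. The paper carries out the identity \eqref{BveGave} at the level of quadratic forms (computing $(\xi,B_\ve^\lambda\xi)$ via the same $\k$-integral identity for the $j=0$ term and the same change of variables for $j=1,2$), then derives uniform invertibility by identifying $\widetilde\Phi_\ve^\lambda(\xi)=\frac1{4\pi}\Phi_\ve^\lambda(\nu_\ve\xi)$ and invoking Lemma \ref{l:Phive} exactly as you do. Your pointwise weight bound $w_\ve(y)\geq 1/2$ (under the extra smallness $\ve\gamma/\ell\leq 1$, which is harmless) is a slight variant of the paper's estimate, yielding a cleaner universal constant.
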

\begin{proof}
We start by  pointing out the identity (note that, obviously, $\nu_\ve$, $\nu_\ve^*$, and $\Gamma_{\reg}$ commute) 
\begin{equation}\label{1922}
\nu_\ve^*  \Gamma_{\reg}\nu_\ve = - \frac1{4\pi} \frac{\ell}{\ve} \Gamma_{\reg} g_\ve = 
 \frac\ell\ve \Big(\ID + \frac{1}{4\pi} \frac\ell\ve g_\ve \Big),
\end{equation}
hence, $\nu_\ve^*  \Gamma_{\reg}\nu_\ve$ is well defined in $L^2(\RE^3)$ with norm bounded by $3\ell/\ve$.  
Let us consider the quadratic form associated with  $\frac{1}{4\pi}\frac{\ell}{\ve}- B_{\ve}^\lambda $, 
\[
D(\widetilde\Phi_\ve^\lambda) = L^2(\RE^3) \qquad  
\widetilde\Phi_\ve^\lambda ( \xi) = \frac{1}{4\pi}\frac{\ell}{\ve} \| \xi\|^2 -  (\xi,  B_{\ve}^\lambda \xi). 
\]
We set 
\[
B_{\diag,\ve}^\lambda : =  \big(  \langle \chi_\ve|   \otimes \nu_\ve^*  \big) R_0^\lambda   \big(|\chi_\ve\rangle  \otimes \nu_\ve \big) \quad \text{and} \qquad 
B_{\off,\ve}^\lambda : =  \sum_{j=1}^2\big(  \langle \chi_\ve|   \otimes  \nu_\ve^*  \big) S^j R_0^\lambda   \big(|\chi_\ve\rangle  \otimes \nu_\ve \big) 
\]
so that 
\[B_\ve^\lambda  =B_{\diag,\ve}^\lambda + B_{\off,\ve}^\lambda ,
\]
and 
\[
\widetilde\Phi_\ve^\lambda ( \xi)  = \frac{1}{4\pi}\frac{\ell}{\ve} \| \xi\|^2 - (\xi,  B_{\diag,\ve}^\lambda \xi) - (\xi,  B_{\off,\ve}^\lambda \xi).
\]

\n
We first study  the term $ \frac{1}{4\pi}\frac{\ell}{\ve} \| \xi\|^2 - (\xi,  B_{\diag,\ve}^\lambda \xi)$ making use of  two identities. The first one is   
\[
\int d\k \, \frac{|\hat\chi(\ve k)|^2}{k^2 + \frac34 p^2 +\lambda }  =\frac{1}{4\pi} \frac\ell\ve  -   \frac{1}{4\pi} \sqrt{\frac34 p^2 +\lambda } \,  r\bigg(\ve \sqrt{\frac34 p^2 +\lambda}\bigg),
\]
and  gives
\[\begin{aligned}
\big(\chi_\ve \xi, R^\lambda_0  \chi_\ve \xi\big) =   & \int d\p \,  d\k \, \frac{|\hat\chi(\ve k)|^2 | \hat \xi (\p)|^2}{k^2 + \frac34 p^2 +\lambda }  \\ 
= & \frac{1}{4\pi}\frac\ell\ve \|\xi\|^2 -\frac{1}{4\pi}\int d\p \,    \sqrt{\frac34 p^2 +\lambda } \,  r\bigg(\ve \sqrt{\frac34 p^2 +\lambda}\bigg) |\hat \xi(\p)|^2 \equiv 
\frac{1}{4\pi}\frac\ell\ve \|\xi\|^2 -  \frac{1}{4\pi}(\xi, \Gamma^\lambda_{\diag,\ve} \xi) . 
\end{aligned}\]
The second one is 
$
\nu_\ve^* \, \nu_\ve =  
 \ID - \frac\ve\ell \nu_\ve^* \, \Gamma_{\reg} \,  \nu_\ve .
$
Now, we can compute 
\be\label{bdie}
\begin{aligned}
\frac{1}{4\pi} \frac\ell\ve \|\xi\|^2 - (\xi,  B_{\diag,\ve}^\lambda \xi)  = & \frac{1}{4\pi}  \frac\ell\ve \|\xi\|^2 - \Big(\xi, \big(  \langle \chi_\ve|   \otimes \nu_\ve^*  \big) R_0^\lambda   \big(|\chi_\ve\rangle  \otimes \nu_\ve \big) \xi\Big) \\
 = &\frac{1}{4\pi} \frac\ell\ve \|\xi\|^2 - \Big(\chi_\ve (\nu_\ve\xi),R_0^\lambda  \,  \chi_\ve(\nu_\ve  \xi) \Big)\\  
 = & \frac{1}{4\pi} \frac\ell\ve \|\xi\|^2 -\frac{1}{4\pi}  \frac\ell\ve \| \nu_\ve  \xi \|^2+ \frac{1}{4\pi}(\nu_\ve  \xi, \Gamma^\lambda_{\diag,\ve} \nu_\ve  \xi)  \\ 
 = &  \frac{1}{4\pi} (\nu_\ve  \xi, \Gamma_{\reg} \nu_\ve  \xi)  + \frac{1}{4\pi}(\nu_\ve  \xi, \Gamma^\lambda_{\diag,\ve} \nu_\ve  \xi) .
\end{aligned}\ee

\n
Next, we study the term $(\xi,  B_{\off,\ve}^\lambda \xi)$. We have that 
\begin{equation}\label{Phioffve}
\begin{aligned}
-(\xi,B_{\off,\ve}^\lambda \xi)  = & - \Big(\xi, \sum_{j=1}^2\big(  \langle \chi_\ve|   \otimes \nu_\ve^*  \big) S^j R_0^\lambda   \big(|\chi_\ve\rangle  \otimes \nu_\ve \big) \xi\Big) \\ 
= & - \sum_{j=1}^2  \Big(  \chi_\ve (\nu_\ve\xi) , S^j R_0^\lambda   \chi_\ve(\nu_\ve \xi ) \Big) \\
= &   - \int d\p\,d\k \frac{   \overline{\hat\chi(\ve k )}\overline{\widehat{\nu_\ve\xi}(\p)} \, \hat\chi\big(\ve \big |\frac34 \p -\frac12 \k \big|\big) \widehat{\nu_\ve\xi} \big(-\frac12 \p -\k \big) }
{k^2+\frac34 p^2 +\lambda}\\ 
&  - \int d\p\,d\k \frac{   \overline{\hat\chi(\ve k )}\overline{ \widehat{\nu_\ve \xi}(\p)} \, \hat\chi\big(\ve \big |\frac34 \p +\frac12 \k \big|\big) \widehat{\nu_\ve\xi}\big(-\frac12 \p + \k \big) }
{k^2+\frac34 p^2 +\lambda}\\ 
= & 
-2  \int d\p\,d\q \frac{   \overline{\hat\chi\big(\ve \big |\p +\frac12 \q \big|\big)} \hat\chi\big(\ve \big |\frac12 \p + \q \big|\big) \overline{ \widehat{\nu_\ve \xi}(\p)} \,  \widehat{\nu_\ve \xi}(\q)  }
{p^2+ q^2 + \p\cdot \q+\lambda} \\
\equiv &   \frac{1}{4\pi} (\nu_\ve  \xi, \Gamma^\lambda_{\off,\ve} \nu_\ve  \xi) . 
\end{aligned}
\end{equation}

\n
By \eqref{bdie} and \eqref{Phioffve}, we find 
\[
\begin{aligned}
\widetilde\Phi_\ve^\lambda ( \xi) =  & \frac{1}{4\pi}\frac{\ell}{\ve} \| \xi\|^2 -  (\xi,  B_{\ve}^\lambda \xi)   \\ 
= & \frac{1}{4\pi} \big((\nu_\ve  \xi, \Gamma_{\reg} \nu_\ve  \xi)  +  (\nu_\ve  \xi, \Gamma^\lambda_{\diag,\ve} \nu_\ve  \xi)   + (\nu_\ve  \xi, \Gamma^\lambda_{\off,\ve} \nu_\ve  \xi)  \big) \\
= &  \frac{1}{4\pi}(\xi,  \nu_\ve^*  \Gamma_\ve^\lambda \nu_\ve\xi)
\end{aligned}
\]
which concludes the proof of identity \eqref{BveGave}. 

\n
We are left to prove the second part of the lemma.  We  use Lemma \ref{l:Phive}  by noticing the identity $\widetilde\Phi_\ve^\lambda ( \xi) =  \frac{1}{4\pi} \Phi_\ve^\lambda ( \nu_\ve\xi)$ with $\Phi_\ve^\lambda (\xi)$ defined in  \eqref{Phive1}--\eqref{Phive2}. We stress that the identity makes sense for all $\xi\in L^2(\RE^3)$ since $\nu_\ve \xi \in D(\Phi_\ve^\lambda)$ (recall the remark after  \eqref{1922}).

\n
Then, from Lemma \ref{l:Phive} we obtain 
\begin{equation} \label{LBwidetildePhi}
\widetilde\Phi_\ve^\lambda ( \xi)  \geq \frac{1}{4\pi}  c_0 \bigg(\nu_\ve \xi,\Big(\ID + \frac1{|\cdot|}\Big)\nu_\ve\xi\bigg) = \frac{1}{4\pi} c_0  \int d\y \frac{1+ \frac{1}{y}}{1 + \frac\ve{\ell} \Big(\beta + \frac{\gamma}{y} \theta(y)\Big)}|\xi(\y)|^2 \geq c\,  \|\xi \|^2, 
\end{equation}
where we used the inequality 
\[
\frac{1+ \frac{1}{y}}{1 + \frac\ve{\ell} \Big(\beta + \frac{\gamma}{y} \theta (y)\Big)} 
 \geq \,  \min 
\left\{\f{2}{3} , \f{2|\beta|}{\gamma} \right\} .
\]
To see that the latter inequality holds true, recall that we are assuming $\ve < \ell/(2\|a\|_{L^\infty})$ so that $0<1 + \frac\ve{\ell} \beta + \frac\ve{\ell} \frac{\gamma}{y} \theta (y) = 1 + \frac\ve{\ell} a(y) + \frac\ve{\ell} \frac{\gamma}{y}  \leq \frac32+ \frac\ve{\ell} \frac{\gamma}{y}  $, and 
\[
\frac{1+ \frac{1}{y}}{1 + \frac\ve{\ell} \Big(\beta + \frac{\gamma}{y} \theta (y)\Big)} 
\geq  
\frac{1+ \frac{1}{y}}{\frac32 + \frac\ve{\ell} \frac{\gamma}{y} } = 2\,  \f{y+1}{3y + 2 \frac{\ve}{\ell}\gamma} \geq \,  \min 
\left\{\f{2}{3} , \f{\ell}{\ve}\frac{1}{\gamma} \right\} \geq \min 
\left\{\f{2}{3} , \frac{2\|a\|_{L^\infty}}{\gamma} \right\} .
\]
Since $\widetilde \Phi_\ve^\lambda$ is the quadratic form associated with $\frac{1}{4\pi}\frac{\ell}{\ve}- B_{\ve}^\lambda $, the bound \eqref{LBwidetildePhi} implies that $\frac{1}{4\pi}\frac{\ell}{\ve}- B_{\ve}^\lambda $ is invertible, with an inverse bounded by $1/c$, and this concludes the proof. 

\end{proof}

\n
The last preparatory step is the definition of the bounded operator 
\[ 
A_\ve^\lambda  : L^2(\RE^3)\to  L^2(\RE^6) \qquad  A_\ve^\lambda  := {4\pi}R_0^\lambda  \big(|\chi_\ve\rangle  \otimes \nu_\ve \big) \qquad \lambda >0.
\] 
In Fourier transform 
\[
(\hat A_\ve^\lambda \hat\xi)(\k,\p)  = 4\pi \frac{\hat \chi(\ve k)}{k^2 +\frac34 p^2 +\lambda} \widehat{\nu_\ve \xi}(\p) .
\]
Hence (recall that $|\hat \chi(k)| \leq  (2\pi)^{-3/2}$ so that $ 4\pi |\hat \chi(k)| \leq \sqrt{2/\pi}$)
\begin{equation}\label{1502}
\big|(\hat A_\ve^\lambda \hat\xi)(\k,\p)\big| \leq \sqrt{\frac2\pi} \frac{1}{k^2 + \frac34 p^2 +\lambda} \widehat{\nu_\ve \xi}(\p),
\end{equation}
$\| A_\ve^\lambda \xi \| \leq \|G^\lambda\|_{\Bou(L^2(\RE^3),L^2(\RE^6))}\|\nu_\ve\xi\|  \leq \sqrt2 \|G^\lambda\|_{\Bou(L^2(\RE^3),L^2(\RE^6))}\|\xi\| $, and $A_\ve^\lambda$ is uniformly bounded in $\ve$.

\n
In what follows, we will use the identity 
\[
|\chi_\ve\rangle \langle \chi_\ve|   \otimes g_\ve  = - 4\pi \frac\ve\ell \big(|\chi_\ve\rangle  \otimes \nu_\ve \big)\big(  \langle \chi_\ve|   \otimes  \nu_\ve^* \big)
\]
and write 
\[
H_\ve = H_0   - 4\pi \frac\ve\ell  \sum_{j=0}^2 S^j \big(|\chi_\ve\rangle  \otimes \nu_\ve \big)\big(  \langle \chi_\ve|   \otimes  \nu_\ve^* \big)  {S^j}^*.
\]

\n
We are now ready to formulate and prove the main result of this section.

\begin{theorem}\label{t:KKK}Assume \eqref{cod} and  $\gamma_0$, $\lambda_1$, and $\ve_0$  as in  Lemma  \ref{l:apriori}. Then, for all $\gamma>\gamma_0$, $0<\ve<\ve_0$,  and $\lambda>\lambda_1$, the operator $H_\ve +\lambda$ has a bounded inverse in $L^2_{\textup{sym}}(\RE^6)$. Moreover, denoting its inverse by $R_\ve^\lambda$, one has the Konno--Kuroda formula 
\begin{equation}\label{Rve}
R_\ve^\lambda = 
 R_0^\lambda   +\frac{1}{4\pi}  \sum_{j=0}^2 S^j A_\ve^\lambda  ( \nu_\ve^*  \Gamma_\ve^\lambda \nu_\ve )^{-1}  A_\ve^{\lambda*}. 
\end{equation}
\end{theorem}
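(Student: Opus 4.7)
The plan is to verify \eqref{Rve} by a direct computation. Denote the right-hand side of \eqref{Rve} by $\tilde R_\ve^\lambda$; I will show that $\tilde R_\ve^\lambda$ is a bounded operator on $L^2_{\sym}(\RE^6)$ whose range is contained in $\D(H_\ve) = H^2(\RE^6)\cap L^2_{\sym}(\RE^6)$, and that $(H_\ve+\lambda)\tilde R_\ve^\lambda = \ID$ on $L^2_{\sym}(\RE^6)$. Self-adjointness of $H_\ve$ then forces $\ker(H_\ve+\lambda) = (\ran(H_\ve+\lambda))^\perp = \{0\}$, so $-\lambda\in\rho(H_\ve)$ and the closed graph theorem identifies $\tilde R_\ve^\lambda$ with $R_\ve^\lambda = (H_\ve+\lambda)^{-1}$.

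The boundedness and the domain property are immediate consequences of the preceding material: $M := (\nu_\ve^*\Gamma_\ve^\lambda\nu_\ve)^{-1}$ is well defined and uniformly bounded in $\ve$ and $\lambda$ by Lemma \ref{l:apriori}; $A_\ve^\lambda$ is uniformly bounded and has range in $H^2(\RE^6)$ because $A_\ve^\lambda\eta = 4\pi R_0^\lambda(\chi_\ve\otimes\nu_\ve\eta)$ with $\chi_\ve\otimes\nu_\ve\eta\in L^2(\RE^6)$; finally, $S$ commutes with $R_0^\lambda$ (since $k^2+\tfrac34 p^2$ is invariant under $S$) and $S^3=\ID$ (a consequence of $S^*=S^2$ recalled just after \eqref{S2}), so that $S^j\sum_{k=0}^2 S^k = \sum_{k=0}^2 S^k$ and $\sum_{k=0}^2 S^k A_\ve^\lambda M A_\ve^{\lambda*}\psi$ automatically lies in $L^2_{\sym}(\RE^6)$ for every $\psi\in L^2_{\sym}(\RE^6)$.

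The core of the proof is to split $H_\ve = H_0+V_\ve$ with $V_\ve = -4\pi\tfrac{\ve}{\ell}\sum_{k=0}^2 S^k(|\chi_\ve\rangle\otimes\nu_\ve)(\langle\chi_\ve|\otimes\nu_\ve^*){S^k}^*$, and to apply $(H_0+\lambda)A_\ve^\lambda = 4\pi(|\chi_\ve\rangle\otimes\nu_\ve)$ to get
\[
(H_0+\lambda)\tilde R_\ve^\lambda \psi = \psi + \sum_{k=0}^2 S^k(|\chi_\ve\rangle\otimes\nu_\ve)\, M A_\ve^{\lambda*}\psi.
\]
For the potential term, the symmetry of $\tilde R_\ve^\lambda\psi$ lets one drop ${S^k}^*$; expanding $(\langle\chi_\ve|\otimes\nu_\ve^*)\tilde R_\ve^\lambda\psi$ via the identities $(\langle\chi_\ve|\otimes\nu_\ve^*)R_0^\lambda = \tfrac{1}{4\pi}A_\ve^{\lambda*}$ and $\sum_{j=0}^2(\langle\chi_\ve|\otimes\nu_\ve^*)S^j A_\ve^\lambda = 4\pi B_\ve^\lambda$ yields
\[
V_\ve\tilde R_\ve^\lambda\psi = -\sum_{k=0}^2 S^k(|\chi_\ve\rangle\otimes\nu_\ve)\Big(\tfrac{\ve}{\ell}A_\ve^{\lambda*}\psi + \tfrac{4\pi\ve}{\ell}B_\ve^\lambda M A_\ve^{\lambda*}\psi\Big).
\]
Adding the two contributions, the residual bracket reduces to $(\ID - 4\pi\tfrac{\ve}{\ell}B_\ve^\lambda)MA_\ve^{\lambda*}\psi - \tfrac{\ve}{\ell}A_\ve^{\lambda*}\psi$, and by Lemma \ref{l:apriori}, rewritten as $\ID - 4\pi\tfrac{\ve}{\ell}B_\ve^\lambda = \tfrac{\ve}{\ell}\nu_\ve^*\Gamma_\ve^\lambda\nu_\ve$, this equals $\tfrac{\ve}{\ell}A_\ve^{\lambda*}\psi-\tfrac{\ve}{\ell}A_\ve^{\lambda*}\psi = 0$, proving $(H_\ve+\lambda)\tilde R_\ve^\lambda\psi = \psi$. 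The only delicate point is bookkeeping: one has to use at the right moment the commutation $[S,R_0^\lambda]=0$, the symmetry of $\tilde R_\ve^\lambda\psi$, and the invariance of the outer sum $\sum_k S^k$, so that the scalar-like operator $B_\ve^\lambda$ emerges cleanly; once this has been arranged, Lemma \ref{l:apriori} produces the exact cancellation and no further analytic input is required.
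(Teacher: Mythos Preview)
Your proposal is correct and follows essentially the same strategy as the paper: define the candidate operator given by the right-hand side of \eqref{Rve} and verify directly that it inverts $H_\ve+\lambda$. The paper first motivates the formula by a heuristic derivation (assuming a solution $\psi_\ve$ exists and manipulating until the formula emerges) and then states that the verification of both $(H_\ve+\lambda)R_\ve^\lambda=\ID$ and $R_\ve^\lambda(H_\ve+\lambda)=\ID$ is ``a straightforward calculation''; you skip the heuristic part, supply the details of the forward verification, and replace the second identity by the self-adjointness argument $\ker(H_\ve+\lambda)=(\ran(H_\ve+\lambda))^\perp=\{0\}$, which is a clean shortcut.
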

\begin{remark}As a matter of fact, see, e.g., \cite[Theorem B.1]{GHL}, the Konno--Kuroda formula holds true for all the complex $\lambda$ such that $-\lambda \in \rho(H_0) \cap \rho(H_\ve)$. The relevant information of Th. \ref{t:KKK} is that there exists a real $\lambda_1$, independent from $\ve$, such that $R_\ve^\lambda$ is a well defined bounded operator for all $\lambda >\lambda_1$. This is equivalent to the lower bound $\inf \sigma(H_\ve) \geq - \lambda_1$. 
\end{remark}
\begin{proof}[Proof of Theorem \ref{t:KKK}]
The  action of $H_\ve$  on a (symmetric) wavefunction in its domain  is given by 
\[
H_\ve \psi = H_0 \psi + \sum_{j=0}^2 S^j  \big(|\chi_\ve\rangle \langle \chi_\ve|   \otimes g_\ve \big) \psi \qquad \forall \psi \in \D(H_\ve).
\]
We describe how to obtain formula \eqref{Rve}. For a given function $\phi \in L^2_{\sym}(\RE^6)$, and $\lambda$ large enough,  assume that  $ \psi_\ve\in \D(H_\ve)\subset L^2_{\sym}(\RE^6)$  is a solution in of the equation 
\[
(H_\ve +\lambda) \psi_\ve = \phi . 
\]
The latter gives 
\[
 (H_0+\lambda)\psi_\ve =  \phi - \sum_{j=0}^2 S^j  \big(|\chi_\ve\rangle \langle \chi_\ve|   \otimes g_\ve \big) \psi_\ve, 
\]
and, recalling that $R_0^\lambda =(H_0+\lambda)^{-1}$ is a well defined bounded operator, 
\[
\psi_\ve = R_0^\lambda  \phi - \sum_{j=0}^2 S^j R_0^\lambda \big(|\chi_\ve\rangle \langle \chi_\ve|   \otimes g_\ve \big) \psi_\ve,
\]
where we used the fact that $R_0^\lambda $ and $S$ commute. Hence, 
\begin{equation}\label{psive}
 \psi_\ve = R_0^\lambda  \phi +4\pi \frac\ve\ell  \sum_{j=0}^2 S^j R_0^\lambda \big(|\chi_\ve\rangle  \otimes \nu_\ve \big)\big(  \langle \chi_\ve|   \otimes \nu_\ve^* \big)\psi_\ve.
\end{equation}
Set 
\[ 
h_\ve := \big(  \langle \chi_\ve|   \otimes  \nu_\ve^* \big)\psi_\ve ,
\]
  and rewrite  \eqref{psive} as
\begin{equation}\label{psive2}
 \psi_\ve = R_0^\lambda  \phi + \frac\ve\ell  \sum_{j=0}^2 S^j A_\ve^\lambda h_\ve .
\end{equation}
We want to obtain a formula for $h_\ve$. To this aim,  apply the operator $\big(  \langle \chi_\ve|   \otimes \nu_\ve^* \big)$ to (the left of) identity \eqref{psive}. By simple algebraic manipulations, it follows that 
\[
\Big(\ID -4\pi   \frac\ve\ell \sum_{j=0}^2\big(  \langle \chi_\ve|   \otimes \nu_\ve^*  \big) S^j R_0^\lambda   \big(|\chi_\ve\rangle  \otimes \nu_\ve \big) \Big)h_\ve
= 
\big(  \langle \chi_\ve|   \otimes \nu_\ve^* \big) R_0^\lambda \phi.  
\]
By Lemma \ref{l:apriori}, the operator at the l.h.s. is invertible and 
\begin{equation}\label{hve2}
h_\ve =\frac{1}{4\pi}\frac{\ell}{\ve} \Big(\frac{1}{4\pi}\frac{\ell}{\ve}  -  B_\ve^\lambda \Big)^{-1} \big(  \langle \chi_\ve|   \otimes  \nu_\ve^*  \big) R_0^\lambda  \phi
=\frac{1}{4\pi} \frac{\ell}{\ve} \big(\nu_\ve^* \Gamma^\lambda_\ve \nu_\ve \big)^{-1} A^{\lambda*}_\ve  \phi,
\end{equation}
where we used the identity 
\[
A^{\lambda*}_\ve   =4\pi \big(  \langle \chi_\ve|   \otimes  \nu_\ve^*  \big) R_0^\lambda .
\]

\n
Using the identity   \eqref{hve2} in   \eqref{psive2} we obtain the formula 
\[
 \psi_\ve = 
 R_0^\lambda  \phi +\frac{1}{4\pi}  \sum_{j=0}^2 S^j A_\ve^\lambda  ( \nu_\ve^*  \Gamma_\ve^\lambda \nu_\ve )^{-1}  A_\ve^{\lambda*}\phi.
\]
Now, as suggested from the formula above, one can define $R_\ve^\lambda$ as in  \eqref{Rve} and show by a straightforward calculation  that $(H_\ve+\lambda)R_\ve^\lambda =\ID$ on $L^2(\RE^6)$ and $ R_\ve^\lambda (H_\ve+\lambda) = \ID$ on $\D(H_\ve)$, from which  it follows that $R_\ve^\lambda = (H_\ve +\lambda)^{-1}$.  

\end{proof}

\vs
\section{Norm Resolvent Convergence \label{s:src}}
In this section we prove that $H_\ve$ converges to $H$ in the norm resolvent sense and we give an estimate of the rate of convergence.

\begin{proof}[Proof of Theorem \ref{t:main}]It is enough to prove the statement for some fixed  $z = -\lambda < -\lambdamin$, then it holds
 true for a generic $z\in\CO\backslash   [-\lambdamin,\infty)$ by analytic continuation.

\n
Since 
\[
R_\ve^\lambda  - R^\lambda = \frac{1}{4\pi} \big(A_\ve^\lambda  ( \nu_\ve^*  \Gamma_\ve^\lambda \nu_\ve )^{-1}  A_\ve^{\lambda*} - G^\lambda (\Gamma^\la)^{-1}  G^{\lambda*}\big), 
\]
 we need to show that for some $\lambda$ large enough there holds 
 \[
 \|(A_\ve^\lambda  ( \nu_\ve^*  \Gamma_\ve^\lambda \nu_\ve )^{-1}  A_\ve^{\lambda*} - G^\lambda (\Gamma^\la)^{-1}  G^{\lambda*}) \phi\|  \leq c\, \ve^{\de} \|\phi\|\qquad \forall \phi\in L^2_{\sym}(\RE^6). 
\]
Without loss of generality, here as well as in the proof of the Lemmata \ref{l:AveCveGG*} and \ref{l:GammaveGamma},  we can assume $\lambda>1$. All the generic constants denoted by $c$ are independent from $\lambda$ for $\lambda>1$.  
 We start with the trivial identity 
\begin{equation*}
\begin{aligned}
A_\ve^\lambda  ( \nu_\ve^*  \Gamma_\ve^\lambda \nu_\ve )^{-1}  A_\ve^{\lambda*} - G^\lambda (\Gamma^\la)^{-1}  G^{\lambda*} =  & 
(A_\ve^\lambda -  G^\lambda ) ( \nu_\ve^*  \Gamma_\ve^\lambda \nu_\ve )^{-1}  A_\ve^{\lambda*}  \\
& + G^\lambda ( \nu_\ve^*  \Gamma_\ve^\lambda \nu_\ve )^{-1}  (A_\ve^{\lambda*} - G^{\lambda*}) \\ 
&+ G^\lambda \big(  ( \nu_\ve^*  \Gamma_\ve^\lambda \nu_\ve )^{-1} -   (\Gamma^\la)^{-1} \big) G^{\lambda*} . 
\end{aligned}
\end{equation*}
By Lemma \ref{l:AveCveGG*}, and since   $( \nu_\ve^*  \Gamma_\ve^\lambda \nu_\ve )^{-1}$ and $A_\ve^\lambda$ are uniformly bounded in $\ve$  (see Lemma \ref{l:apriori} and  the remark after  \eqref{1502}), we infer 
\[
\|(A_\ve^\lambda -  G^\lambda ) ( \nu_\ve^*  \Gamma_\ve^\lambda \nu_\ve )^{-1}  A_\ve^{\lambda*} \|_{\Bou(L^2(\RE^6))} \leq c \, \ve^{\de} \qquad 0<\de<1/2
\]
 and
\[
\|G^\lambda ( \nu_\ve^*  \Gamma_\ve^\lambda \nu_\ve )^{-1}  (A_\ve^{\lambda*} - G^{\lambda*})  \|_{\Bou(L^2(\RE^6))} \leq c\, \ve^{\de} \qquad 0<\de<1/2.
\]
We are left  to prove that 
\begin{equation}\label{stay}
 \|(G^\lambda \big(  ( \nu_\ve^*  \Gamma_\ve^\lambda \nu_\ve )^{-1} -   (\Gamma^\la)^{-1} \big) G^{\lambda*} \phi\|  \leq c\, \ve^{\de} \|\phi\|.
\end{equation}
We note the identity 
\[
  ( \nu_\ve^*  \Gamma_\ve^\lambda \nu_\ve )^{-1} -   (\Gamma^\la)^{-1} = -   ( \nu_\ve^*  \Gamma_\ve^\lambda \nu_\ve )^{-1} \big(\nu_\ve^*  \Gamma_\ve^\lambda \nu_\ve - \Gamma^\lambda \big) (\Gamma^\la)^{-1}. 
\]
So that,  taking into account the fact that $G^\lambda$ is bounded and $( \nu_\ve^*  \Gamma_\ve^\lambda \nu_\ve )^{-1}$ is uniformly bounded in $\ve$, we infer that  \eqref{stay} is a consequence of 
\[
 \big\| \big(\nu_\ve^*  \Gamma_\ve^\lambda \nu_\ve - \Gamma^\lambda \big)(\Gamma^\la)^{-1} G^{\lambda*}\phi\big\|  \leq c \ve^{\de} \|\phi\|
\]
which holds true by Lemma \ref{l:GammaveGamma}. 

\end{proof}

\vs

\begin{lemma} \label{l:AveCveGG*} There exists $c>0$ such that 
\[
 \|A_\ve^\lambda -  G^\lambda\|_{\Bou(L^2(\RE^3),L^2(\RE^6))}  \leq c\, \ve^{\delta} 
\]
for all $0<\delta < 1/2$  and for all $\lambda >1$.  
\end{lemma}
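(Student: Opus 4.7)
The natural plan is to split the difference so that the effect of replacing the point delta by $\chi_\ve$ is separated from the effect of the multiplicative renormalization $\nu_\ve$. Using the fact that $4\pi\hat\chi(0)=4\pi(2\pi)^{-3/2}=\sqrt{2/\pi}$, I would set
\[
T_\ve \,:=\, 4\pi R_0^\lambda(|\chi_\ve\rangle\otimes\ID)\;-\;G^\lambda
\]
and decompose
\[
A_\ve^\lambda - G^\lambda \;=\; T_\ve\, \nu_\ve \;+\; G^\lambda(\nu_\ve - \ID).
\]
The two pieces are then estimated by completely different mechanisms: $T_\ve$ is controlled directly as a Fourier multiplier, while the piece containing $\nu_\ve-\ID$ requires exploiting the smoothing of $G^\lambda$.

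For the first piece, in Fourier space $T_\ve$ multiplies by $[4\pi\hat\chi(\ve k)-\sqrt{2/\pi}](k^2+\tfrac34 p^2+\lambda)^{-1}=4\pi[\hat\chi(\ve k)-\hat\chi(0)](k^2+\tfrac34 p^2+\lambda)^{-1}$. Combining the Lipschitz bound $|\hat\chi(\ve k)-\hat\chi(0)|\le c\,\ve k$ (Remark~\ref{a:chi}) with the trivial $L^\infty$ bound and interpolating gives $|4\pi\hat\chi(\ve k)-\sqrt{2/\pi}|\le c(\ve k)^\delta$ for any $0\le\delta\le 1$. Computing the radial $\k$-integral
\[
\int_0^\infty \frac{k^{2+2\delta}}{(k^2+A^2)^2}\,dk = c\,A^{2\delta-1},
\]
which converges precisely for $\delta<1/2$, with $A^2=\tfrac34 p^2+\lambda\ge \lambda$, yields $\|T_\ve\|_{\Bou(L^2,L^2)}\le c\,\ve^\delta$. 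Composing with $\nu_\ve$, which is uniformly bounded on $L^2$ by $\sqrt 2$ (see the remarks preceding Lemma~\ref{l:apriori}), gives $\|T_\ve\nu_\ve\xi\|_{L^2(\RE^6)}\le c\,\ve^\delta\|\xi\|_{L^2}$.

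For the second piece I would invoke Proposition~\ref{p:GG*} with $s=0$, which supplies $G^\lambda\in\Bou(H^{-1/2}(\RE^3),L^2(\RE^6))$, so that it is enough to prove $\|(\nu_\ve-\ID)\xi\|_{H^{-1/2}}\le c\,\ve^{1/2}\|\xi\|_{L^2}$. An elementary manipulation of
\[
\nu_\ve(y) = \Bigl[(1+\tfrac{\ve}{\ell}a(y))^{1/2}+i(\tfrac{\ve\gamma}{\ell y})^{1/2}\Bigr]^{-1}
\]
together with the bound $|(1+\tfrac{\ve}{\ell}a(y))^{1/2}-1|\le c\ve$ and $|\nu_\ve(y)|\le\sqrt 2$ for $\ve$ small gives the pointwise estimate $|\nu_\ve(y)-1|\le c\ve+c\sqrt{\ve/y}$. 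By duality
\[
\|(\nu_\ve-\ID)\xi\|_{H^{-1/2}}\le \|\xi\|_{L^2}\sup_{\|\eta\|_{H^{1/2}}\le1}\|(\bar\nu_\ve-1)\eta\|_{L^2},
\]
and using the pointwise bound squared, combined with the Hardy-type estimate \eqref{3.8a} $\int|\eta|^2/y\,d\y\le\tfrac\pi2\int p|\hat\eta|^2 d\p\le c\|\eta\|_{H^{1/2}}^2$, one finds $\|(\bar\nu_\ve-1)\eta\|_{L^2}^2\le c\ve\|\eta\|_{H^{1/2}}^2$. Hence $\|G^\lambda(\nu_\ve-\ID)\xi\|_{L^2}\le c\,\ve^{1/2}\|\xi\|_{L^2}$, and adding the two contributions gives the claimed bound with exponent $\delta<1/2$.

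The conceptual obstacle is that one cannot simply estimate $(\nu_\ve-\ID)\xi$ in $L^2$: the dominant part $\sqrt{\ve/y}$ in $|\nu_\ve(y)-1|$ is not $L^\infty$, and $\int|\xi|^2/y\,d\y$ is ill defined for a generic $\xi\in L^2$. The singularity must instead be absorbed into the half-derivative that $G^\lambda$ gives back, which is exactly the content of Proposition~\ref{p:GG*}. The same half-derivative threshold explains why both the $\k$-integral in the $T_\ve$ estimate and the mapping property of $G^\lambda$ collapse precisely at $\delta=1/2$, forcing the strict inequality in the statement.
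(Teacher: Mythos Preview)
Your proof is correct and follows essentially the same strategy as the paper: split off the effect of replacing $\chi_\ve$ by $\delta$ from the effect of $\nu_\ve-\ID$, estimate the former directly in Fourier variables, and absorb the $y^{-1/2}$ singularity in $|\nu_\ve-1|$ via the half-derivative of smoothing together with the Hardy-type inequality \eqref{3.8a}. The only cosmetic difference is the placement of the cut: the paper writes $A_\ve^\lambda-G^\lambda=(A_{1,\ve}^\lambda-G^\lambda)+A_{2,\ve}^\lambda$ with $A_{2,\ve}^\lambda=4\pi R_0^\lambda(|\chi_\ve\rangle\otimes(\nu_\ve-\ID))$, so that the $(\nu_\ve-\ID)$ piece still carries $\chi_\ve$ and the smoothing is extracted by bounding $\|(-\Delta+1)^{-1/4}(\nu_\ve-1)\xi\|$ together with a $\sup_p$ estimate on the remaining multiplier; you instead route that piece through $G^\lambda$ itself, which lets you invoke Proposition~\ref{p:GG*} with $s=0$ directly and is marginally cleaner. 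For the $T_\ve$ piece the paper uses the change of variables $\k\mapsto\k/\ve$ and the finiteness of $\int|\hat\chi(k)-\hat\chi(0)|^2 k^{-4}\,d\k$ to get $\ve^{1/2}$ outright, whereas your interpolation gives $\ve^\delta$; both suffice.
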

\begin{proof}
We use the identity 
\begin{equation}\label{idnuve}
\nu_\ve =  \ID +(\nu_\ve  - \ID)
\end{equation}
and write 
\[
A_\ve^\lambda = A_{1,\ve}^\lambda + A_{2,\ve}^\lambda 
\]
with 
\[
 A_{1,\ve}^\lambda  =  4\pi R_0^\lambda  \big(|\chi_\ve\rangle  \otimes \ID  \big) \qquad \text{and} \qquad  A_{2,\ve}^\lambda =    4\pi  R_0^\lambda  \big(|\chi_\ve\rangle  \otimes  (\nu_\ve-\ID)    \big) . 
\]
For all $\xi\in L^2(\RE^3)$, one has  
\[
\|(A_{1,\ve}^\lambda - G^\lambda)\xi  \|^2 
= (4\pi)^2 \int d\k\,d\p \frac{|\hat \chi(\ve k) - \hat\chi(0)|^2}{(k^2 +\frac34 p^2 +\lambda)^2} |\hat \xi(\p)|^2 
\leq \ve  \int d\k \frac{|\hat \chi( k) - \hat \chi(0)|^2}{k^4}   \|\xi\|^2 = c \,  \ve \|\xi\|^2
\] 
Concerning $A_{2,\ve}^\la$, we note that one has 
\[\begin{aligned}
\|A_{2,\ve}^\lambda \xi  \|^2 = &(4\pi)^2 \int d\k\,d\p \frac{|\hat \chi(\ve k) |^2}{(k^2 +\frac34 p^2 +\lambda)^2} \Big|\Fou\Big(( \nu_\ve -1)    \big)\xi\Big)(\p) \Big|^2 \\ 
 \leq  & (4\pi)^2\big\|(-\Delta +1)^{-\frac14} ( \nu_\ve -1)   \xi\big\|^2 \,  \sup_{p>0} \int d\k \frac{|\hat \chi(\ve k) |^2 (p^2+1)^{\frac12}}{(k^2 +\frac34 p^2 +\lambda)^2}  .
\end{aligned}
\]
We have
\[
 \nu_\ve(y)-1= \f{
1-\sqrt{  1 + \frac{\ve}{\ell}a(y)}    - i\sqrt{ \frac{\ve}{ \ell} \frac{\gamma}{y} } 
}{
\sqrt{  1 + \frac{\ve}{\ell} a(y) } + i\sqrt{ \frac{\ve}{ \ell} \frac{\gamma}{y} }.
}
\]
Hence, taking into account Remark \ref{a:chi} and the inequality $\big|1- (1+s)^\frac12\big| \leq |s| $ for all $s\geq -1 $,  one has 
\[
|  \nu_\ve(y)-1| \leq c \sqrt\ve\lf(1+\f{1}{|\cdot|^\f12}\ri).
\]
Setting  $O := |\cdot|^{-1/2} (-\Delta +1)^{-\frac14} $, inequality \eqref{3.8a} reads $\| Of \|^2 \leq  \frac{\pi}{2} \|f\|^2. $ which implies also
 \[
 \Big\|(-\Delta +1)^{-\frac14}  \frac{1}{|\cdot|^\frac12} f \Big\|  
\leq 
\|O^* \|_{\Bou(L^2(\RE^3))}  \|f\|  \leq  \Big(\frac{\pi}{2}\Big)^{\frac12} \|f\|,
 \]
 and we arrive at  
\[
\big\|(-\Delta +1)^{-\frac14} (\nu_\ve-1)  \xi\big\|
\leq C \ve^\frac12 \|\xi\|. 
\]
Moreover
\[
 \sup_{p>0} \int d\k \frac{|\hat \chi(\ve k) |^2 (p^2+1)^{\frac12}}{(k^2 +\frac34 p^2 +\lambda)^2}  
\leq \frac{1}{\ve^{1-\delta}}  \;  \sup_{p>0} \frac{(p^2+1)^{\frac12}}{(\frac34 p^2 +\lambda)^{1-\frac\delta2 }}\int d\k \frac{|\hat \chi(k) |^2}{k^{2+\delta}}  \leq \frac{c}{\ve^{1-\de}} ,
\]
so that 
\[
\|A_{2,\ve}^\lambda \xi  \| \leq c\, \ve^{\delta} \|\xi\| \qquad  \;0 < \delta < 1/2.
\]
Hence, 
\[
 \|(A_\ve^\lambda -  G^\lambda) \xi \| \leq 
  \|(A_{1,\ve}^\lambda -  G^\lambda) \xi \| +  
   \|A_{2,\ve}^\lambda \xi \| \leq  c\,  \ve^{\delta} \|\xi\|,
\]
which concludes the proof of the lemma. 

\end{proof}

\vs

\n
Clearly we have also
\[
 \|A_\ve^{\lambda*} - G^{\lambda*}\|_{\Bou(L^2(\RE^6),L^2(\RE^3))}  \leq c\,  \ve^{\delta} 
\]
for all $0<\delta < 1/2$  and for all $\lambda >1$.  
\begin{lemma}\label{l:GammaveGamma} Assume \eqref{cos} and $\lambda>1$.  For any $\phi \in L^2_{\sym}(\RE)$ there holds true
\[
 \big\| \big(\nu_\ve^*  \Gamma_\ve^\lambda \nu_\ve - \Gamma^\lambda \big)(\Gamma^\la)^{-1} G^{\lambda*}\phi   \big\| \leq c\, \ve^{\de}  \|\phi\|
\qquad 0<\de<1/2.
\]
\end{lemma}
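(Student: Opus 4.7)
Set $\xi := (\Gamma^\lambda)^{-1} G^{\lambda*}\phi$. The key preliminary observation is that, under the hypothesis \eqref{cos}, the appendix result $\D(\Gamma^\lambda)=H^1(\RE^3)$ (Proposition \ref{p:fine}) together with the sharp endpoint of Proposition \ref{p:GG*} (namely $G^{\lambda*}\in \Bou(L^2(\RE^6), H^{1/2}(\RE^3))$, which follows from the dual mapping $G^\lambda\in \Bou(H^{-1/2}(\RE^3), L^2(\RE^6))$ by a direct Cauchy--Schwarz in the $\k$-integration) and the coercivity of $\Phi^\lambda$ give $\xi\in H^1(\RE^3)$ with the a priori bound $\|\xi\|_{H^1}\leq c\|\phi\|$, uniformly for $\lambda>1$. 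Since the $\Gamma_{\reg}$ contributions to $\Gamma^\lambda$ and $\Gamma_\ve^\lambda$ coincide, the plan is then to split
\[
\nu_\ve^*\Gamma_\ve^\lambda\nu_\ve - \Gamma^\lambda
= \bigl(\Gamma^\lambda_{\diag,\ve}-\Gamma^\lambda_{\diag}\bigr)
+ \bigl(\Gamma^\lambda_{\off,\ve}-\Gamma^\lambda_{\off}\bigr)
+ \bigl(\nu_\ve^*\Gamma_\ve^\lambda\nu_\ve-\Gamma_\ve^\lambda\bigr),
\]
and to estimate each piece applied to $\xi$ separately, absorbing $\|\xi\|_{H^1}\lesssim \|\phi\|$ at the end.

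For the diagonal difference, the symbol is $m_\ve(\p):=\sqrt{\tfrac34 p^2+\lambda}\,[r(\ve\sqrt{\tfrac34 p^2+\lambda})-1]$, which by \eqref{hummus} and interpolation satisfies $|m_\ve(\p)|\leq c\,\ve^\delta(\tfrac34 p^2+\lambda)^{(1+\delta)/2}$ for any $0<\delta<\tfrac12$. Factoring $m_\ve\hat\xi=(r-1)\,\widehat{\Gamma^\lambda_{\diag}\xi}$ and invoking the defining identity $\Gamma^\lambda_{\diag}\xi=G^{\lambda*}\phi-(\Gamma^\lambda_{\off}+\Gamma_{\reg})\xi$, it suffices to control $(r-1)\,\widehat{G^{\lambda*}\phi}$ and $(r-1)\widehat{(\Gamma^\lambda_{\off}+\Gamma_{\reg})\xi}$ in $L^2$. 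For the first term one inserts the explicit $\k$-integral of $\widehat{G^{\lambda*}\phi}$ and applies Cauchy--Schwarz exactly as in the proof of Proposition \ref{p:GG*}, producing the factor $(p^2+\lambda)^{\delta-1/2}$, bounded for $\delta<1/2$. For the second term I use $\Gamma^\lambda_{\off}+\Gamma_{\reg}\in \Bou(H^{1/2},H^{-1/2})$ (Proposition \ref{prop:upper}) together with Hardy to keep $\xi/|\cdot|$ under control.

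The off-diagonal difference has Fourier kernel proportional to $\bigl[\hat\chi(\ve|\tfrac12\p+\q|)\hat\chi(\ve|\p+\tfrac12\q|)-\hat\chi(0)^2\bigr]/(p^2+q^2+\p\cdot\q+\lambda)$. Splitting this numerator as a telescoping difference and using Lipschitz continuity of $\hat\chi$ interpolated with its $L^\infty$ bound to give $|\hat\chi(\k)-\hat\chi(0)|\leq c|\k|^\delta$, the numerator is bounded by $c\ve^\delta(|\p|^\delta+|\q|^\delta)$. A Schur/Cauchy--Schwarz estimate of the resulting kernel against $\hat\xi$, together with $\xi\in H^1$, yields the target bound $c\ve^\delta\|\phi\|$. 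Finally, for the conjugation term, I expand $\nu_\ve^*\Gamma_\ve^\lambda\nu_\ve-\Gamma_\ve^\lambda=(\nu_\ve^*-\ID)\Gamma_\ve^\lambda\nu_\ve+\Gamma_\ve^\lambda(\nu_\ve-\ID)$, using the pointwise estimate $|\nu_\ve(y)-1|\leq c\sqrt\ve(1+1/\sqrt y)$ derived in the proof of Lemma \ref{l:AveCveGG*} and Hardy to trade the $1/\sqrt y$ singularity against $\|\xi\|_{H^{1/2}}$. The multiplicative identity \eqref{1922}, $\nu_\ve^*\Gamma_{\reg}\nu_\ve=\Gamma_{\reg}(1+\ve\Gamma_{\reg}/\ell)^{-1}$, further simplifies the regularizing part of the conjugation error.

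The main obstacle is the last piece: one must control the interplay between the singular multiplicative operator $\Gamma_{\reg}$ (behaving like $\gamma/y$ near the origin) and the singular factor $\nu_\ve-\ID$, without access to better regularity than $\xi\in H^1$. This forces a careful combination of Hardy-type inequalities with the interpolation $\|(r-1)f\|_{L^2}\leq c\ve^\delta\|f\|_{H^\delta}$ already used in part (I), and it is precisely here that the $C^2$ hypothesis \eqref{cos} on $\theta$ is invoked, since it ensures the smoothness of the bounded component $a(y)$ of $\Gamma_{\reg}$ and makes the operator $\Gamma_\ve^\lambda$ amenable to the required mapping estimates on the range of $(\Gamma^\lambda)^{-1}G^{\lambda*}$.
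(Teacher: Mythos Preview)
Your decomposition and the individual estimate strategies are broadly the same as the paper's, but there is a genuine gap: you work only with $\xi\in H^1(\RE^3)$, whereas the argument actually requires $\xi\in H^{3/2}(\RE^3)$. The paper obtains this from Proposition~\ref{finefine}: since $G^{\lambda*}\phi\in H^{1/2}(\RE^3)$ (the $s=0$ case of Proposition~\ref{p:GG*}), the equation $\Gamma^\lambda\xi=G^{\lambda*}\phi$ yields $\|\xi\|_{H^{3/2}}\leq c\|\phi\|$ under \eqref{cos} and $\gamma>2$. You invoke only Proposition~\ref{p:fine}, which gives $H^1$ from $L^2$ data and is too weak.

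This deficit bites in each of the three pieces. For the diagonal difference, $|r(\ve\sqrt{\cdot})-1|\leq c\,\ve^\delta(\tfrac34 p^2+\lambda)^{\delta/2}$ means $\|(\Gamma^\lambda_{\diag,\ve}-\Gamma^\lambda_{\diag})\xi\|$ is controlled by $\ve^\delta\|\xi\|_{H^{1+\delta}}$; your workaround still needs $(\Gamma^\lambda_{\off}+\Gamma_{\reg})\xi\in H^\delta$ for some $\delta>0$, which by Remark~\ref{owari} again forces $\xi\in H^{1+\delta}$. For the off-diagonal difference, the Schur test that handles the kernel $(p^2+q^2)^{\delta/2}/(p^2+q^2+\p\cdot\q+\lambda)$ absorbs a weight $(q^2+1)^{3/4}$, i.e.\ it acts on $(1+q^2)^{3/4}\hat\xi$, which is in $L^2$ only if $\xi\in H^{3/2}$. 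Most critically, for $\Gamma_\ve^\lambda(\nu_\ve-\ID)\xi$ one must bound $\|\nu_\ve'\xi\|$: the singular part of $\nu_\ve'$ behaves like $f_4\sim \sqrt{\ve}\,y^{-3/2}$ (regularized at scale $\ve$), and $\|f_4\|_{L^p}\leq c\,\ve^{3/p-1}$ gives a positive power of $\ve$ only for $p<3$, forcing by H\"older $\|\xi\|_{L^q}$ with $q>6$. In dimension three this requires strictly more than $H^1$; the Sobolev embedding $H^{3/2}\hookrightarrow L^q$ for all $q<\infty$ is exactly what closes it. Your closing paragraph correctly senses the obstruction but does not resolve it; the missing ingredient is Proposition~\ref{finefine}.
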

\begin{proof}
In this proof we set
\[
\xi := (\Gamma^\la)^{-1} G^{\lambda*}\phi,
\]
and we know by Proposition \ref{finefine} that
\begin{equation}\label{together}
\| \xi\|_{H^{\frac32}} \leq c \,  \|\phi\|\qquad \text{and}\qquad \| \Gamma^\lambda  \xi\|_{H^{\frac12}}  =  \| G^{\lambda*}\phi\|_{H^{\frac12}} \leq  c\,  \|\phi\|. 
\end{equation}
\n
We use again the identity \eqref{idnuve} and obtain 
\[ 
\nu_\ve^*  \Gamma_\ve^\lambda \nu_\ve - \Gamma^\lambda = 
 \Gamma_\ve^\lambda  - \Gamma^\lambda +  ( \nu_\ve^{*} -\ID) \Gamma_\ve^\lambda + \Gamma_\ve^\lambda  ( \nu_\ve-\ID) + ( \nu_\ve^{*}-\ID ) \Gamma_\ve^\lambda  ( \nu_\ve -\ID) .
\]
We are going to prove that for $0<\de<1/2$, we have:
\begin{align}\label{strawberry1}
&\big\|\big(\Gamma_{ \ve}^\lambda  - \Gamma^\lambda\big) \xi\big\|\leq c \, \ve^{\de} \|\phi\|,  \\
&\big\| ( \nu_\ve^{*} -\ID) \Gamma_\ve^\lambda \xi\big\| \leq c \, \ve^{\de}\|\phi\| ,\label{strawberry2a} \\
&\big\|  \Gamma_\ve^\lambda ( \nu_\ve -\ID)   \xi\big\|\leq c \, \ve^{\de}\|\phi\| , \label{strawberry2b} \\
& \label{strawberry3}
 \big\| ( \nu_\ve^{*} -\ID) \Gamma_\ve^\lambda ( \nu_\ve -\ID)  \xi\big\|\leq c \, \ve^{\de}\|\phi\| .
\end{align}
Let us prove \eqref{strawberry1}. We have that 
\[ 
\Gamma_\ve^\lambda  - \Gamma^\lambda  =  \Gamma_{\diag, \ve}^\lambda  - \Gamma_{\diag}^\lambda +  \Gamma_{\off,\ve}^\lambda  - \Gamma_{\off}^\lambda.  
\] 
Concerning the first couple of operators, using \eqref{hummus}, we have for $0<\de<1/2$ 
\begin{align*}
\big\|\big(\Gamma_{\diag, \ve}^\lambda  - \Gamma_{\diag}^\lambda\big) \xi\big\|^2 &
= \int d\p\, \bigg(\frac34  p^2 +\lambda\bigg) \bigg( r\Big(\ve \sqrt{\frac34 p^2 +\lambda}\Big) - 1 \bigg)^2|\hat \xi(\p)|^2  \\
&\leq c\, \ve^{2\de} \int d\p\, \bigg(\frac34  p^2 +\lambda\bigg)^{1+\de}|\hat \xi(\p)|^2 \leq c\, \ve^{2\de} \|\phi\|^2. 
\end{align*}

\n
On the other hand, using \eqref{hummus} and the boundedness of $\hat \chi$, we have
\[ \begin{aligned}
\lf |\big(\hat \Gamma_{\off, \ve}^\lambda  - \hat \Gamma_{\off}^\lambda\big)\hat  \xi (\p) \ri|
&=  8\pi \lf| \int \,d\q\, \frac{\hat \chi\big(\ve\big|\p +\frac12 \q\big|\big) \hat \chi\big(\ve\big|\frac12 \p + \q\big|\big) -\big(\hat\chi(0)\big)^2 }{p^2 + q^2 + \p \cdot \q +\lambda } \hat \xi(\q) \ri| \\
&\leq 8\pi \int \,d\q\, \frac{\lf| \hat \chi\big(\ve\big|\p +\frac12 \q\big|\big) \hat \chi\big(\ve\big|\frac12 \p + \q\big|\big) -\big(\hat\chi(0)\big)^2\ri| }{p^2 + q^2 + \p \cdot \q +\lambda } | \hat \xi(\q)| \\
& \leq c \, \ve^{\de}  \int \,d\q\, \frac{\lf( p^2 +q^2\ri)^{\de /2} }{p^2 + q^2 + \p \cdot \q +\lambda } | \hat \xi(\q)| \qquad 0<\de < 1/2.
\end{aligned}\]

\n
Due to \eqref{together}, it is sufficient to prove that 
\[
T(\p, \q)=  \frac{\lf( p^2 +q^2\ri)^{\de} }{(p^2 + q^2 + \p \cdot \q +\lambda)(q^2+1)^{3/4} }
\]
is the  integral kernel of an $L^2$-bounded operator for $ 0<\de < 1/2$. If we put $f(\p)= (p^2+1)^{-3/4}$, it is straightforward to prove that for $0<\de<1/2$ we have
\[
\int T(\p, \q) \,f(\q)\, d\q \leq c_1 \, f(\p) \qquad \qquad \int T(\p, \q) \,f(\p)\, d\p \leq c_2 \, f(\q) ,
\]
then the claim follows from Schur's test.

\n
Let us prove \eqref{strawberry2a}. We claim that  for all $s\in[0,1/2)$ there exists $C>0$ such that  
\begin{equation}\label{easy}
\| \Gamma_\ve^\lambda \xi \|_{H^s}\leq c \|\xi\|_{H^{s+1}} \leq c   \|\phi\|. 
\end{equation}
Due to Remarks \ref{afa} and  \ref{owari}, it is sufficient to control each term in $ \Gamma_\ve^\lambda \xi$ with corresponding limit uniformly in $\ve$.
It holds true for  ${ \Gamma}^{ \la}_{\text{diag},\ve}\xi $, since $0\leq r(s) \leq 1$; see also \eqref{together}. 
It holds true for  ${\Gamma}^{ \la}_{\text{off},\ve}\xi $ since 
\[
\big|\hat \Gamma_{\off, \ve}^\lambda  \hat \xi (\p) \big| \leq
  \int \,d\q\, \frac{1}{p^2 + q^2 + \p \cdot \q +\lambda }\big| \hat \xi(\q) \big| = 
\hat \Gamma_{\off}^\lambda | \hat \xi | (\p).
\]
Therefore, estimate \eqref{easy} holds true. We have
\[
 \nu_\ve^\ast (y)-1= \f{
1-\sqrt{  1 + \frac{\ve}{\ell} a(y) } +i\sqrt{ \frac{\ve}{ \ell} \frac{\gamma }{y} } 
}{
\sqrt{  1 + \frac{\ve}{\ell} a(y) } - i\sqrt{ \frac{\ve}{ \ell} \frac{\gamma }{y} }
}=
f_1 (y) + f_2 (y).
\]
with
\[ 
f_1 (y)=\f{
1-\sqrt{  1 + \frac{\ve}{\ell} a(y) }
}{
\sqrt{  1 + \frac{\ve}{\ell} a(y) } - i\sqrt{ \frac{\ve}{ \ell} \frac{\gamma }{y} }
}
\qquad
f_2 (y)=\f{
i\sqrt{ \frac{\ve}{ \ell} \frac{\gamma }{y} } 
}{
\sqrt{  1 + \frac{\ve}{\ell} a(y) } - i\sqrt{ \frac{\ve}{ \ell} \frac{\gamma }{y} }
}
\]
and, due to Remark \ref{a:chi}, we have the estimates
\be \label{tenda2}
|f_1 (y) |\leq c \, \sqrt{\ve} \qquad |f_2 (y) |\leq c\, \sqrt{\ve}  \frac{ 1}{\sqrt{ y + 2 {\ve}\ga / \ell} }.
\ee
Moreover, with straightforward calculations, one has  for $p>6$
\be \label{tenda}
\| f_2 \|_{L^p }^p \leq c\int_0^\ve y^2 \, dy + c\, \ve^{p/2} \int_\ve^\infty \f{1}{y^{p/2-2}} dy= c \, \ve^3. 
\ee
By Sobolev embedding and \eqref{easy}, we have$ \|  \Gamma_\ve^\lambda \xi \|_{L^q}\leq C \|\phi\|$ for $\forall q \in [2,3)$.
Using H\"older inequality, \eqref{tenda2} and \eqref{tenda}, we have
\[
\big\| ( \nu_\ve^{*} -\ID) \Gamma_\ve^\lambda \xi\big\| \leq 
\big\| f_1 \Gamma_\ve^\lambda \xi\big\| +
\big\| f_2 \Gamma_\ve^\lambda \xi\big\|\leq c (\sqrt\ve + \ve^{3/p} ) \| \phi \| \qquad p>6,
\]
and \eqref{strawberry2a} is proved.

\n
Let us prove \eqref{strawberry2b}. 
Due to estimate \eqref{together}, it is sufficient estimate $\| (\nu_\ve-1)\xi\|_{H^1}$.
We start from
\be \label{hajimari}
\big\| ( \nu_\ve-1 )  \xi\big\|_{H^{1}} \leq 
\big\| ( \nu_\ve-1) \xi\big\|
+\big\|  (\nu_\ve -1)  \nabla\xi\big\|
+
\big\| \nu_\ve'  \xi\big\|. 
\ee
The second term in \eqref{hajimari} can be estimated as \eqref{strawberry2a}; also the first term, even if it is more regular, can be estimated in the same way.
We discuss the third term. First notice that
\[
\nu_\ve^\prime (y) = - \frac{1}{ \lf(\sqrt{  1 + \frac{\ve}{\ell} a(y) } +i\sqrt{ \frac{\ve}{ \ell} \frac{\gamma}{y} }\ri)^2} 
\lf(\frac{\ve}{\ell} \frac{a'(y)}{2\sqrt{1+\frac\ve\ell a(y)}} -i \sqrt{\frac\ve \ell \gamma} \frac1{2y^\frac32} \ri) = f_3(y) + f_4(y)
\]
with
\[
 f_3(y) =  - \frac{\ve}{\ell} \frac{a'(y)}{2  \lf(\sqrt{  1 + \frac{\ve}{\ell} a(y) } +i\sqrt{ \frac{\ve}{ \ell} \frac{\gamma}{y} }\ri)^2 \sqrt{1+\frac\ve\ell a(y)}} 
\qquad \qquad
 f_4(y) =  \frac{i \sqrt{\frac\ve \ell \gamma} }{ \lf(\sqrt{  1 + \frac{\ve}{\ell} a(y) } +i\sqrt{ \frac{\ve}{ \ell} \frac{\gamma}{y} }\ri)^2} 
 \frac1{2y^\frac32} . 
\]
Taking into account that 
\[
a'(y) = - \frac{\theta(y) - 1 - y \theta'(y) }{y^2} = \frac1{y^2} \int_0^y s \theta''(s) ds\] is bounded, we have $\| f_3\|_{L^\infty} \leq c \, \ve$. Moreover we also have for  $2<p <6 $
\[
\| f_4 \|^{p}_{L^{p}} \leq c\, \ve^{-p /2} \int_0^\ve y^{2- p /2} dy + \ve^{p /2} \int_\ve^\infty \f{1}{y^{3 p /2-2}}dy = c\, \ve^{3-p}.
\]
By Sobolev embedding $\xi \in L^q$ for $2\leq q <\infty$, then using H\"older inequality, we have
\[
\big\| \nu_\ve'  \xi\big\| \leq c \| f_3 \xi \|+ \| f_4 \xi \| \leq \|f_3\|_{L^{\infty} } \| \xi\| +\|f_4\|_{L^{p}} \| \xi\|_{L^{q}}
\leq c (\ve +\ve^{3/ p -1} ) \|\xi\|_{H^{3/2} }.
\]
where $p^{-1}+q^{-1}= 1/2$, and \eqref{strawberry2b} is proved.

\n
Let us prove \eqref{strawberry3}.  
This estimate immediately reduces to \eqref{strawberry2b} since $\| \nu_\ve^\ast-1\|_{L^\infty} \leq c $ uniformly in $\ve$.

\end{proof}

\vs

\appendix

\section{Regularity of the Charge}\label{secApp}

In this appendix we characterize $\D (\Ga^\la)$ and we prove a regularity result for the charge associated with $\psi\in\D(H)$.

\subsection{Domain of $\Ga^\la$} \label{app:A1}
Here, we prove that $\D(\Gamma^{\la}) = H^1(\RE^3)$. 
\begin{remark}
By Proposition \ref{prop:lower}, the spectrum of $\Gamma^{\lambda}$ is contained in $[c_0, \infty)$, $c_0 >0$. Therefore, $(\Gamma^{\lambda} )^{-1}$ exists and it is a bounded operator in $L^2(\RE^3)$ with norm less than $c_0^{-1}$. That is, for any  $\genf\in L^2 (\RE^3)$ there exists $ \xi\in L^2(\RE^3)$ solution of the equation $\Ga^\la \xi = \genf$ and $\|\xi\|\leq c_0^{-1} \|\genf\|$.  
\end{remark}
\n
We want to prove that $\xi\in H^1(\RE^3)$,  that is
$(\Gamma^{\lambda} )^{-1}\in \B (L^2(\RE^3),H^1(\RE^3)).$

\n
For the sake of notation, we introduce the operator $T$ defined as: 
\begin{equation}\label{T}
\hat T \hat{\xi}(\p): =  \f{\sqrt 3}{2} p\, \hat{\xi}(\p) 
-\frac1{\pi^2} \int_{\RE^3} \!\! d\q\,  \frac{ \hat{\xi}(\q)}{p^2+q^2+\p \cdot \q}
 + \f{\gamma}{2 \pi^2}  \int_{\RE^3} \!\!  d\q\,  \frac{ \hat{\xi}(\q)}{|\p - \q|^2 }.
\end{equation}
Let 
\[
\ga_c^\ast =  \frac74 \sqrt 3 - 2\simeq 1.031.
\]
\begin{proposition} \label{p:fine} Assume \eqref{cod}  and $\ga>\ga_c^\ast$, let $\genf\in L^2(\RE^3)$ and let $\xi$ be the solution of 
\beq \label{dominio}
\Ga^\la \xi = \genf ,
\eeq
then $\xi \in H^1$ and $ \| \xi\|_{H^1}\leq c \| \genf\|_{ L^2(\RE^3)}$.
\end{proposition}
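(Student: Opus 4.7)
The plan is to show that $\Ga^\la \colon H^1(\RE^3) \to L^2(\RE^3)$ is a bounded bijection. Combined with the already known fact that $(\Ga^\la)^{-1}$ is a bounded operator on $L^2$ (Proposition \ref{prop:lower} together with the remark preceding the statement), the open mapping theorem then gives the desired bound $\|\xi\|_{H^1} \leq c\|\genf\|_{L^2}$.

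Continuity $\Ga^\la \colon H^1 \to L^2$ follows by a term-by-term inspection of the decomposition \eqref{opga}. The diagonal piece $\Ga^\la_{\text{diag}}$ is Fourier multiplication by $\sqrt{\tfrac34 p^2+\lambda} \leq c(p^2+1)^{1/2}$, hence $H^1 \to L^2$. The multiplication by $a$ is $L^\infty$-bounded by \eqref{cod}. The Coulombic tail $\Ga^{(2)}_{\text{reg}}\xi=\gamma\xi/|\cdot|$ is controlled by the standard Hardy inequality. Finally the off-diagonal $\Ga^\la_{\text{off}}$, whose kernel satisfies $|K(\p,\q)|\leq 2/(p^2+q^2+2\lambda)$, is absorbed into $L^2$ via a weighted Schur test (it is crucial here that $\gamma$ enters only through the multiplicative constant, so the threshold $\gamma_c^*$ plays no role at this stage).

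The core of the argument is the coercivity $\|\Ga^\la \xi\|_{L^2} \geq c\|\xi\|_{H^1}$. I would split $\Ga^\la = T + R^\la$, with $T$ as in \eqref{T}, and analyze the two pieces separately. The remainder $R^\la$ collects: the multiplication by the bounded decaying function $\sqrt{\tfrac34 p^2+\lambda}-\tfrac{\sqrt{3}}{2}p=\lambda/(\sqrt{\tfrac34 p^2+\lambda}+\tfrac{\sqrt{3}}{2}p)$; the $L^\infty$-multiplication by $a$; and an integral operator with kernel $\lambda/[(p^2+q^2+\p\cdot\q)(p^2+q^2+\p\cdot\q+\lambda)]$ whose Hilbert--Schmidt norm is finite. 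Thus $R^\la \colon H^1 \to L^2$ is compact. The key technical step, which is the main obstacle, is the invertibility of $T \colon H^1 \to L^2$. I would mimic the strategy of Section \ref{secQuad}: decompose in spherical harmonics, substitute $p=e^x$, and translate the requirement $\|T\xi\|_{L^2} \geq c\|\xi\|_{H^1}$ into a pointwise lower bound on the Mellin symbol $S_\ell(k)$ from \eqref{Stotale}, now adapted to the $H^1$-weighted framework (which, after $p=e^x$, corresponds to the weight $\int e^x |h(x)|^2 dx$ rather than the unweighted integral used in Lemma \ref{lm:S}, and therefore amounts to looking at a shifted analytic continuation of the symbol). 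By the monotonicity properties \eqref{mono1}--\eqref{mono2}, the analysis reduces to the $s$-wave sector $\ell=0$, and a sharp version of the calculation in Lemma \ref{lemma:theta} --- carried out at the endpoint $s=1$ rather than the auxiliary $s^*\in(0,1)$ --- produces exactly the explicit constant $\gamma_c^* = \tfrac{7}{4}\sqrt{3}-2$. Once $T \colon H^1 \to L^2$ is known to be bijective, Fredholm theory applied to $\Ga^\la = T + R^\la$, combined with the injectivity inherited from Proposition \ref{prop:lower}, yields the bijectivity of $\Ga^\la \colon H^1 \to L^2$ and closes the argument.
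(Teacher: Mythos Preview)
Your overall architecture --- reduce to the homogeneous operator $T$, diagonalize it via spherical harmonics and the Mellin substitution $p=e^x$, and bound the resulting symbol from below --- is exactly the paper's. But the Fredholm step contains a genuine error: the remainder $R^\lambda = \Gamma^\lambda - T$ is \emph{not} compact as a map $H^1(\RE^3)\to L^2(\RE^3)$. Two of its three pieces are multiplication operators (by $a\in L^\infty$ in position space, and by the bounded decaying symbol $\sqrt{\tfrac34 p^2+\lambda}-\tfrac{\sqrt3}{2}p$ in Fourier space), and on all of $\RE^3$ the embedding $H^1\hookrightarrow L^2$ is not compact --- Rellich fails --- so neither are these multiplications; take $a\equiv\beta$ to see this immediately. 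Only the integral term, which you correctly identify as Hilbert--Schmidt, is compact. Hence the appeal to Fredholm theory for $\Gamma^\lambda = T + R^\lambda$ does not go through.

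The paper bypasses this with a direct bootstrap rather than Fredholm: from the a priori bound $\|\xi\|_{L^2}\le c\|f\|$ (coercivity of $\Phi^\lambda$) one checks that $f^\lambda := f - R^\lambda\xi$ lies in $L^2$ with $\|f^\lambda\|\le c\|f\|$; here only $L^2\to L^2$ boundedness of each piece of $R^\lambda$ is needed, which is elementary. Then the Mellin analysis of $T\xi=f^\lambda$ gives $\|\nabla\xi\|\le c\|f^\lambda\|$, and the two estimates combine. No compactness, no Fredholm index.

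A secondary point: the monotonicity \eqref{mono1}--\eqref{mono2} is established for real $k$, whereas the $H^1$ estimate requires the symbol at $k+\tfrac{i}{2}$. The paper re-derives the needed positivity and monotonicity at the shifted argument by applying Lemma~\ref{lemmavecchio} to new generating functions (for instance $\cosh(k\arccos(-y))/\sqrt{1-y}$), and handles $\ell\ge 2$ even by an explicit $P_2$ bound rather than by reducing to $\ell=0$; indeed the sectors $\ell\ge1$ need no hypothesis on $\gamma$ at all. The $\ell=0$ case is the one that produces $\gamma_c^*=\tfrac74\sqrt3-2$, but through an explicit power-series argument on two concrete hyperbolic integrals, not as a limiting case $s\to1$ of Lemma~\ref{lemma:theta}.
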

\begin{proof}
Set $\genf^\lambda := \genf - (\Gamma^\lambda - T) \xi$, and recast  \eqref{dominio} as 
\beq \label{dominio1}
T\xi = \genf^\la.
\eeq
By  \eqref{opga} and \eqref{T}, there follows 
\begin{align} \label{effela1}
\hat{\genf}^\la(\p)=\hat{\genf}(\p) -(\widehat{a \,\xi} )(\p) - \f{2 \la}{\sqrt{3}} \, \f{\hat{\xi}(\p)}{p + \sqrt{p^2 + 4 \la /3}} 
- \f{\la}{\pi^2} \!\int_{\RE^3}\!\! d\q \, \f{ \hat{\xi} (\q)}{(p^2 + q^2 + \p \cdot \q + \la)(p^2 + q^2 + \p \cdot \q)} .
\end{align}
We start by noticing that $\genf^\la \in L^2(\RE^3)$. To see that this is indeed the case recall that by our previous remark $\xi\in L^2(\RE^3)$ and notice that all the terms at the  r.h.s. of identity \eqref{effela1} are in $L^2$. To convince oneself that this is true also for the integral term (for all the others it is obvious), it is sufficient to notice that the integral kernel is an Hilbert--Schmidt operator. To this aim, one can use the inequality 
\[
\int_{\RE^3} d\p d\q \, \f{ 1}{(p^2 + q^2 + \p \cdot \q + \la)^2(p^2 + q^2 + \p \cdot \q)^2} \leq 16
\int_{\RE^3} d\p d\q \, \f{ 1}{(p^2 + q^2 + 2\la)^2(p^2 + q^2 )^2},  
\]
and introducing polar coordinates in $\RE^6$ verify that the latter integral is finite.

\n
To conclude the proof of the proposition we need to show that 
\[
\|\nabla \xi \| \leq c \|\genf^\lambda\|.
\] 
Decomposing $\hat \xi$ and $\hat \genf^\lambda$ on the basis of Spherical Harmonics, and setting $p=e^x$,  $\zeta_{\ell m}(x) = e^{5/2 \,x} \hat \xi_{\ell m}(e^x)$, and $h_{\ell m}(x)= e^{3/2 \,x} \hat \genf_{\ell m}(e^x)$ we obtain:
\[
\| \nabla \xi \|^2 = \sum_{\ell m} \int_{\RE^+} |\hat \xi_{\ell m}(p) |^2 p^4\,dp = 
\sum_{\ell m} \int_\RE |\zeta_{\ell m}(x) |^2 \, dx 
\]
and 
\[
\| \genf^\la \|^2 = \sum_{\ell m} \int_{\RE^+} |\genf^\la_{\ell m}(p) |^2 p^2 dp =
\sum_{\ell m} \int_\RE |h_{\ell m}(x) |^2 \, dx;
\]
here, clearly,  $ \hat{\genf}_{\ell m}^\la(p)\in L^2(\RE^+, p^2\,dp)$ for $\ell \in \NA$ and , $m=-\ell, \ldots , \ell\,$. We look for an  inequality  between the $L^2$-norms of the functions $\zeta_{\ell m}$ and $h_{\ell m}$ of the form $\|\zeta_{\ell m} \|\leq c\|h_{\ell m}\|$ with $c$ independent on $\ell$ and $m$. To proceed, we decompose  \eqref{dominio1} on the basis of Spherical Harmonics and obtain:
\beq \label{dominioonda}
\begin{aligned}
\f{\sqrt{3}}{2}\, p \, \hat{\xi}_{lm}(p) 
-\frac{2}{\pi}  \! \int_0^{\infty}\! \!\! dq\, q^2 \, \hat{\xi}_{\ell m} (q) \!\! \int_{-1}^{1} \!\!  & d\nu\, \f{P_\ell(\nu)}{p^2+q^2+p q \nu }
  + \f{\gamma}{ \pi} \! \int_0^{\infty} \! \!\!  dq\, q^2 \hat{\xi}_{\ell m} (q) 
   \!\! \int_{-1}^{1} \!\! d\nu\, \f{P_\ell(\nu)}{p^2 + q^2 -2 p  q \nu } 
   = \hat{\genf}_{\ell m}^\la (p). 
\end{aligned}\eeq
Then,  we multiply the latter equation by $p^{3/2}$ and change variables as above, with  $p=e^x$ and $q=e^y$, to obtain:
\beq\begin{aligned} \label{dominio3}
\f{\sqrt{3}}{2} \, \zeta_{\ell m}(x) - \f{1}{\pi} \int_{\RE}\!\! dy\, \zeta_{\ell m}(y) \,e^{(x-y)/2}\!\!\int_{-1}^{1}\!\!\! d\nu\, 
& \f{P_\ell(\nu)}{\cosh (x-y) + \nu/2}+ \\
&+ \f{\gamma}{2 \pi} \int_{\RE}\!\! dy\, \zeta_{\ell m}(y) \, e^{(x-y)/2}\!\!\int_{-1}^{1}\!\!\! d\nu\,  \f{P_\ell(\nu)}{\cosh (x-y) - \nu} = h_{\ell m}(x).
\end{aligned}\eeq
The latter equation can be seen as a  convolution equation on $L^2 (\RE)$ and discussed by Fourier transform, to this aim we note the identities:  
\[
 - \f{1}{\pi} \int_\RE dx \, e^{-ikx} e^{x/2} \int_{-1}^{1}  d\nu\, 
 \f{P_\ell(\nu)}{\cosh (x) + \nu/2} = S_{\mathrm{off},\ell}\left(k+\frac{i}2\right) \\  
\]
and 
\[ \f{\gamma}{2 \pi}  \int_\RE dx \, e^{-ikx}  e^{x/2}\int_{-1}^{1} d\nu\,  \f{P_\ell(\nu)}{\cosh (x) - \nu} 
= S_{\mathrm{reg},\ell}\left(k+\frac{i}2\right)  
\]
(which hold true because  $S_{\mathrm{off},\ell}(k) $ and $S_{\mathrm{reg},\ell}(k)$, defined in  \eqref{Soffpg12} and \eqref{Sregpg12},  admit an holomorphic extension to the strip $\{ |\Im k| <1\}$); therefore (see  \eqref{Stotale}) 
\begin{equation}\label{mercy}
S_\ell \lf( k +\f{i}{2} \ri) =  \f{\sqrt{3}}{2} \, + S_{\off,\ell}\lf( k +\f{i}{2} \ri)  + S_{\reg,\ell} \lf( k +\f{i}{2} \ri)  .
\end{equation}
Then,  \eqref{dominio3} is equivalent to 
\[
S_\ell \lf( k +\f{i}{2} \ri) \hat \zeta_{\ell m} (k) = \hat h_{\ell m} (k).
\]
To conclude the proof of the proposition,  it is  sufficient to prove that $\lf| S_\ell \lf( k+ \f{i}{2} \ri) \ri|\geq c>0$.
We shall focus on the real part of $S_\ell$ and prove  that 
\begin{equation}\label{mjrtom}
\Re  S_\ell \lf( k+ \f{i}{2} \ri) \geq c>0.
\end{equation}
Starting from \eqref{eq:Soff} and \eqref{eq:Sreg}, with some straightforward calculations
one arrives at
\begin{align}
\Re S_{\mathrm{off},\ell} \lf( k+ \f{i}{2} \ri) =&  - 2\int_{-1}^1dy\,P_\ell(y)\f{ \cosh \Big( \!  k\arccos \big(\frac y2\big)\! \Big) \sin \lf( \f12 \arccos(\frac y2) \ri)}{ \sqrt{1- \frac {y^2}4 } \cosh (\pi k)} \nonumber \\
=&  - \sqrt{2}\int_{-1}^1dy\,P_\ell(y)\f{ \cosh \Big( \!  k\arccos \big(\frac y2\big)\! \Big)}{ \sqrt{1+ \frac {y}2 } \cosh (\pi k)} \label{middle1}
\end{align}
and 
\begin{align}
\Re S_{\mathrm{reg},\ell} \lf( k+ \f{i}{2} \ri) 
=&{\gamma} \int_{-1}^1dy\,P_\ell(y)\frac{\cosh\big(k\arccos(-y)\big)\sin \lf( \f12 \arccos(-y) \ri)}{\sqrt{1-y^2}\cosh(k\pi)} \nonumber
  \\
= &\f{\gamma}{\sqrt2} \int_{-1}^1dy\,P_\ell(y)\frac{\cosh\big(k\arccos(-y)\big)}{\sqrt{1-y}\cosh(k\pi)} \label{middle2}.
\end{align}
We analyze separately the cases $\ell\geq 1$ and $\ell =0$. We start with $\ell \geq 1$. Notice that $\frac{\cosh\big(k\arccos(-y)\big)}{\sqrt{1-y}}$ has a series expansion with positive coefficients. To see that this is indeed the case recall that: (see \cite[(1.112.4)]{GR})
\[
\f{1}{\sqrt{1-y}} = 1 + \sum_{k=0}^\infty \frac{(2k+1)!!}{(2k+2)!!} y^{k+1};
\]
and $\arccos(-y)= \pi/2 +\arcsin(y)$ as well as $\cosh(x)$  have  positive coefficients series. Then,
\begin{equation}\label{rollin}
\frac{\cosh\big(k\arccos(-y)\big)}{\sqrt{1-y}} = \sum_{n=0}^\infty c_n y^n \qquad c_n \geq 0
\end{equation}
and 
\beq \label{pluto}
\Re S_{\mathrm{reg},\ell} \lf( k+ \f{i}{2} \ri) \geq0
\eeq
by Lemma \ref{lemmavecchio}.

\n
Now, we prove
some suitable lower bounds for  $\Re S_{\mathrm{off},\ell}$. By \eqref{rollin}, we infer 
\[
\frac{\cosh\Big( \!  k\arccos \big(\frac y2\big)\! \Big)}{ \sqrt{1+ \frac {y}2 } }  =  \sum_{n=0}^\infty (-1)^n c_n \Big(\frac{y}{2}\Big)^n 
\]
so that we cannot apply directly Lemma \ref{lemmavecchio} to $\Re S_{\mathrm{off},\ell}$. However,  due to the parity properties of the Legendre polynomials, we can apply the lemma for $\ell$ even and $\ell$ odd separately. We start with the analysis of the case $\ell$ odd. We have that 
\begin{align}
\Re S_{\mathrm{off},\ell} \lf( k+ \f{i}{2} \ri) = &   -\frac{\sqrt{2}}{\cosh (\pi k)}  \int_{-1}^1dy\,P_\ell(y)\f{ \cosh \Big( \!  k\arccos \big(\frac y2\big)\! \Big)}{ \sqrt{1+ \frac {y}2} } \nonumber  \\
=&  
\frac{\sqrt{2}}{\cosh (\pi k)}  \int_{-1}^1dy\,P_\ell(y) \sum_{\substack{n=0\\ n \text{ - odd}}}^\infty c_n \Big(\frac{y}{2}\Big)^n  \geq 0  \qquad \ell \text{ - odd} \label{pluto1}
\end{align}
where the latter inequality is a consequence of  Lemma \ref{lemmavecchio}.

\n  On the other hand, for $\ell$ even we have  
\[\begin{aligned}
\Re S_{\mathrm{off},\ell} \lf( k+ \f{i}{2} \ri) = -\frac{\sqrt{2}}{\cosh (\pi k)}  \int_{-1}^1dy\,P_\ell(y) \sum_{\substack{n=0\\ n \text{- even}}}^\infty c_n \Big(\frac{y}{2}\Big)^n  \qquad \ell \text{ - even}.
\end{aligned}\]
Hence, using again  Lemma \ref{lemmavecchio}, we infer
\begin{align}
0\geq & \Re S_{\mathrm{off},\ell} \lf( k+ \f{i}{2} \ri) 
\geq   -\frac{\sqrt{2}}{\cosh (\pi k)}  \int_{-1}^1dy\,P_2(y)\f{ \cosh \Big( \!  k\arccos \big(\frac y2\big)\! \Big)}{ \sqrt{1+ \frac {y}2} } \nonumber\\
=  &  -\frac{1}{\sqrt{2} \cosh (\pi k)}  \int_{-1}^1dy\,(3y^2-1)\f{ \cosh \Big( \!  k\arccos \big(\frac y2\big)\! \Big)}{ \sqrt{1+ \frac {y}2} } \nonumber\\
\geq   &  -\frac{1}{\sqrt{2} \cosh (\pi k)}  \int_{\frac{1}{\sqrt{3}} \leq |y| \leq 1}dy\,(3y^2-1)\f{ \cosh \Big( \!  k\arccos \big(\frac y2\big)\! \Big)}{ \sqrt{1+ \frac {y}2} } \nonumber\\
\geq   &  -\frac{\cosh (\frac23 \pi k)}{ \cosh (\pi k)}  \int_{\frac{1}{\sqrt{3}} \leq |y| \leq 1}dy\,(3y^2-1)  \geq - \frac{4}{3\sqrt 3}   \qquad\qquad\qquad \ell \text{ - even}. \label{pari}
\end{align}
To get the lower bound in the second to last line, we restricted the integral where the second Legendre polynomial is positive, so that we can infer the bound \eqref{pari} by using the monotonicity properties of the integrand. 
The remaining steps are elementary inequalities. 

\n
Therefore, by  \eqref{mercy}, together with the lower bounds \eqref{pluto}, \eqref{pluto1}, and \eqref{pari} we have
\[\begin{aligned}
&\text{for $\ell\geq 1$  odd } \qquad \Re S_{\ell} \geq \frac{\sqrt{3}}{2} \\ 
&\text{for $\ell\geq 2$  even } \qquad \Re S_{\ell} \geq \frac{\sqrt3}{18} .
\end{aligned}\]
\n
These give the lower bound \eqref{mjrtom} for $\ell\geq 1$.  We remark that the lower bound for $\ell \geq 1$ holds true whenever $\gamma \geq 0$, hence, for these values of $\ell$  the regularizing  three-body interaction does not play any role.  

\n
To obtain the bound for $\ell = 0$, we reason like in the proof of Lemma \ref{lemma:theta}. We prove that for any fixed $\gamma > \gamma_c^*$ there  exists $s >0 $ small enough, such that 
\begin{equation}\label{modern}
\Re\left( \f{\sqrt{3}}{2} (1-s) + S_{\off,0}\lf( k +\f{i}{2} \ri)  + S_{\reg,0} \lf( k +\f{i}{2} \ri)\right)  \geq 0. 
\end{equation}
If this is the case, by  \eqref{mercy}, we obtain the needed bound by noticing that 
\[
\Re S_0 \lf( k +\f{i}{2} \ri) = \f{\sqrt{3}}{2}  s +\Re\left( \f{\sqrt{3}}{2} (1-s) + S_{\off,\ell}\lf( k +\f{i}{2} \ri)  + S_{\reg,\ell} \lf( k +\f{i}{2} \ri)\right)  \geq  \f{\sqrt{3}}{2}  s.
\]
Changing variables in  \eqref{middle1} and \eqref{middle2} we obtain 
\[
	\Re S_{\mathrm{off},0}\Big(k+\frac i2\Big)=- \frac{4}{\sqrt 2} \int_{-\frac12}^{\frac12} dy\,\frac{ \cosh \big(   k\arccos \big(y\big) \big)}{ \sqrt{1+ y } \, \cosh (k\pi )} = - \frac{4}{\sqrt 2} \int_{\frac\pi3}^{\frac23 \pi} dt \,
	\sqrt{1-\cos t} \, \frac{ \cosh (   k t )}{  \cosh (k\pi )} 
\]
and
\[
	\Re S_{\mathrm{reg},0}\Big(k+\frac i2 \Big)=\frac \gamma{\sqrt2}\int_{-1}^1dy\,\frac{ \cosh \big(   k\arccos (y) \big)}{ \sqrt{1+ y } \, \cosh (k\pi )} = \frac \gamma{\sqrt2}\int_{0}^\pi dt \,
	\sqrt{1-\cos t} \, \frac{ \cosh (   k t )}{  \cosh (k\pi )} .
\]
Next we use the identities 
\[
\int_{\frac\pi3}^{\frac23 \pi} dt \,
	\sqrt{1-\cos t} \,  \cosh \big(   k t \big) = \sqrt 2 \,  \frac{  \sqrt 3  \cosh\big(k\frac\pi3 \big)  - 2k \sinh\big(k\frac\pi3 \big) + 2 \sqrt 3  k \sinh\big(k\frac 23 \pi \big)  - \cosh\big(k\frac23 \pi \big) }{1+4k^2}
\]
and
\[
\int_{0}^\pi dt \,
	\sqrt{1-\cos t} \,  \cosh \big(   k t \big) =  2 \sqrt 2 \,  \frac{ 1 + 2k \sinh(k\pi ) }{1+4k^2},
\]
to obtain the formulae
\[
	\Re S_{\mathrm{off},0}\Big(k+\frac i2\Big)= -4 \frac{  \sqrt 3  \cosh\big(k\frac\pi3 \big)  - 2k \sinh\big(k\frac\pi3 \big) + 2 \sqrt 3  k \sinh\big(k\frac 23 \pi \big)  - \cosh\big(k\frac23 \pi \big)  }{(1+4k^2) \cosh (k \pi )}
\]
and
\[
	\Re S_{\mathrm{reg},0}\Big(k+\frac i2 \Big)=2 \gamma   \frac{ 1 + 2k \sinh(k\pi ) }{(1+4k^2)  \cosh (k\pi )}.
\]
So that, 
\[
\Re\left( \f{\sqrt{3}}{2} (1-s) + S_{\off,\ell}\lf( k +\f{i}{2} \ri)  + S_{\reg,\ell} \lf( k +\f{i}{2} \ri)\right)
 = \f{f_0(k) +  f_1(k)}{(1+4k^2) \cosh(k\pi)}
\]
with 
\begin{align*}
 f_0(k) &= \frac{\sqrt3}{2} (1-s) \cosh(k\pi)    + 4 \cosh \lf(k\frac{2\pi}{3} \ri)  -  4 \sqrt3 \cosh \lf(k\frac{\pi}{3} \ri)+2\ga, \\
 f_1(k) &=  2 \sqrt 3 (1-s) k^2 \, \cosh(k\pi)  + 4k \left(
 \ga \sinh(k\pi) - 2\sqrt3 \sinh \lf(k\frac{2\pi}{3} \ri) + 2 \sinh \lf(k\frac{\pi}{3} \ri)\right) .
\end{align*}
To prove the bound \eqref{modern}, it is enough to show that $f_0 + f_1 \geq 0$. Since $f_0$ and $f_1$ are  even functions of $k$, it is enough to consider  $k\geq 0$.
We have
\[
f_0(k)= \sum_{j=0}^\infty \f{1}{ (2j)!}\lf( \f{k\pi}{3}\ri)^{2j} \lf( \frac{\sqrt3}{2}  (1-s) \cdot 9^j + 4 \cdot 4^j - 4 \sqrt3  \ri) +2\ga.
\]
Noticing that  for $s >0 $ small enough  one has $ \sqrt3 (1-s)  \cdot  9^j/2 + 4\cdot  4^j - 4 \sqrt3  >0$ for  all $j\geq 1$, we have the lower bound (it is convenient to keep the first two terms of the series)
\[
f_0(k) \geq  2\left(\gamma - \gamma_c^* - s \frac{\sqrt 3}{2}\right) +  \lf( \f{k\pi}{3}\ri)^{2}\bigg( \frac{\sqrt{3}}{4} (1 - 9s) +8  \bigg)  
\]
 with  $\ga_c^\ast = \frac74 \sqrt 3 - 2 $. Similarly, we have
\[
f_1(k)= 2 \sqrt3 (1-s) k^2\sum_{j=0}^\infty \f{1}{ (2j)!}\lf( \f{k\pi}{3}\ri)^{2j}   \cdot 9^j + 4k  \sum_{j=0}^\infty \f{1}{ (2j+1)!}\lf( \f{k\pi}{3}\ri)^{2j+1} \lf( \gamma \cdot 3^{2j+1}- 2\sqrt3 \cdot 2^{2j+1} +2\ri).
\]
Since $ \gamma \cdot 3^{2j+1}- 2\sqrt3 \cdot 2^{2j+1} +2>0$  for $\ga>\ga_c^\ast>1 $  and $j \geq 1$ we obtain a lower by keeping only the term $j=0$ 
\[
f_1(k)\geq 
k \lf( \f{k\pi}{3}\ri) \bigg( \frac{6\sqrt3}{\pi} (1-s)+ 12 \gamma - 16 \sqrt3 +8 \bigg).
\]
Hence, taking into account the fact that we are assuming $\gamma > \gamma_c^*$,  we have 
\[
f_0(k) + f_1(k) \geq  2\left(\gamma - \gamma_c^* - s \frac{\sqrt 3}{2}\right) +  k \lf( \f{k\pi}{3}\ri)\bigg(  \frac{\sqrt{3}}{12}  \pi(1-9s) + \frac{8}{3}\pi  +\frac{6\sqrt3}{\pi} (1-s)+ 12 \gamma_c^* - 16 \sqrt3 +8  \bigg)  
\]
and the r.h.s.  is positive for $s>0$ small enough because $\gamma - \gamma_c^*>0$ and $\frac{\sqrt{3}}{12}  \pi  + \frac{8}{3}\pi  +\frac{6\sqrt3}{\pi} + 12 \gamma_c^* - 16 \sqrt3 +8  \simeq 4.8>0$. 

\end{proof}
\begin{remark} \label{afa}
As a consequence, we have
\begin{align*}
(\hat{\Gamma}^{\la} \hat{\xi})(\p)&=  \sqrt{ \f{3}{4} p^2 +\lambda }\; \hat{\xi}(\p) 
-\frac1{\pi^2} \int_{\RE^3} \!\! d\q\,  \frac{ \hat{\xi}(\q)}{p^2+q^2+\p \cdot \q +\lambda}
 +(\widehat{a \,\xi})(\p) + \f{\gamma}{2 \pi^2}  \int_{\RE^3} \!\!  d\q\,  \frac{ \hat{\xi}(\q)}{|\p - \q|^2 } \nonumber\\
&=: (\hat{\Gamma}^{ \la}_{\text{diag}} \hat{\xi}) (\p)  +   (\hat{\Gamma}^{ \la}_{\text{off}} \hat \xi)(\p) +    (\hat{\Gamma}^{(1)}_{\text{reg}} \hat \xi)(\p) +    (\hat{\Gamma}^{(2)}_{\text{reg}} \hat \xi)(\p),
\end{align*}
since every term belong to $L^2$ for $\xi \in H^1(\RE^3)$, that is all four operators are bounded from $H^1$ to $L^2$. 
The claim is obvious for ${\Gamma}^{ \la}_{\text{diag}} $, it was proved in \cite[Proposition 5]{miche8} for ${\Gamma}^{ \la}_{\text{off}} $, while regarding $ {\Gamma}^{(2)}_{\text{reg}}$ it amounts to Hardy's inequality.
It is trivial for  $ {\Gamma}^{(1)}_{\text{reg}}$, since $a\in L^\infty$.
\end{remark}

\subsection{Regularity of the Charge\label{app:A2}}
Now, we study the regularity of the charge associated with $\psi\in H$, that is the solution of the equation appearing in \eqref{domH}.
\begin{proposition} \label{finefine} 
Assume  \eqref{cos} and $\ga>2$, let $\genf\in H^{1/2}(\RE^3)$ and let $\xi\in \D(\Ga^\la)$ be the solution of 
\beq \label{dominio4}
\Ga^\la \xi = \genf ,
\eeq
then $\xi \in H^{3/2}(\RE^3)$ and $ \| \xi\|_{H^{3/2}}\leq c \| \genf\|_{H^{1/2}}$.
\end{proposition}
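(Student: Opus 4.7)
The plan is to rerun the Mellin--spherical harmonic analysis used in the proof of Proposition \ref{p:fine}, but shifting the Fourier line by $i$ rather than by $i/2$ in order to gain one more half-derivative. First I would reduce \eqref{dominio4} to the \emph{principal} equation $T\xi = f^\la$ with $f^\la := f - (\Ga^\la - T)\xi$, and check that $f^\la\in H^{1/2}(\RE^3)$ with $\|f^\la\|_{H^{1/2}}\le c\|f\|_{H^{1/2}}$. Proposition \ref{p:fine} applies (since $\gamma>2>\gamma_c^*$) and already gives $\xi\in H^1$ with $\|\xi\|_{H^1}\le c\|f\|_{L^2}$; the three pieces of $\Ga^\la - T$ (the $\la$-correction of $\Ga^\la_{\diag}$, the $\la$-correction of $\Ga^\la_{\off}$ which gains an extra factor $\la/[(p^2+q^2+\p\cdot\q+\la)(p^2+q^2+\p\cdot\q)]$, and the multiplier $a$) then all map $H^1\to H^{1/2}$ boundedly. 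For the $a\xi$ term the stronger assumption \eqref{cos} (so that $a\in C^2_b$) is crucial; the other two are strictly smoother than their unregularized counterparts, so standard kernel estimates suffice.

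\paragraph{Mellin step.} After decomposing on spherical harmonics, set $p=e^x$, $q=e^y$, and
\[
\zeta_{\ell m}(x):=e^{3x}\hat\xi_{\ell m}(e^x),\qquad
h_{\ell m}(x):=e^{2x}\hat f^\la_{\ell m}(e^x),
\]
so that $\|\xi\|^2_{\dot H^{3/2}}=\sum_{\ell m}\|\zeta_{\ell m}\|^2$ and $\|f^\la\|^2_{\dot H^{1/2}}=\sum_{\ell m}\|h_{\ell m}\|^2$. Multiplying \eqref{dominioonda} by $p^2$ (instead of $p^{3/2}$ as in the previous proof) and changing variables produces a translation-invariant convolution equation whose off-diagonal kernel carries a prefactor $e^{x-y}$ (replacing $e^{(x-y)/2}$). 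Taking the Fourier transform on $\RE$ and using the same analytic continuations of $S_{\off,\ell}$ and $S_{\reg,\ell}$ as in Lemma \ref{lm:S}, one arrives at
\[
S_\ell(k+i)\,\hat\zeta_{\ell m}(k) = \hat h_{\ell m}(k),
\]
so everything reduces to a uniform bound of the form $\sup_{\ell,k\in\RE}|1/S_\ell(k+i)|<\infty$.

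\paragraph{Heart of the argument.} Since $\sinh(\pi(k+i))=-\sinh(\pi k)$, both $S_{\off,\ell}(k+i)$ and $S_{\reg,\ell}(k+i)$ can develop a simple pole at $k=0$, coming from the values $\sinh(i\arccos(y/2))=i\sqrt{1-y^2/4}$ and $\sinh(i\arccos(-y))=i\sqrt{1-y^2}$. A direct residue computation gives
\[
\operatorname{Res}_{k=0} S_{\off,\ell}(k+i) = \tfrac{2i}{\pi}\int_{-1}^1 P_\ell(y)\,dy,\qquad
\operatorname{Res}_{k=0} S_{\reg,\ell}(k+i) = -\tfrac{i\gamma}{\pi}\int_{-1}^1 P_\ell(y)\,dy.
\]
For $\ell\ge 1$ both residues vanish by orthogonality of $P_\ell$ against $P_0$, so $S_\ell(k+i)$ is holomorphic on a neighbourhood of $\RE$ and tends to $\sqrt 3/2$ at infinity; a non-vanishing lower bound uniform in $\ell\geq 1$ then follows from a positivity analysis of the real and imaginary parts entirely parallel to that in the proof of Proposition \ref{p:fine} (expansion in Taylor series with signed coefficients and Lemma \ref{lemmavecchio}). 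For $\ell=0$ the combined residue equals $\tfrac{2i(2-\gamma)}{\pi}$, which is nonzero precisely when $\gamma>2$ (explaining the hypothesis of the proposition): the genuine pole makes $1/S_0(k+i)$ \emph{vanish} at $k=0$, and combined with $S_0(k+i)\to\sqrt 3/2$ at infinity this yields $\sup_{k\in\RE}|1/S_0(k+i)|<\infty$ provided one rules out real zeros away from the origin, again by an elementary positivity analysis.

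\paragraph{Conclusion and main obstacle.} Armed with the bound $\sup_{\ell,k}|1/S_\ell(k+i)|\le c$, one gets $\|\zeta_{\ell m}\|\le c\|h_{\ell m}\|$; summing in $(\ell,m)$ gives $\|\xi\|_{\dot H^{3/2}}\le c\|f^\la\|_{\dot H^{1/2}}\le c\|f\|_{H^{1/2}}$, and coupled with $\|\xi\|\le c\|f\|$ from coercivity of $\Phi^\la$ this yields the full $H^{3/2}$ estimate. The main obstacle is the Step above: unlike in Proposition \ref{p:fine}, where positivity of $\Re S_\ell(k+i/2)$ sufficed, here the shift by $i$ hits the poles of $1/\sinh(\pi k)$, so one must balance a residue cancellation (automatic for $\ell\ge 1$, enforced by $\gamma>2$ for $\ell=0$) with the positivity bounds. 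Everything else (Step 1, the Mellin reduction, and the summation over $\ell,m$) is routine given the tools already developed in the paper.
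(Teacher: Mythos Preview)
Your reduction to $T\xi=f^\lambda$, the check that $f^\lambda\in H^{1/2}$, and the $\ell\ge 1$ analysis are all sound and in line with the paper (the paper also uses $\int_{-1}^1 P_\ell=0$ for $\ell\ge 1$ to extend $S_{\off,\ell},S_{\reg,\ell}$ holomorphically to $\Im k=1$; see \eqref{tacchino}). The gap is in the $\ell=0$ case.

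For $\ell=0$ you cannot write the equation $S_0(k+i)\hat\zeta_{00}(k)=\hat h_{00}(k)$. After multiplying \eqref{dominioonda} by $p^2$ and changing variables, the convolution kernel carries the factor $e^{x-y}$; since $\int_{-1}^1 P_0\,d\nu\neq 0$, the kernel does \emph{not} decay as $x-y\to+\infty$ (in fact $e^{x}\!\int_{-1}^1\!\frac{d\nu}{\cosh x+\nu/2}\to 4$ and $e^{x}\!\int_{-1}^1\!\frac{d\nu}{\cosh x-\nu}\to 4$, so the combined kernel tends to the constant $\frac{2(\gamma-2)}{\pi}$). Hence it is not in $L^1(\RE)$ and the convolution is not a bounded operator on $L^2$. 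Correspondingly, the closed forms in Lemma~\ref{lm:S} for $S_{\off,0}$ and $S_{\reg,0}$ do not extend holomorphically to $\Im k=1$; the poles you computed are precisely the manifestation of this failure. On top of that, you do not know a priori that $\zeta_{00}\in L^2$, so there is no legitimate Fourier identity to invert. Your heuristic ``the pole makes $1/S_0(k+i)$ vanish'' does not address this: a singular multiplier times an unknown distribution is not a well-posed equation. Note also that your residue $\tfrac{2i(2-\gamma)}{\pi}$ is nonzero for every $\gamma\neq 2$, so if your argument worked it would apply equally to $\gamma<2$; and your concluding paragraph calls this a ``residue cancellation enforced by $\gamma>2$'', which contradicts your own computation.

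The paper's proof resolves the $\ell=0$ case by a regularization $t\in(0,1)$: one sets $\zeta_t(x)=e^{(3-t)x}\hat\xi_{00}(e^x)$, so the kernel now carries $e^{(1-t)(x-y)}$ and \emph{is} integrable. In Fourier the singular piece becomes
\[
Q^0_t(k)=\frac{(\gamma-2)/2}{\cosh\!\big(\tfrac{\pi}{2}k+i\tfrac{\pi}{2}(1-t)\big)},
\]
whose real part is nonnegative precisely when $\gamma>2$ (this, not a residue cancellation, is the true role of the hypothesis). The remaining pieces $Q^1_t,Q^2_t$ converge to their $t=0$ limits in $L^\infty$, and one then obtains $\|\zeta_t\|\le c\|h_t\|$ uniformly in $t$ and concludes via Fatou's lemma as $t\to 0$.
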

\begin{proof}
We argue as in Proposition \ref{p:fine} and recast  \eqref{dominio4} as $T\xi = \genf^\la$ (see  \eqref{T} and \eqref{effela1}). We note that $\genf^\lambda \in H^{1/2}(\RE^3)$. To convince oneself that this is the case, recall that by Proposition \ref{p:fine},  $\xi \in  H^{1}(\RE^3)$  then $a\xi \in H^{1}(\RE^3)$ since $a$ and $a'$ are bounded; moreover, it can be checked that  the last term at the r.h.s. of Eq. \eqref{effela1} also belongs to $H^{1/2}$:
\begin{align}
&\int\!\!d\p\, p\,  \Big| \int\!\! d\q\, \f{\hat{\xi}(\q)}{(p^2 + q^2 + \p \cdot \q + \la)(p^2 + q^2 + \p \cdot \q)} \Big|^2 \nonumber\\
\leq &16 \int\!\!d\p\, p\,  \left(  \int\!\! d\q\, \f{ q^{1/2} |\hat{\xi}(\q)| }{ q^{1/2} (p^2 + q^2  +2 \la)(p^2 + q^2 )} \right)^{\!\!2} \leq 16 \, \|\xi\|^2_{H^{1/2}}  \! 
\int\!\!d\p\! \int\!\!d\q \, \f{p}{q (p^2 + q^2  + 2\la)^2 (p^2 + q^2 )^2}\nonumber\\
= & 256 \, \pi^2 \, \|\xi\|^2_{H^{1/2}}\int_0^{\infty} \!\!dp\int_0^{\infty} \!\!dq\, \f{p^3 \, q}{ (p^2 + q^2  + 2\la)^2 (p^2 + q^2 )^2}, \nonumber
\end{align}
introducing polar coordinates, one easily sees that the last integral is finite.

\n
To conclude the proof we are going to show that 
\begin{equation}\label{A18a}
\| \Delta^{3/2} \xi \| \leq c \|\genf^\lambda\|_{H^{1/2}}.
\end{equation}
Decomposing  $\xi$ and $\genf^\lambda$ on the basis of the spherical harmonics we obtain   
\[
\| \Delta^{3/2} \xi \|^2= \sum_{\ell m} \int_0^\infty |\hat \xi_{\ell m}(p) |^2 p^5\,dp ,\]
\[
\| \genf^\la \|^2 = \sum_{\ell m} \int_0^\infty |\genf^\la_{\ell m}(p) |^2 p^2 dp ,
\]
and 
\[
\| \Delta^{1/2} \genf^\la \|^2 = \sum_{\ell m} \int_0^\infty |\genf^\la_{\ell m}(p) |^2 p^3 dp ,
\]
where $ \hat{\genf}_{\ell m}^\la(p)$ satisfies the condition
\[
\int_0^{\infty}  |\hat{\genf}_{\ell m}^\la(p) |^2   p^2 (1 + p) dp < \infty,  \qquad
\ell \in \NA, \quad m=-\ell, \ldots , \ell.
\]
To prove the bound \eqref{A18a} we will show that 
\begin{equation}\label{release}
\int_0^\infty |\hat \xi_{\ell m}(p) |^2 p^5\,dp \leq c  \int |\genf^\la_{\ell m}(p) |^2 p^3 dp \qquad \forall \ell\in\NA,\; \ell\geq 1, \quad m = -\ell,\dots,\ell,
\end{equation}
where $c$ does not depend on $\ell$ and $m$, and 
\begin{equation}\label{release1}
\int_0^\infty |\hat \xi_{00}(p) |^2 p^5\,dp \leq c  \int |\genf^\la_{00}(p) |^2 p^2(1+p) dp.
\end{equation}
We remark that for $\ell = 0$ we have a slightly weaker bound involving both $\|\genf^\lambda\|$ and $\| \Delta^{1/2} \genf^\la \|$. \\

\n
i)  Case $\ell \geq 1$. 

\n
Decomposing  the equation $T\xi = \genf^\lambda$ on the basis of the spherical harmonics we obtain   \eqref{dominioonda}. We multiply it   by $p^2$, change variables as $p=e^x$, and $q=e^y$, and set $\zeta_{\ell m}(x) = e^{3 \,x} \hat \xi_{\ell m}(e^x)$, 
and $h_{\ell m}(x)= e^{2 \,x} \hat \genf_{\ell m}(e^x)$. In this way we obtain the equation. 
\beq\begin{aligned} \label{dominio2}
\f{\sqrt{3}}{2} \, \zeta_{\ell m}(x) - \f{1}{\pi} \int_{\RE}\!\! dy\, \zeta_{\ell m}(y) \,e^{(x-y)}\!\!\int_{-1}^{1}\!\!\! d\nu\, 
& \f{P_\ell(\nu)}{\cosh (x-y) + \nu/2}+ \\
&+ \f{\gamma}{2 \pi} \int_{\RE}\!\! dy\, \zeta_{\ell m}(y) \, e^{(x-y)}\!\!\int_{-1}^{1}\!\!\! d\nu\,  \f{P_\ell(\nu)}{\cosh (x-y) - \nu} = h_{\ell m}(x).
\end{aligned}\eeq
Since 
\[
\int_{\RE^+} |\hat \xi_{\ell m}(p) |^2 p^5\,dp =  \int_\RE |\zeta_{\ell m}(x) |^2 \, dx
\]
and 
\[
 \int |\genf^\la_{\ell m}(p) |^2 p^3 dp = \int_\RE |h_{\ell m}(x) |^2 \, dx,
\]
the bound \eqref{release} is equivalent to the  inequality
\begin{equation}\label{tell}
\|\zeta_{\ell m}\|_{L^2(\RE)} \leq c \|h_{\ell m}\|_{L^2(\RE)}.
\end{equation}
The integral equation \eqref{dominio2} can be conveniently studied via Fourier transform.  To proceed we start by noticing  that, taking into account the identity $\!\int_{-1}^{1} \! d\nu\, P_\ell(\nu)=0$, $\ell\geq1$, we have 
\begin{align}
S_{\mathrm{off},\ell}(k)
&= -\frac{1}{\pi} \int_\RE dx e^{-ikx} \int_{-1}^{1}\!\!\! d\nu\,   \f{ P_\ell(\nu) }{\cosh x + \nu/2} \nonumber \\
&= -\frac{1}{\pi} \int_\RE dx e^{-ikx} \int_{-1}^{1}\!\!\! d\nu\, P_\ell(\nu) \left( \f{1}{\cosh x + \nu/2} - \f{1}{\cosh x}\right) \nonumber \\ 
& =  -\frac{1}{2\pi} \int_\RE dx e^{-ikx}  \!\int_{-1}^{1}\!\!\! d\nu\, \f{P_\ell(\nu) \, \nu}{ \cosh x\, ( \cosh x + \nu/2 )}  \qquad \ell\geq 1. \label{tacchino}
\end{align}
The representation formula \eqref{tacchino} shows that $S_{\mathrm{off},\ell}(k)$ can be holomorphically extended to the strip $\{ |\Im k| <2\}$. The same argument can be repeated for $S_{\mathrm{reg},\ell}(k)$.
Therefore,  in Fourier transform,  \eqref{dominio2} reads  
\begin{equation}\label{cliente}
S_\ell (k+i) \hat \zeta_{\ell m} = \hat h_{\ell m},
\end{equation}
with $S_\ell$ given as in  \eqref{Stotale}, i.e., 
\[
S_\ell (k+i) = \frac{\sqrt3}{2} + S_{\off,\ell} (k+i) + S_{\reg,\ell} (k+i) .
\]
 By the unitarity of the Fourier transform to prove the bound \eqref{tell} it is enough to find a lower bound for $|S_\ell (k+i)|$ (see also the similar argument used in the proof of Proposition \ref{p:fine}). We concentrate on the real part of $S_\ell (k+i)$. 

\n
With straightforward calculations, starting from \eqref{eq:Soff}, one finds:
\[\begin{aligned} 
	\Re S_{\mathrm{off},\ell} (k+i) 
 &= 
\begin{cases} \displaystyle - \int_{-1}^{1}\!\!\! d\nu\,  P_\ell(\nu) \, \nu \,\f{    \sinh (k \arcsin (\nu/2) )}{  2 \sqrt{1- \nu^{2}/4 }  \, \sinh (k \pi/2)} & \;\;\;\;\; \text{for} \;  l \text{ even},
    \\
    \\
\displaystyle \;\;  \int_{-1}^{1}\!\!\! d\nu\,  P_\ell(\nu) \, \nu \,\f{  \cosh (k \arcsin (\nu/2) ) }{ 2 \sqrt{1- \nu^{2}/4 }  \, \cosh (k \pi /2)} & \;\;\;\;\;  \text{for} \;  l \text{ odd}.
\end{cases}
\end{aligned}
\]
Analogously
\[
\begin{aligned}
\Re S_{\mathrm{reg},\ell} (k+i)  &= 
\begin{cases} \displaystyle \gamma \int_{-1}^{1}\!\!\! d\nu\,  P_\ell(\nu) \, \nu \,\f{   \sinh (k \arcsin \nu )}{ 2 \sqrt{1- \nu^{2}}  \, \sinh (k \pi/2)} & \;\;\;\;\; \text{for} \;  l \text{ even},
    \\
    \\
\displaystyle \gamma \int_{-1}^{1}\!\!\! d\nu\,  P_\ell(\nu) \, \nu \,\f{  \cosh (k \arcsin \nu ) }{ 2 \sqrt{1- \nu^{2}}  \, \cosh (k \pi /2)} & \;\;\;\;\;  \text{for} \;  l \text{ odd}.
\end{cases}
\end{aligned}
\]
Let us observe that, using the recurrence formula for the Legendre polynomials $(\ell+1) P_{\ell+1}(\nu) = (2\ell+1) \nu P_\ell(\nu) - \ell P_{\ell-1}(\nu)$, we can rewrite
\beq\label{eq:QS}
	\Re S_{\mathrm{off},\ell} (k+i) =-\frac{\ell+1}{2(2\ell+1)}S_{\mathrm{off},\ell+1}(k)-\frac{\ell}{2(2\ell+1)}S_{\mathrm{off},\ell-1}(k)
\eeq
and analogously
\[
	\Re S_{\mathrm{reg},\ell} (k+i) =\frac{\ell+1}{2\ell+1}S_{\mathrm{reg},\ell+1}(k)+\frac{\ell}{2\ell+1}S_{\mathrm{reg},\ell-1}(k).
\]
Then,  using Lemma \ref{lm:S}, for any $k \in \RE$ we obtain that 
\[
\Re S_{\mathrm{off},\ell} (k+i)  \leq 0 \;\;\;\; \text{ for} \;\;\ell \;\; \text{even}, \;\;\;\;\;\;  \Re S_{\mathrm{off},\ell} (k+i)  \geq 0  \;\;\;\; \text{for} \;\;  \ell \;\; \text{odd} 
\]
and 
\[
\Re S_{\mathrm{reg},\ell} (k+i)  \geq 0 \;\;\;\; \text{ for any} \;\; \ell .
\]
Hence, 
\begin{equation}\label{1810}
\Re S_\ell (k+i) \geq  \frac{\sqrt3}{2} \qquad \text{for $\ell$ odd}. 
\end{equation}
Next,  we  focus attention on $\Re S_{\mathrm{off},\ell} (k+i) $ with $\ell$ even. Notice that  \eqref{eq:QS} and Lemma \ref{lm:S} imply
\[
	\Re S_{\mathrm{off},\ell} (k+i) \geq -\frac12\Big[\frac{\ell+1}{2\ell+1}+\frac\ell{2\ell+1}\Big]S_{\mathrm{off},1}(k)\geq -\frac12S_{\mathrm{off},1}(k).
\]
Then, using \cite[Lemma 3.5]{CDFMT}, we obtain
\begin{equation}\label{Qofd}
	\Re S_{\mathrm{off},\ell} (k+i) \geq 2\frac{\sqrt{3}}3-\frac4\pi=-\frac{\sqrt{3}}2\Big(\frac8{\sqrt{3}\pi}-\frac43\Big) =: -\frac{\sqrt{3}}2 d
\end{equation}
where $ d \in (0,1)$.  
\n
Therefore, we find
\[
\Re S_\ell (k+i) \geq  \frac{\sqrt3}{2}(1-d) \qquad \text{for $\ell$ even}. 
\]
The latter bound, together with the one in \eqref{1810}, give \eqref{release} with $c = (\frac{\sqrt3}{2}(1-d))^{-2}$ for all $\ell \geq 1$. 

\n
Next, we proceed with the analysis of the case $\ell =0$. \\

\n
ii) Case $\ell=0$

\n  For $\ell=0$, the kernels in  \eqref{dominio2} are too singular, in particular the regularization described in  \eqref{tacchino} does not apply; hence, we cannot extend $S_{\mathrm{off},0} (k)$ and $S_{\mathrm{reg},0} (k)$ to $\Im k =1$ and proceed starting from an equation of the form \eqref{cliente}. 

\n
In order to circumvent this difficulty, we  
define 
\be\label{deze}
\zeta_{t}(x) := e^{(3-t)x} \, \hat{\xi}_{00}(e^x), \qquad  x \in \RE,
\ee
for $t \in (0,1)$. 
Then, we multiply   \eqref{dominioonda} by $p^{2-t}$ and introduce the change of variables $p=e^x$, $q=e^y$ to obtain 
\begin{align*}
\f{\sqrt{3}}{2} \, \zeta_{t}(x) - \f{1}{\pi} \int_{\RE}\!\! dy\, \zeta_{t}(y) \,e^{(1-t)(x-y)}\!\! &\int_{-1}^{1}\!\!\! d\nu\,  \f{1}{\cosh (x-y) + \nu/2} \nonumber\\
&+ \f{\gamma}{2 \pi} \int_{\RE}\!\! dy\, \zeta_{t}(y) \, e^{(1-t)(x-y)}\!\!\int_{-1}^{1}\!\!\! d\nu\,  \f{1}{\cosh (x-y) - \nu} = h_{t}(x)
\end{align*}
where 
\[
h_{t}(x)= e^{(2-t)x}  \hat{\genf}_{00}^\la(e^x) .
\]
The integral kernels in  \eqref{eqca0} are regular for $t\in(0,1)$. Let us rewrite the equation in such a way to isolate the term that becomes singular for $t \rightarrow 0$.
\begin{align}\label{eqca0}
\f{\sqrt{3}}{2} \, \zeta_{t}(x) + \f{\gamma-2}{2\pi} \! \int_{\RE}\!\! dy\, \zeta_{t}(y) \, \, & \f{e^{(1-t)(x-y)}}{\cosh (x-y)}  +  \f{1}{2\pi}\! \int_{\RE}\!\! dy\, \zeta_{t}(y) \,\f{e^{(1-t)(x-y)}}{ \cosh (x-y)} \int_{-1}^1 \!\!d \nu\, \f{\nu}{  \cosh (x-y) + \nu/2 } \nonumber\\
   & +    \f{\gamma}{2\pi}\! \int_{\RE}\!\! dy\, \zeta_{t}(y) \,\f{ \, e^{(1-t)(x-y)}}{\cosh (x-y)}  \int_{-1}^1 \!\!d \nu\, \f{\nu}{ \cosh (x-y) - \nu} 
   = h_{t}(x)
\end{align}
In the Fourier space equation \eqref{eqca0} is
\be\label{eqca01}
\f{\sqrt{3}}{2} \hat{\zeta}_{t}(k) +   \Big(Q^0_{t}(k)+ Q^1_{t}(k)  +  Q^2_{t}(k) \Big)   \hat{\zeta}_{t}(k)= \hat{h}_{t} (k)
\ee
where 
\begin{align}
 Q^0_{t}(k) &= \f{\gamma\!-\!2}{2\pi}  \int_{\RE} \!\! dx\, e^{-ikx} \, \f{e^{(1-t)x}}{\cosh x} = \frac12 \f{\gamma\!-\!2}{\cosh \Big( \f{\pi}{2}k + i \f{\pi}{2} (1-t) \Big) }= \frac12 \f{\gamma\!-\!2}{\sin \f{\pi}{2} t \, \cosh \f{\pi}{2} k \, + i \, \cos \f{\pi}{2} t \, \sinh \f{\pi}{2} k} \, \nonumber\\
 & = \frac{\gamma-2}{2} \, \f{\sin \f{\pi}{2} t \, \cosh \f{\pi}{2} k \, - i \, \cos \f{\pi}{2} t \, \sinh \f{\pi}{2} k}{\Big( \sin \f{\pi}{2} t \, \cosh \f{\pi}{2} k \Big)^2 + \Big( \cos \f{\pi}{2} t \, \sinh \f{\pi}{2} k \Big)^2}\,, \nonumber \\
 \label{q1ep} Q^1_{t}(k)&= \f{1}{2\pi} \int_{-1}^1 \!\!d\nu\, \nu \!\! \int_\RE\!\!dx\, e^{-ikx} \f{e^{(1-t)x}}{\cosh x \,(\cosh x + \nu/2)}, \\
 Q^2_{t}(k)&= \f{\gamma}{2\pi} \int_{-1}^1 \!\!d\nu\, \nu \!\! \int_\RE\!\!dx\, e^{-ikx} \f{e^{(1-t)x}}{\cosh x \,(\cosh x - \nu)}. \label{q2ep}
\end{align}
From equation \eqref{eqca01}, we have
\begin{align}\label{zeq}
|\hat{\zeta}_{t}(k) |^2 &=
\f{ |\hat{h}_{t} (k)|^2 }{\Big| \f{\sqrt{3}}{2} +  Q^0_{t}(k)+ Q^1_{t}(k)  +  Q^2_{t}(k)  \Big|^2 } 
\leq 
\f{ |\hat{h}_{t} (k)|^2 }{\Big[ \f{\sqrt{3}}{2} +  \Re Q^0_{t}(k)+ \Re Q^1_{t}(k)  + \Re  Q^2_{t}(k)  \Big]^2 }.
\end{align}
We notice that 
\be\label{Q0po}
\Re Q^0_{t}(k) \geq 0 \;\;\;\;\;\; \text{for}\;\; \gamma>2.
\ee
Moreover, 
\[\begin{aligned}
 \Re Q^1_{0}(k)=& \f{1}{2\pi} \frac12 \left(\int_{-1}^1 \!\!d\nu\, \nu \!\! \int_\RE\!\!dx\, e^{-ikx} \f{e^{x}}{\cosh x \,(\cosh x + \nu/2)} +\int_{-1}^1 \!\!d\nu\, \nu \!\! \int_\RE\!\!dx\, e^{ikx} \f{e^{x}}{\cosh x \,(\cosh x + \nu/2)}  \right) \\ 
&  \text{(in the second integral we change $x \to-x$)}
 \\
 =& \f{1}{2\pi}  \int_{-1}^1 \!\!d\nu\, \nu \!\! \int_\RE\!\!dx\,  \f{e^{-ikx}}{\cosh x + \nu/2}   = -\frac12 S_{\off,1}(k) \geq -\frac{\sqrt3}{2} d
 \end{aligned}
\]
by \cite[Lemma 3.5]{CDFMT}, see also  \eqref{Qofd}. Similarly, 
\[
\Re  Q^2_{0}(k)= \f{\gamma}{2\pi} \int_{-1}^1 \!\!d\nu\, \nu \!\! \int_\RE\!\!dx\,  \f{e^{-ikx}}{\cosh x - \nu} = S_{\reg,1}(k) \geq 0 
\]
Hence, we have
\begin{align}\label{q1ne}
\Re Q^1_{t}(k) = \Re  Q^1_{0}(k) + \Re \big( Q^1_{t}(k) -  Q^1_{0}(k) \big) \geq -\f{\sqrt{3}}{2} \, d - \| Q^1_{t} - Q^1_{0}\|_{L^{\infty}}
\end{align}
where (see \eqref{q1ep})
\begin{align*}
 \| Q^1_{t} -  Q^1_{0}\|_{L^{\infty}} \leq \f{1}{2\pi} 
 \!\int\!\!dx\, \Big| \f{( e^{-t x} -1) \, e^x}{ \cosh x}  \int_{-1}^1 \!\! d \nu \, \f{\nu}{\cosh x + \nu/2} \Big| \leq  \f{1}{2\pi}\!
 \int\!\!dx\,  \f{| e^{-t x} -1| \, e^x}{  \cosh x  \, (\cosh x - 1/2)} .
\end{align*}
By dominated convergence theorem, we find  that  $\| Q^1_{t} -  Q^1_{0}\|_{L^{\infty}} \rightarrow 0$ for $t \rightarrow 0. \;$ 
Analogously, we have
\begin{align}\label{q2ne}
\Re Q^2_{t}(k) = \Re  Q^2_{0}(k) + \Re \big( Q^2_{t}(k) - Q^2_{0}(k) \big) \geq - \| Q^2_{t} - Q^2_{0}\|_{L^{\infty}}
\end{align}
where (see \eqref{q2ep})
\begin{align}\label{q2reg}
\| Q^2_{t} -Q^2_{0}\|_{L^{\infty}}
\leq \f{\gamma}{2\pi} \int\!\!dx\,  \f{ | e^{-t x} -1| \, e^x}{ \cosh x}  \int_{-1}^1 \!\! d \nu \, \f{1}{\cosh x - \nu}   .
\end{align}
The last integral in \eqref{q2reg} can be estimated as follows. For $|x|<1$ we have
\begin{align*}
  \int_{-1}^1 \!\! d \nu \, \f{1}{\cosh x - \nu}  &=
 \int_0^1 \!\! d\nu\, \f{1}{\cosh x + \nu} +  \int_0^1 \!\! d\nu\, \f{1}{\cosh x - \nu} \leq  \int_0^1 \!\! d\nu\, \f{1}{1 + \nu} + \int_0^1 \!\! d\nu\, \f{1}{1+\f{x^2}{2}  - \nu} \nonumber\\
 &= \log 2 + \log \big( 1 + \f{x^2}{2} \big) - \log \f{x^2}{2} \leq \log \f{1}{x^2} + c
\end{align*}
where we have used the inequality $\cosh x \geq 1 + \f{x^2}{2}$. For $|x|\geq 1$ we have
\begin{align*}
  \int_{-1}^1 \!\! d \nu \, \f{1}{\cosh x - \nu}  \leq    \int_{-1}^1 \!\! d\nu\, \f{1}{\cosh x -1} = \f{2}{\cosh x -1}.
  \end{align*}
Using the above estimates, we can apply the dominated convergence theorem in \eqref{q2reg} and we find that   $\| Q^2_{t} - Q_{\mathrm{reg},0}\|_{L^{\infty}} \rightarrow 0$ for $t \rightarrow 0.\;$
Taking into account  \eqref{Q0po}, \eqref{q1ne}, \eqref{q2ne} and considering  $t$ sufficiently small, from \eqref{zeq} we obtain
\begin{align*}
\int\!\!dx\, |\zeta_{t}(x)|^2 &= \!
\int\!\!dk\, |\hat{\zeta}_{t} (k)|^2 \leq c\! \int \!\!dk\, |\hat{h}_{t} (k)|^2
= c \!
\int\!\!dx\, |h_{t} (x)|^2 =c\!\int_0^{\infty} \!\!\!dp\, p^2 p^{1-2t} |\genf_{00}^\la(p) |^2 \nonumber\\
&\leq c\!\int_0^{\infty} \!\!\!dp\, p^2 (1+p)  |\genf_{00}^\la(p) |^2 
\end{align*} 
where  $\,c\,$ is a constant independent of $t$. Moreover, by \eqref{deze} we have $|\zeta_{t}(x)|^2 = |e^{(3-t)x} \, \xi_{00}(e^x)|^2 \rightarrow    |e^{3x} \, \xi_{00}(e^x)|^2 $ for $t \rightarrow 0$ a.e.. Then, applying Fatou's lemma, we find 
\[ 
\int_0^{\infty} \!\! dp\, p^5\,  |\hat{\xi}_{00} (p)|^2 = \int\!\! dx \, |e^{3x} \hat{\xi}_{00} (e^x)|^2  \leq \liminf_{t\to0} \int \!\! dx |\zeta_{t}(x)|^2 \leq  c\!\int_0^{\infty} \!\!\!dp\, p^2 (1+p)  |\genf_{00}^\la(p) |^2 ,
\]
this gives the bound \eqref{release1} and concludes the proof of the proposition.

\end{proof}

\vs

\begin{remark} \label{owari}
Notice also that $\Ga^\la : H^{s+1}\to H^s$ for $s\in(0,1/2)$.The claim is obvious for ${\Gamma}^{ \la}_{\text{diag}} $ and it was proved in \cite[Proposition 5]{miche8} for ${\Gamma}^{ \la}_{\text{off}} $.
Regarding  ${\Gamma}^{ (2)}_{\text{reg}}$, due to Hardy's inequality, it is sufficient to prove that it is a bounded operator from $\dot H^{s+1}$ to  $\dot H^{s}$ for $0<s<1/2$.
Let us consider 
\[
T(\p, \q)=  \frac{p^s}{(p^2 + q^2 + \p \cdot \q +\lambda) q^{s+1} }.
\]
If we put $f(\p)= p^{-3/2}$, it is straightforward to prove that for $0<s<1/2$ we have
\[
\int T(\p, \q) \,f(\q)\, d\q \leq c_1 \, f(\p) \qquad \qquad \int T(\p, \q) \,f(\p)\, d\p \leq c_2 \, f(\q) ,
\]
and then $T$ is the  integral kernel of an $L^2$-bounded operator by Schur's test and the claim on ${\Gamma}^{ (2)}_{\text{reg}}$ follows. For  ${\Gamma}^{ (2)}_{\text{reg}}$ the claim is trivial since $\theta \in C^1$.
\end{remark}

\vs

\vs\vs\vs

\end{document}